\renewcommand\footnotetextcopyrightpermission[1]{}
\newcommand{\eat}[1]{}
\newcommand{\class}[1]{{\ensuremath{\mathsf{#1}}}}
\newcommand{\cw}[1]{{\color{blue}\footnotesize[cw: #1]}}
\newcommand{\kartik}[1]{{\color{brown} \footnotesize[Kartik: #1] }}
\newcommand{\re}[1]{{\color{black} #1} }
\newcommand{\boldparagraph}[1]{\vspace{7pt}\noindent\textbf{#1}}
\newcommand{\share}[1]{\llbracket #1 \rrbracket^m}
\newcommand{\system}{IncShrink\xspace}
\newcommand{\sx}{\mathcal{S}_0\xspace}
\newcommand{\sy}{\mathcal{S}_1\xspace}
\newcommand{\user}{owner\xspace}
\newcommand{\lcache}{\texttt{cache}\xspace}
\newcommand{\enc}{\ensuremath{Enc}\xspace}
\newcommand{\pp}{\mathsf{pp}\xspace}
\newcommand{\rd}{\ensuremath{\class{rd}}\xspace}
\newcommand{\msb}{\ensuremath{\class{msb}}\xspace}
\newcommand{\sort}{\ensuremath{\class{ObliSort}}\xspace}
\newcommand{\query}{\ensuremath{\class{Query}}\xspace}
\newcommand{\negl}{\ensuremath{\class{negl}}\xspace}
\newcommand{\vi}{\ensuremath{\class{VIEW}}\xspace}
\newcommand{\pk}{\ensuremath{\class{pk}}\xspace}
\newcommand{\Mod}[1]{\ (\mathrm{mod}\ #1)}
\newcommand\myeq{\stackrel{\mathclap{\normalfont\mbox{c}}}{=}}
\newcommand{\writec}{\ensuremath{\class{write}}\xspace}
\newcommand{\lap}{\ensuremath{\class{Lap}}\xspace}
\newcommand{\upatt}{\ensuremath{\class{UpdtPatt}}\xspace}
\newcommand{\sync}{\ensuremath{\class{Shrink}}\xspace}
\newcommand{\trans}{\ensuremath{\class{Transform}}\xspace}
\newcommand{\gs}{\ensuremath{\class{share}}\xspace}
\newcommand{\rec}{\ensuremath{\class{recover}}\xspace}
\newcommand{\timer}{\ensuremath{\class{sDPTimer}}\xspace}
\newcommand{\ant}{\ensuremath{\class{sDPANT}}\xspace}
\newtheorem{theorem}{Theorem}
\newtheorem{definition}{Definition}
\newtheorem{example}{Example}[section]
  \providecommand\BibTeX{{%
    \normalfont B\kern-0.5em{\scshape i\kern-0.25em b}\kern-0.8em\TeX}}}
\begin{document}

\title{\system: Architecting Efficient Outsourced Databases using Incremental MPC and Differential Privacy}

\author{Chenghong Wang}
\affiliation{%
  \institution{Duke University}
  \city{}
  \country{}
}
\email{chwang@cs.duke.edu}

\author{Johes Bater}
\affiliation{%
  \institution{Duke University}
  \city{}
  \country{}
  }
\email{johes.bater@duke.edu}

\author{Kartik Nayak}
\affiliation{%
  \institution{Duke University}
  \city{}
  \country{}
}
\email{kartik@cs.duke.edu}

\author{Ashwin Machanavajjhala}
\affiliation{%
 \institution{Duke University}
 \city{}
  \country{}
 }
 \email{ashwin@cs.duke.edu}


\begin{abstract}
In this paper, we consider secure outsourced growing databases that support view-based query answering. These databases allow untrusted servers to privately maintain a materialized view, such that they can use only the materialized view to process query requests instead of accessing the original data from which the view was derived. To tackle this, we devise a novel view-based secure outsourced growing database framework, \system. The key features of this solution are:  (i) \system maintains the view using incremental MPC operators which eliminates the need for a trusted third party upfront, and (ii) to ensure high performance, \system guarantees that the leakage satisfies DP in the presence of updates. To the best of our knowledge, there are no existing systems that have these properties.  
We demonstrate \system's practical feasibility in terms of efficiency and accuracy with extensive empirical evaluations on real-world datasets and the TPC-ds benchmark. The evaluation results show that \system provides a 3-way trade-off in terms of privacy, accuracy and efficiency guarantees, and offers at least a 7,800$\times$ performance advantage over standard secure outsourced databases that do not support view-based query paradigm. 
\end{abstract}

\eat{
\begin{CCSXML}
<ccs2012>
 <concept>
  <concept_id>10010520.10010553.10010562</concept_id>
  <concept_desc>Computer systems organization~Embedded systems</concept_desc>
  <concept_significance>500</concept_significance>
 </concept>
 <concept>
  <concept_id>10010520.10010575.10010755</concept_id>
  <concept_desc>Computer systems organization~Redundancy</concept_desc>
  <concept_significance>300</concept_significance>
 </concept>
 <concept>
  <concept_id>10010520.10010553.10010554</concept_id>
  <concept_desc>Computer systems organization~Robotics</concept_desc>
  <concept_significance>100</concept_significance>
 </concept>
 <concept>
  <concept_id>10003033.10003083.10003095</concept_id>
  <concept_desc>Networks~Network reliability</concept_desc>
  <concept_significance>100</concept_significance>
 </concept>
</ccs2012>
\end{CCSXML}

\ccsdesc[500]{Computer systems organization~Embedded systems}
\ccsdesc[300]{Computer systems organization~Redundancy}
\ccsdesc{Computer systems organization~Robotics}
\ccsdesc[100]{Networks~Network reliability}
}


\setlength{\textfloatsep}{1pt plus 0.0pt minus 0.0pt}
\setlength{\floatsep}{1pt plus 0.0pt minus 0.0pt}
\setlength{\intextsep}{1pt plus 0.0pt minus 0.0pt}
\maketitle
\section{Introduction}
\label{sec:introduction}

There is a rapid trend of organizations moving towards outsourcing their data to cloud providers to take advantages of its cost-effectiveness, high availability, and ease of maintenance. Secure outsourced databases are designed to help organizations outsource their data to untrusted cloud servers while providing secure query functionalities without sacrificing data confidentiality and privacy. The main idea is to have the data owners upload the encrypted or secret-shared data to the outsourcing servers. Moreover, the servers are empowered with secure protocols which allow them to process queries over such securely provisioned data. A series of works such as CryptDB~\cite{popa2012cryptdb}, Cipherbase~\cite{arasu2013orthogonal}, EnclaveDB~\cite{priebe2018enclavedb}, and HardIDX~\cite{fuhry2017hardidx}  took the first step in the exploration of this scope by leveraging strong cryptographic primitives or secure hardware to accomplish the aforementioned design goals. Unfortunately, these solutions fail to provide strong security guarantees, as recent works on leakage-abuse attacks~\cite{cash2015leakage, zhang2016all, kellaris2016generic, blackstone2019revisiting} have found that they are vulnerable to a variety of reconstruction attacks that exploit side-channel leakages. For instance, an adversary can fully reconstruct the data distribution after observing the query processing transcripts.  

Although some recent efforts, such as~\cite{poddar2016arx, eskandarian2017oblidb, bater2018shrinkwrap, kamara2019computationally, patel2019mitigating, demertzis2020seal, kellaris2021accessing, naveed2014dynamic, xu2019hermetic, zheng2017opaque}, have shown potential countermeasures against leakage-abuse attacks, the majority of these works focus primarily on static databases. A more practical system often requires the support of updates to the outsourced data~\cite{agarwal2019encrypted, stefanov2014practical, ghareh2018new, wang2021dp}, which opens up new challenges. Wang et al.~\cite{wang2021dp} formulate a new type of leakage called {\it update pattern} that affects many existing outsourced database designs when the underlying data is dynamically growing. To mitigate such weakness, their solution dictates the data owners' update behavior to private record synchronization strategies, with which it perturbs the owners' logical update pattern. However, their solution only considers a na\"ive query answering mode such that each query is processed independently and evaluated directly over the entire outsourced data. This inevitably leads to a substantial amount of redundant computation by the servers. For example, consider the following use case where a courier company partners with a local retail store to help deliver products. The retail store has its sales data, and the courier company has its delivery records, both of which are considered to be the private property of each. Now assume the retail store owner wants to know how many of her products are delivered on time (i.e., within 48 hours of the courier accepting the package). With secure outsourced databases, the store owner and the courier company have the option to securely outsource their data and its corresponding computations to cloud servers. However, in a na\"ive query processing mode, the servers have to recompute the entire join relation between the outsourced data whenever a query is posted, which raises performance concerns. 



In this work, we take the next step towards designing a secure outsourced growing database (SOGDB) architecture with a more efficient query answering mechanism. Our proposed framework employs a novel secure query processing method in which the servers maintain a growing size materialized view corresponding to the owner's outsourced data. The upcoming queries will be properly answered using only the materialized view object. This brings in inherent advantages of view-based query answering~\cite{srivastava1996answering} paradigm, such as allowing the servers to cache important intermediate outputs, thus preventing duplicated computation. For instance, with our view-based SOGDB architecture, one can address the performance issues in the aforementioned use case by requiring the servers to maintain a materialized join table between the outsourced sales and delivery data. Moreover, whenever the underlying data changes, the materialized join table is updated accordingly. To this end, the servers only need to perform secure filtering over the materialized join table for processing queries, which avoids duplicated computation of join relations. 
\eat{
\begin{figure}[h]
\centering
\includegraphics[width=\linewidth]{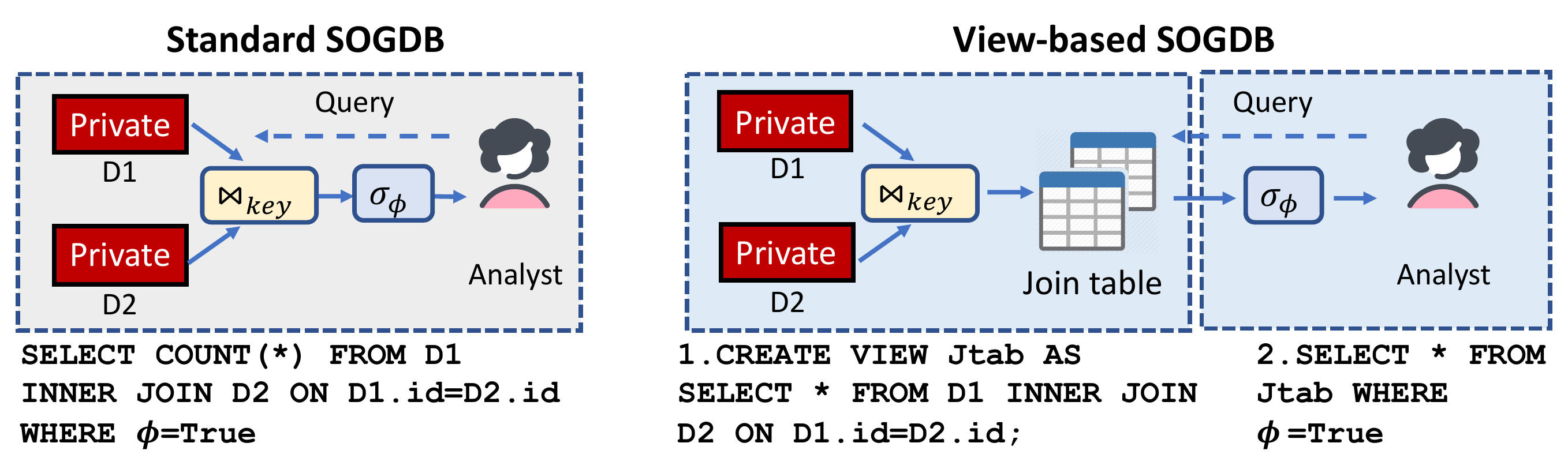}
\vspace{-8mm}
\caption{Standard vs. View-based SOGDB architecture}
\label{fig:ideacmp}
\end{figure}
}

There is no doubt one can benefit in many aspects from the view-based query answering paradigm. However, designing a practical view-based SOGDB is fraught with challenges. First, the servers that maintain the materialized view is considered to be potentially untrusted. Hence, we must explore the possibility of updating such materialized view without a trusted curator. A typical way is to leverage secure multi-party computation (MPC). However, na\"ively applying MPC for updating view instances over growing data would expose extra information leakage (i.e., update pattern~\cite{wang2021dp}). For example, consider the use case we mentioned before where the servers maintain a join table over the sales and the delivery data. Even with MPC, one can still obtain the true cardinality of newly inserted entries to the join table by looking at the output size from MPC. This would allow the adversary to learn the exact number of packages requested for delivery by the local retail store at any given time. Na\"ive methods, such as always padding newly generated view tuples to the maximum possible size or choosing never to update the materialized view, could prevent the aforementioned leakage. However, such an approach either introduces a large performance burden or does not provide us with the functionality of database updates. To combat this, we propose a novel view update methodology that leverages incremental MPC and differential privacy (DP), which hides the corresponding update leakage using DP while balancing between the efficiency and accuracy. 

This design pattern helps us to address the extra leakage, but also raises new challenges. The transformation from outsourced data to a view instance may have unbounded stability, i.e., an input record may contribute to the generation of multiple rows in the transformed output, which could cause unbounded privacy loss. To address this, we enforce that any individual data outsourced by the owner only contributes to the generation of a fixed number of view tuples. As the transformation after applying this constraint has bounded stability, thus we obtain a fixed privacy loss with respect to each insertion (logical update) to the owner's logical data.

\eat{Furthermore, in contrast to existing efforts, where they assume only trusted owners can execute private strategies. According to our design, the subject who maintains the materialized view is untrusted, hence we must ensure our view update strategy is available to untrusted parties without compromising our overall privacy guarantee. To address the third challenge, we compile the our private strategies as implementation-specified multi-party secure computing (MPC)~\cite{goldreich2009foundations} protocols and cache out-of-protocol parameters using secret-sharing techniques~\cite{beimel2011secret}. This allows untrustworthy parties to evaluate private view strategies while obtaining only the allowed information.}

Putting all these building blocks together, a novel view-based SOGDB framework, \system, falls into place. 
We summarize our contributions as follows:
\vspace{-1mm}
\begin{itemize}
    \item \system is a first of its kind, secure outsourced growing database framework that supports view-based query processing paradigm. Comparing with the standard SOGDB~\cite{wang2021dp} that employs na\"ive query answering setting, \system improves query efficiency, striking a balance between the guarantees of privacy, efficiency and accuracy, at the same time.
    \item \system integrates incremental MPC and DP to construct the view update functionality which (i) allows untrusted entities to securely build and maintain the materialized view instance (ii) helps to reduce the performance overhead of view maintenance, and (iii) provides a rigorous DP guarantee on the leakage revealed to the untrusted servers. 
    \item \system imposes constraints on the record contribution to view tuples which ensures the entire transformation from outsourced data to the view object over time has bounded stability. This further implies a  bounded privacy loss with respect to each individual logical update.
    \item We evaluate \system on use cases inspired by the {\it Chicago Police Data} and the TPC-ds benchmark. The evaluation results show at least 7800$\times$ and up to 1.5e+5$\times$ query efficiency improvement over standard SOGDB. Moreover, our evaluation shows that \system provides a 3-way trade-off between privacy, efficiency, and utility while allowing users to adjust the configuration to obtain their desired guarantees. 
    \vspace{-1mm}
\end{itemize}

\vspace{-2mm}
\section{Overview}
We design \system to meet three main goals:
\vspace{-1mm}
\begin{itemize}
    \item \textbf{View-based query answering.} \system enables view-based query answering for a class of specified queries over secure outsourced growing data.
    \item \textbf{Privacy against untrusted server.} Our framework allows untrusted servers to continuously update the materialized view while ensuring that the privacy of the owners' data is preserved against outsourcing servers.
    \item \textbf{Bounded privacy loss.} The framework guarantees an unlimited number of updates under a fixed privacy loss.
\end{itemize}
In this section, we first outline the key ideas that allow \system
to support the primary research goals in Section~\ref{sec:ideas}. Then we briefly review the framework components in Section~\ref{sec:compo}. We provide a running example in Section~\ref{sec:workflow} to illustrate the framework workflow as well as the overall architecture.

\vspace{-2mm}
\subsection{Key Ideas}\label{sec:ideas}
\vspace{-1mm}
\boldparagraph{KI-1. View-based query processing over secure outsourced growing data.} \system employs materialized view for answering pre-specified queries over secure outsourced growing data. The framework allows untrusted outsourcing servers to securely build and maintain a growing-size materialized view corresponding to the selected view definition. A typical materialized view can be either transformed solely based on the data provisioned by the owners, i.e., a join table over the outsourced data, or in combination with public information, i.e., a join table between the outsourced data and public relations. Queries posed to the servers are rewritten as queries over the defined view and answered using only the view object. Due to the existence of materialization, the outsourcing servers are exempted from performing redundant computations. 

\vspace{-1mm}
\boldparagraph{KI-2. Incremental MPC with DP update leakage.}  A key design goal of \system is to allow the untrusted servers to privately update the view instance while also ensuring the privacy of owners' logical data. As we mentioned before, compiling the view update functionality into the MPC protocol is not sufficient to ensure data privacy, as it still leaks the true cardinality of newly inserted view entries at each time, which is directly tied with the owners' record update patterns~\cite{wang2021dp}. Although na\"ive approaches such as exhaustive padding (EP) of MPC outputs or maintaining a one-time materialized view (OTM) could alleviate the aforementioned privacy risk. They are known to either incorporate a large amount of dummy data or provide poor query accuracy due to the lack of updates to the materialized view. This motivates our second key idea to design an incremental MPC protocol for view updates while balancing the privacy, accuracy, and efficiency guarantees.

\eat{
\begin{figure}[h]
\centering
\includegraphics[width=0.95\linewidth]{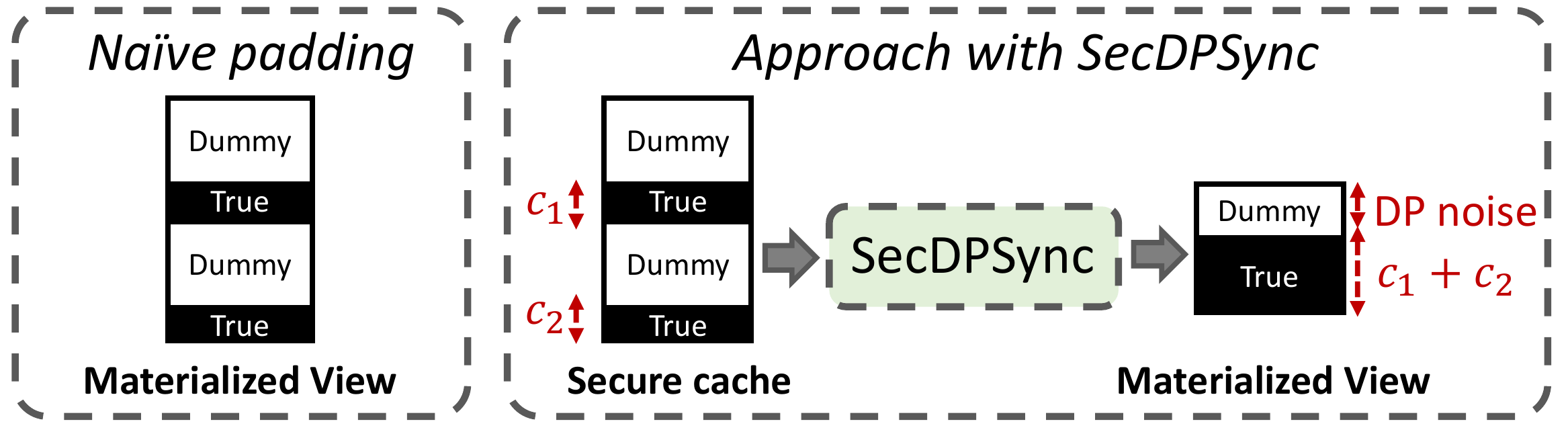}
\vspace{-3mm}
\caption{View update protocols.}
\label{fig:ideas}
\end{figure}
}





 In our design, we adopt an innovative "Transform-and-Shrink" paradigm, where the protocol is composed of two sub-protocols, $\trans$ and $\sync$, that coordinate with each other. $\trans$ generates corresponding view entries based on newly outsourced data, and places them to an exhaustively padded secure cache to avoid information leakage. $\sync$, runs independently, periodically synchronizes the cached data to the materialized view according to its internal states. To prevent the inclusion of a large amount of dummy data, $\sync$ shrinks the cached data into a DP-sized secure array such that a subset of the dummy data is removed whereas the true cardinality is still preserved. As a result, the resulting protocol ensures any entity's knowledge with respect to the view instance is bounded by differential privacy. 

\vspace{-1mm}
\boldparagraph{KI-3. Fixed privacy loss through constraints on record contributions.} When \system releases noisy cardinalities, it ensures $\epsilon$-DP with respect to the view instance. However, this does not imply $\epsilon$-DP to the logical data where the view is derived. This is because an individual data point in the logical database may contribute to generating multiple view entries. As a result, the framework either incurs an unbounded privacy loss or has to stop updating the materialized view after sufficiently many synchronizations. This leads us to our third key idea, where we bound the privacy loss by imposing constraints on the contributions made by each individual record to the generation of the view object. Each data point in the logical database is allocated with a contribution budget, which is consumed whenever the data is used to generate a new view entry. Once the contribution budget for a certain record is exhausted, \system retires this data and will no longer use it to generate view entries. With such techniques, \system is able to constantly update the materialized view with a bounded privacy loss. On the other hand, despite such constraints over record contributions, \system is still able to support a rich class of queries with relatively small errors (Section~\ref{sec:exp}).

\eat{
\boldparagraph{KI4. Compiling view update strategy as secure protocol}. A key design goal of \system is to allow the untrusted server to privately update the view instance. However, existing efforts to leverage private update strategies typically assume the strategy is executed by the trusted entity (i.e. honest owners)~\cite{wang2021dp}. Na\"ive application of such a design may lead to additional leakage and exposure to untrusted servers, such as the internal states during strategy execution. Consider the case where a strategy distorts a true cardinality with a DP noise transformed by internal randomness. If this randomness is obtainable by an untrusted server, then the server can easily recover the DP noise and de-mask the noisy cardinality. Our fourth key idea is to compile the view update strategy into a secure protocol so that it can be evaluated by untrusted entities. For intermediate results that must be cached outside the protocol, we utilize secret sharing techniques such that each participating owner holds one secret share. Thus as long as one honest owner exists, an attacker cannot recover the intermediate results alone.
}

\vspace{-2mm}
\subsection{Framework Components}\label{sec:compo}
\vspace{-2mm}
\boldparagraph{Underlying database.} \system does not create a new secure outsourced database but rather builds on top of it. Therefore, as one of the major components, we assume the existence of an underlying secure outsourced database scheme.
Typically, secure outsourced databases can be implemented according to different architectural settings, such as the models utilizing server-aided MPC~\cite{kamara2011outsourcing, bater2017smcql,bater2018shrinkwrap,mohassel2017secureml, tan2021cryptgpu}, homomorphic encryption~\cite{chowdhury2019crypt}, symmetric searchable encryption~\cite{bellare2007deterministic, curtmola2011searchable, stefanov2014practical, cash2014dynamic, kamara2012dynamic, ghareh2018new, amjad2019forward} or trusted hardware~\cite{zheng2017opaque, eskandarian2017oblidb, priebe2018enclavedb, vinayagamurthy2019stealthdb}. For the ease of demonstration, we focus exclusively on the outsourced databases built upon the server-aided MPC model, where a set of data owners secretly share their data to two untrusted but non-colluding servers $\sx$ and $\sy$. The two outsourcing servers are able to perform computations (i.e., query processing) over the secret shared data by jointly evaluating a 2-party secure computation protocol. More details about this setting and its corresponding security definitions are provided in Section~\ref{sec:model}. We stress that, although the protocols described in this paper assumes an underlying database architected under the server-aided MPC setup, these protocols can be adapted to other settings as well. 


\vspace{-1mm}
\boldparagraph{Materialized view.}
A materialized view is a subset of a secure outsourced database, which is typically generated from a query and stored as an independent object (i.e., an encrypted or secret-shared data structure). The servers can process queries over the view instance just as they would in a persistent secure database. Additionally, changes to the underlying data are reflected in the entries shown in subsequent invocations of the materialized view.

\vspace{-1mm}
\boldparagraph{View update protocol.} The view update protocol is an incremental MPC protocol jointly evaluated by the outsourcing servers. It allows untrusted servers to privately update the materialized view but also ensures bounded leakage. More design and implementation details about the view update protocol can be found in Section~\ref{sec:sync}.

\vspace{-1mm}
\boldparagraph{Secure outsourced cache.}\label{sec:scache}
The secure outsourced cache is a secure array (i.e., memory blocks that are encrypted, secret-shared, or stored inside trusted hardware) denoted as $\boldsymbol{\sigma}[1,2,3,\ldots]$, which is used to temporarily store newly added view entries that will later be synchronized to the materialized view. In this work, as we focus on the server-aided MPC model, thus $\boldsymbol{\sigma}$ is considered as a secret shared memory block across two non-colluding servers. Each $\boldsymbol{\sigma}[i]$ represents a (secret-shared) view entry or a dummy tuple. Details on how our view update protocol interacts with the secure cache (i.e., read, write, and flush cache) are provided in Section~\ref{sec:sync}.
\subsection{\system Workflow}\label{sec:workflow}
We now briefly review the framework architecture and illustrate its workflow with a running example (as shown in Figure~\ref{fig:flow}), where an analyst is interested in a join query over the outsourced data from two data owners. 
\vspace{2mm}
\begin{figure}[h]
\centering
\includegraphics[width=0.95\linewidth]{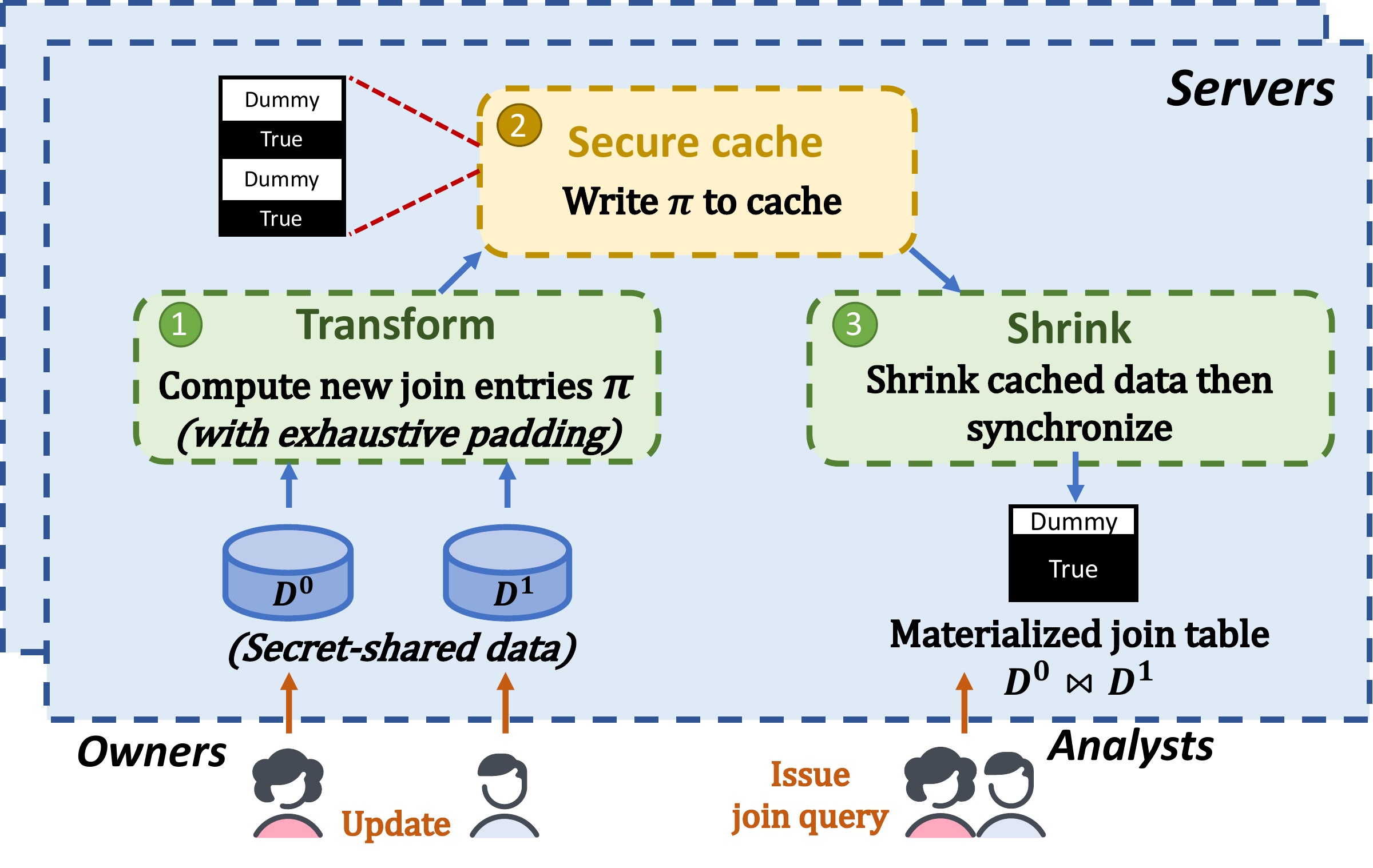}
\vspace{-3mm}
\caption{Framework workflow.}
\label{fig:flow}
\end{figure}

Initially, the analyst obtains authentications from the owners and registers the query with the outsourcing servers. The servers decide the view definition as a join table, set up the initial materialization structure, the secure cache, and then compile the corresponding secure protocols $\trans$ and $\sync$ for maintaining the view instance. Since then, the owners periodically update the outsourced data, securely provisioning the newly received data since the last update (through the update functionality defined by the underlying database). For demonstration purposes, we assume that the owners submit a fixed-size data block (possibly padded with dummy records) at predetermined intervals. We discuss potential extensions to support other update behaviors in a later section.
Whenever owners submit new data, the servers invoke $\trans$ to securely compute new joins. The join outputs will be padded to the maximum size then placed into a secure cache. Next, $\sync$ is executed independently, where it periodically synchronizes data from the secure cache to the materialized join table with DP resized cardinalities. Note that the DP noise used to distort the true cardinality can be either positive or negative. If the noise is negative, some of the real tuples in the cache are not fetched by $\sync$. We refer to the real view tuples left in the secure cache as the ``deferred data''. On the contrary, if the noise is positive, some deferred data or additional dummy tuples will be included to synchronize with the view.
On the other hand, the analyst can issue query requests to the servers, which process the issued queries over the materialized join table (through the query functionality defined by the underlying database) and return the resulting outputs back to the analyst.

\eat{
\subsubsection{Secure outsourced cache}\label{sec:scache}
Secure outsourced cache is a secure array (i.e. memory blocks that are encrypted, secret-shared or stored inside trusted execution environments), denoted as $\lcache[1,2,3...]$, which is used to temporarily store newly added view entries that will later be synchronized to the materialized view. Each $\lcache[i]$ represents a memory block that stores one view entry or a dummy tuple. In general, the secure cache allows append-only write operations and ensures that the real data is always fetched before the dummy tuples when a read operation is performed. We summarize it's basic operations as follows:

\boldparagraph{Write cache} ($\lcache.\writec(\pi)$). This operation appends another secure array $\pi$ to the end of the current secure outsourced cache, denoted as $\lcache \mathbin \Vert \pi \gets \sigma.\writec(\pi)$.

\boldparagraph{Read cache} ($\lcache.\texttt{read}(\texttt{sz})$). Given a read size $\texttt{sz}$, if $\texttt{sz} \leq |\sigma|$, the operation first obliviously sorts the entire cache, moving all true data to the head and all dummy tuples to the tail, then pops out the first $\texttt{sz}$ records from the head of the sorted cache. Otherwise, the operation pops all records in $\sigma$ along with a number of dummy records equal to $\texttt{sz} - |\lcache|$. 
    
\boldparagraph{Flush cache} ($\lcache.\texttt{flush}(\texttt{sz})$). As the read operation always gives priority to fetching the real data, this may lead to the accumulation of dummy tuples in the cache. Cache flush operation provides the functionality to recycle accumulated dummy data and can effectively prevent excessive cache growth. Given a flush size $\texttt{sz}$, the cache first performs a cache read operation $\lcache.\texttt{read}(\texttt{sz})$ then empties the cache (freeing the memory for all cached data).
    }

\section{Preliminaries}
\label{sec:background}
\eat{
\vspace{-2mm}
\boldparagraph{Differential privacy (DP).} Differential privacy is currently the de-facto standard for achieving privacy in data analysis, where it defines a stability property for data processing mechanisms. 
\begin{definition}[$\epsilon$-differential privacy~\cite{dwork2014algorithmic}]\label{def:dp} $\mathcal{M}$ satisfies $\epsilon$-DP if for any pair of neighboring databases $D$ and $D'$, such that $D$ and $D'$ differ by addition or removal of one tuple, and for all $\forall O \subset \mathcal{O}$, where $\mathcal{O}$ denotes all possible outputs, the following holds
$$\textup{Pr}[\mathcal{M}(D)\in O] \leq e^{\epsilon}\textup{Pr}[\mathcal{M}(D')\in O]$$
\end{definition}
Note that if a single tuple in database $D$ corresponds one individual data subject (user), then it defines user-level privacy. If the tuples in $D$ are secret events, where multiple events can belong to one data subject, then Definition~\ref{def:dp} defines $\epsilon$-event level DP.
}
\vspace{-2mm}
\boldparagraph{Multi-party secure computation (MPC).} MPC utilizes cryptographic primitives to enable a set of participants $P_1, P_2, ..., P_n$ to jointly compute a function $f$ over private input data $x_i$ supplied by each party $P_i$, without using a trusted third party. The theory of MPC offers strong security guarantee similar as what can be achieved with a trusted third party~\cite{goldreich2009foundations}, i.e., absolutely no information leak to each participant $P_i$ beyond the desired output of $f(x_1, x_2, ..., x_n)$ and their  input $x_i$. In this work we focus mainly on the 2-party secure computing setting.




\boldparagraph{$(n,t)$-secret sharing.} Given ring $\mathbb{Z}_{m}$, and $m=2^{\ell}$. A $(n,t)$-secret sharing ($t$-out-of-$n$) over $\mathbb{Z}_{m}$ shares a secret value $x\in\mathbb{Z}_{m}$ with $n$ parties such that the sharing satisfies the following property:
\begin{itemize}
    \item {\bf Availability} Any $t'$ of the $n$ parties such that $t'\geq t$ can recover the secret value $x$.
    \item {\bf Confidentiality} Any $t'$ of the $n$ parties such that $t'< t$ have no information of $x$.
\end{itemize}
For any value $x\in\mathbb{Z}_{m}$, we denote it's secret sharing as $\share{x}\gets(x_1, x_2, ..., x_n)$. There are many existing efforts to implement such secret sharing design~\cite{beimel2011secret}, we focus on XOR-based $(2,2)$-secret sharing over $\mathbb{Z}_{2^{32}}$ with the following specifications.
\begin{itemize}
    \item {\bf Generate shares} $\gs(x)$: Given $x\in\mathbb{Z}_{m}$, sample random values \smash{$x_1\xleftarrow[]{\text{rd}}\mathbb{Z}_{m}$}, compute $x_2 \gets x \oplus x_1$, and return secret shares $\share{x}\gets(x_1, x_2)$.
    \item {\bf Recover shares} $\rec({\share{x}})$: Given secret shares $\share{x}\gets (x_1, x_2)$, compute $x\gets x_1\oplus x_2$, then return $x$.
\end{itemize}


\section{Privacy Model}\label{sec:model}
In general, we consider our framework supports dynamic updating of the materialized view while hiding the corresponding update leakage. More specifically, we consider the participants involved in the outsourcing phase are a set of data owners and two servers $\sx$, and $\sy$. We assume there exists a semi-honest {\it probabilistic polynomial time (p.p.t.) }  adversary $\mathcal{A}$ who can corrupt any subset of the owners and at most one of the two servers. Previous work~\cite{mohassel2017secureml} refers to this type of adversary as the {\it admissible adversary}, which captures the property of two non-colluding servers, i.e., if one is compromised
by the adversary, the other one behaves honestly.
Our privacy definition requires that the knowledge $\mathcal{A}$ can obtain about any single data of the remaining honest owners, by observing the view updates, is bounded by differential privacy. In this section, we first provide key terminologies and notations (Section~\ref{sec:notations}) then formalize our privacy model (Section~\ref{sec:privacy}) using {\it simulation-based computational differential privacy (SIM-CDP)}~\cite{mironov2009computational}.

\subsection{Notations}\label{sec:notations}
\vspace{-1mm}
\boldparagraph{Growing database.} A growing database is a dynamic relational dataset with insertion only updates, thus we define it as a collection of (logical) 
updates, $\mathcal{D}=\{u_i\}_{i\geq 0}$, where $u_i$ is a time stamped data. We write $\mathcal{D}=\{\mathcal{D}_t\}_{t\geq 0}$, such that $\mathcal{D}_t$ denotes the database instance of the growing database $\mathcal{D}$ at time $t$ and $\forall ~\mathcal{D}_t~, \mathcal{D}_t \subseteq \mathcal{D}$.

\vspace{-1mm}
\boldparagraph{Outsourced data.} The outsourced data is denoted as $\mathcal{DS}$, which stores the secret-shared entries corresponding to the records in the logical database, with the possibility to include additional dummy data. Similarly, we write $\mathcal{DS}=\{\mathcal{DS}_{t}\}_{t\geq0}$, where $\mathcal{DS}_{t}\subseteq\mathcal{DS}$ is the outsourced data at time $t$.

\vspace{-1mm}
\boldparagraph{Materialized view.} We use $\mathcal{V}$ to denote the materialized view instance which is a collection of secret-shared tuples. Each tuple in $\mathcal{V}$ is transformed from the outsourced data $\mathcal{DS}$ or in combination with public information. We define $\mathcal{V}=\{\mathcal{V}_t\}_{t\geq 0}$, where $\mathcal{V}_t$ denotes the materialized view structure at time $t$, and $\Delta\mathcal{V}_t$ denotes the changes between (newly generated view entries) $\mathcal{V}_t$ and $\mathcal{V}_{t-1}$

\vspace{-1mm}
\boldparagraph{Query.} Given a growing database $\mathcal{D}$ and a corresponding materialized view $\mathcal{V}$, we define the logical query posted at time $t$ as $q_{t}(\mathcal{D}_t)$ and the re-written view-based query as $\tilde{q}_{t}(\mathcal{V}_t)$. We refer the L1 norm of the difference between $\tilde{q}_{t}(\mathcal{V}_t)$ and $q_{t}(\mathcal{D}_t)$ as the {\it L1 query error}, denoted as $L_{q_t}\gets||\tilde{q}_t(\mathcal{V}_t) - {q}_t(\mathcal{D}_t)||_1$, which measures the difference between the server responded outputs and their corresponding logical results. Additionally, we call the elapsed time for processing $\tilde{q}_t(\mathcal{V}_t)$ as the {\it query execution time (QET)} of $q_t$. In this work, we use L1 error and QET as the main metrics to evaluate the accuracy and efficiency of our framework, respectively.

\eat{n what follows, we provide two derived evaluation metrics:
\begin{itemize}
    \item {\bf L1 query error}. L1 query error refers to the L1 norm of the difference between the server responded query result and its logical result, denoted as $||\tilde{q}_t(\mathcal{V}_t) - {q}_t(\mathcal{D}_t)||_1$. L1 error is an important metric for evaluating accuracy guarantee, where it measures the magnitude of the difference between the the server outputs and their corresponding logical values. 

    
    \item {\bf Query execution time (QET)}. We use query execution time (QET) as the main efficiency metric, which is the elapse in response of a single query. Note that the QET may be directly related to the size of the materialized view, i.e., the total number of encrypted rows in $\mathcal{V}_t$. Thus, the size of the view can be used as another efficiency indicator.
\end{itemize}}

\eat{
\subsection{Revisiting Update Pattern}\label{sec:updt}
In general, the update pattern is formulated as the transcript of entire record insertion history, i.e., \emph{when} each insertion happens and \emph{how many} records been inserted each time~\cite{wang2021dp}. However, in our framework, changes applied to the underlying data are reflected on the cache and the encrypted materialization as well, which requires us to revise the formulation of the update pattern.

\begin{definition}[Update Pattern (Revised)]\label{def:updt}
Given an instantiated system $\Sigma$ of \system framework, the update pattern of $\Sigma$ is the function family of $\upatt^{\Sigma} = \{\upatt_t^{\Sigma}\}_{t\in\mathbb{N}^{+}}$, with:
$$\upatt_{t}^{\Sigma} = \left(t, \{|\gamma^i_{t}|\}_{i\geq0}, |\pi_t|, |v_t|\right)$$
\end{definition}
where $\gamma^i_t$ is the batch of data submitted by $i^{th}$ owner at time $t$, and
$\pi_t$ and $v_t$ denotes the newly cached data and the data inserted to the encrypted materialization at each time step $t$, respectively. In addition, we use symbol $|\cdot|$ to denote the cardinality (volume) of the corresponding data structure. For better understanding, we provide an example for Definition~\ref{def:updt}.

\begin{example}Two owners outsource their data to a server that maintains a join table over their data. The owners submit 5 new records each time and the server caches the new joins with na\"ive padding. In addition, the server synchronizes the entire cache with the join table every 2 time steps. Then the update pattern is: 
$$\left(1, \{5, 5\}, 5^2, 0\right), \left(2, \{5, 5\}, 10^2 - 5^2, 10^2\right), \left(3, \{5, 5\}, 15^2-10^2, 0\right),... $$
\end{example}
}

\subsection{Privacy Definition}\label{sec:privacy}
Based on the formalization of update pattern in~\cite{wang2021dp}, we first provide a more generalized definition of update pattern that captures updates to view instances.
\begin{definition}[Update pattern]\label{def:updt} Given a growing database $\mathcal{D}$, the update pattern for outsourcing $\mathcal{D}$ is the function family of $\upatt(\mathcal{D}) = \{\upatt_t\left(\mathcal{D}\right)\}_{t\in\mathbb{N}^{+}}$, with:
$$\upatt_{t}\left(\mathcal{D}\right) = \left(t, |T_t(\mathcal{D}|)\right)$$
where $T_t$ is a transformation function that outputs a set of tuples (i.e., new view entries) been outsourced to the server at time $t$. 
\end{definition}
In general, Definition~\ref{def:updt} defines the transcript of entire update history for outsourcing a  growing database $\mathcal{D}$. It may include information about the volume of the outsourced data and their corresponding insertion times. Moreover, if $T_t(\mathcal{D})\gets \mathcal{D}_t - \mathcal{D}_{t-1}$, then this simply indicates the record insertion pattern~\cite{wang2021dp}.
\begin{definition}[Neighboring growing databases]\label{def:ngdb} Given a pair of growing databases $\mathcal{D}$ and $\mathcal{D'}$, such that there exists some parameter $\tau \geq 0$, the following holds: (i) $\forall~ t \leq \tau, \mathcal{D}_t = \mathcal{D}'_t$ (ii) $\forall~ t > \tau$, $\mathcal{D}_t$ and $\mathcal{D}'_t$ differ by the addition or removal of a single logical update.
\end{definition}
\begin{definition}[DP mechanism over growing data]\label{def:dp-jpat} Let $F$ to be a mechanism applied over a growing database. $F$ is said to satisfy $\epsilon$-DP if for any neighboring growing databases $\mathcal{D}$ and $\mathcal{D}'$, and any $O \in \mathcal{O}$, where $\mathcal{O}$ is the universe of all possible outputs, it satisfies:
\begin{equation}
    \begin{split}
         \textup{Pr}\left[F(\mathcal{D}) \in O\right] \leq e^{\epsilon} \textup{Pr}\left[F(\mathcal{D}')\in O \right]
    \end{split}
\end{equation} 
\end{definition}
Definition~\ref{def:dp-jpat} ensures that, by observing the output of $F$, the information revealed by any single logical update posted by the owner is differentially private. Moreover, if the logical update corresponds to different entity's (owner’s) data, the same holds for $F$ over each owner's logical database (privacy is guaranteed for each entity). \re{Additionally, in this work, we assume each logical update $u_i\in\mathcal{D}$ as a secret event, therefore such mechanism $F$ achieves $\epsilon$-event level DP~\cite{dwork2010differential}. However, due to the group-privacy property~\cite{kifer2011no, xiao2015protecting, liu2016dependence} of DP, one can achieve privacy for properties across multiple updates as well as at the user level as long as the property depends on a finite number of updates. An overall $\epsilon$-user level DP can be achieved by setting the privacy parameter in Definition~\ref{def:dp-jpat} to $\frac{\epsilon}{\ell}$, where $\ell$ is the maximum number of tuples in the
growing database owned by a single user. In practice, if the number of tuples owned by each user is unknown, a pessimistic large value can be chosen as $k$. Moreover, recent works~\cite{cao2017quantifying, song2017pufferfish}  have provided methods for deriving an $\epsilon < \epsilon' \leq \ell\times\epsilon$, such that $\epsilon$-event DP algorithms provide an $\epsilon'$ bound on privacy loss when data are correlated. For certain correlations, $\epsilon'$ can be even close to $\epsilon$ and much smaller than $\ell\times\epsilon$. In general, we emphasize that for the remainder of the paper, we focus exclusively on developing algorithms that ensure event-level privacy with parameter $\epsilon$, while simultaneously satisfying all the aforementioned privacy guarantees, with possibly a different privacy parameter. 

}

\begin{definition}[SIM-CDP view update protocol]\label{def:protocol-sogdb}
A view update protocol $\Pi$ is said to satisfy $\epsilon$-SIM-CDP if there exists a p.p.t.\ simulator $\mathcal{S}$ with only access to a set of public parameters $\pp$ and the output of a mechanism $F$ that satisfies Definition~\ref{def:dp-jpat}. Then for any growing database instance $\mathcal{D}$, and any p.p.t. adversary $\mathcal{A}$, the adversary's advantage satisfies:
\begin{equation}
    \begin{split}
        & \textup{Pr}\left[\mathcal{A}\left(\vi^{\Pi}(\mathcal{D}, \pp)=1\right)\right] \\
        & \leq  \textup{Pr}\left[\mathcal{A}\left(\vi^{\mathcal{S}}(F(\mathcal{D}), \pp)\right)=1\right] + \negl(\kappa)
    \end{split}
\end{equation}
where $\vi^{\Pi}$, and $\vi^{\mathcal{S}}$ denotes the adversary's view against the protocol execution and the simulator's outputs, respectively. And $\negl(\kappa)$ is a negligible function related to a security parameter $\kappa$.
\end{definition}
Definition~\ref{def:protocol-sogdb} defines the secure protocol for maintaining the materialized view such that as long as there exists at least one honest owner, the privacy of her data's individual records is guaranteed. 
In addition, the remaining entities' knowledge about honest owner's data is preserved by differential privacy. In other words, any {\it p.p.t.} adversary's knowledge of such protocol $\Pi$ is restricted to the outputs of an $\epsilon$-DP mechanism $F$. We refer to the mechanism $F$ as the leakage profile of protocol $\Pi$, and a function related to the update pattern, i.e., $F(\mathcal{D})=f(\upatt(\mathcal{D}))$. In the rest of this paper, we focus mainly on developing view update protocols that satisfy this definition. Moreover, Definition~\ref{def:protocol-sogdb} is derived from the SIM-CDP definition which is formulated under the {\it Universal Composition (UC) } framework~\cite{canetti2001universally}. Thus Definition~\ref{def:protocol-sogdb} leads to a composability property such that if other protocols (i.e., $\query$ protocol) defined by the underlying databases also satisfy UC security, then privacy guarantee holds under the composed system.

\eat{
\begin{definition}[$(\omega, q)$-BC transformation] Given a transformation $T = \{T_i\}_{i\geq0}$, $T$ is said to be a $(\omega, q)$-BC transformation if each $T_i$ is a $q$-BC transformation, and for any database $\mathcal{D}_1$, $\mathcal{D}_2\in \mathcal{D}$:
\begin{equation}
    \begin{split}
    || T(\mathcal{D}_1) - T(\mathcal{D}_2)|| \leq \omega q || \mathcal{D}_1 - \mathcal{D}_2|| 
    \end{split}
\end{equation}
\end{definition}

To distinguish, we refer $q$ as one-time contribution and $\omega$ as the contribution over time. According to Theorem~\ref{tm:bcompose}, applying $\frac{\epsilon}{q\omega}$-DP mechanism over $(\omega, q)$-BC transformed data implies $\epsilon$-event level DP for the original data.
}

\section{Protocol Design}\label{sec:sync}
In general, our view update protocol is implemented as an incremental MPC across two non-colluding outsourcing servers. Specifically, this incremental MPC is composed of two sub-protocols, $\trans$, and $\sync$ that operate independently but collaborate with each other. The reason for having this design pattern is that we can decouple the view transformation functionality and its update behavior, which provides flexibility in the choice of different view update strategies. For example, one may want to update the materialized view at a fixed interval or update it when there are enough new view entries. Each time when one needs to switch between these two strategies, she only needs to recompile the $\sync$ protocol without making any changes to the $\trans$ protocol. In this section, we discuss the implementation details of these two protocols, in Section~\ref{sec:pre}and~\ref{sec:dp-st}, respectively. Due to space concerns, for theorems in this section, we defer the proofs to the appendix.

\eat{
\subsection{Na\"ive Strategies}\label{sec:na}
We start with three na\"ive methods illustrated as follows:
\begin{enumerate}
    \item \textit{Synchronize Immediately (SI).} The SUR policy requires that update occurs whenever one of the data owners receives a new record. Once an update occurs, the two data owners jointly evaluate a secure join protocol to compute the newly added join entries to be appended to the secure join table.
    
    \item \textit{Never Update (NU).} The NU strategy allows the \user to initiate a static join table at the initial stage $t=0$, where the join table is computed by $D_{0}^0 \bowtie D_{0}^1$. 
    
    \item \textit{Synchronize every time (SET).} The SET method requires the secure join table to be updated at each time unit, independent of whether there are new join entries.  More specifically, for any time $t$, the owners will jointly compute the new join entries via secure protocol, paddle it (with dummy join tuples) to the maximum possible size and then append the paddled outcome to the outsourced join structure.
    \end{enumerate}
    }
\subsection{$\trans$ Protocol}\label{sec:pre}
Whenever owners submit new data, $\trans$ is invoked to convert the newly outsourced data to its corresponding view instance based on a predefined view definition. Although, one could simply reuse the query capability of the underlying database to generate the corresponding view tuples. There are certain challenges in order to achieve our targeted design objectives. Here are two examples: (i) The view transformation might have unbounded stability which further leads to an unbounded privacy loss; \re{(ii) While existing work ~\cite{bater2018shrinkwrap} implements a technique similar to first padding the output and then reducing its size, they compile the functionality as a one-time MPC protocol, which makes it difficult for them to handle dynamic data. Our design of constructing ``Transform'' and ``Shrink''  as independent MPC protocols overcome this problem and introduce flexibility in the choice of view update policy, but it raises a new challenge in that the two independently operating protocols still need to collaborate with each other.  By default, the $\sync$ protocol is unaware of how much data can be removed from the secure cache, therefore $\trans$ must privately inform $\sync$ how to eliminate the dummy records while ensuring DP.
To address these challenges, we employ the following techniques:
\begin{itemize}
    \item We adopt a truncated view transformation functionality to ensure that each outsourced data contributes to a bounded number of rows in the transformed view instance.
    \item We track important parameters in the $\trans$ phase, secretly share them inside the $\trans$ protocol and store the corresponding shares onto each server. The parameters are later passed to $\sync$ protocol as auxiliary input and used to generate the DP resized cardinalities. 
\end{itemize}
}
Algorithm~\ref{algo:precom} provides an overview of $\trans$ protocol.  At the very outset, $\trans$ is tasked to: (i) convert the newly submitted data into its corresponding view entry from time to time; (ii) cache the new view entries to an exhaustively padded secure array; and (iii) maintain a cardinality counter of how many new view entries have been cached since the last view update. This counter is then privately passed (through secret shares) to the $\sync$ protocol. 
\begin{algorithm}[]
\caption{$\trans$ protocol}
\begin{algorithmic}[1]
\Statex
\textbf{Input}: $\omega$ (truncation bound); $\mathcal{DS}_t$; $\boldsymbol{\sigma}$.
\If{t == 0}
\State \smash{$\share{c}=({x\xleftarrow[]{\text{rd}}\mathbb{Z}_{m}, x\oplus 0})$, $\share{c}\xRightarrow[]{} (\sx, \sy)$}
\EndIf
    \State $\Delta \mathcal{V} \gets \mathsf{trans\_truncate}(\mathcal{DS}_t, \omega)$
    \State \smash{$\share{c}\xLeftarrow[]{} (\sx, \sy)$, ${c} \gets \rec(\share{c})$}
    \State \smash{$c \gets c + \sum_{v_i \in \Delta\mathcal{V} \wedge v_i \neq \mathsf{dummy}} \mathbbm{1}$}
    \State $\share{c} \gets \gs(c)$, $\share{c}\xRightarrow[]{} (\sx, \sy)$
    \State \smash{$\boldsymbol{\sigma} \gets \boldsymbol{\sigma} || \Delta\mathcal{V}$}
\end{algorithmic}
\label{algo:precom}
\end{algorithm} 

At the very beginning, $\trans$ initializes the cardinality counter $c=0$ and secret shares it to both servers (Alg~\ref{algo:precom}:1-2). $\trans$ uses $\mathsf{trans\_truncate}$ (Alg~\ref{algo:precom}:3) operator to obliviously compute the new view tuples and truncate the contribution of each record at the same time. More specifically, we assume the output of this operator is stored in an exhaustively padded secure array $\Delta\mathcal{V}$, where each $\Delta\mathcal{V}[i]$ is a transformed tuple with an extra {\it isView} bit to indicate whether the corresponding tuple is a view entry ({\it isView=1}) or a dummy tuple ({\it isView=0}). Additionally, the operator ensures that
\begin{equation}\label{eq:trunc}
    \forall ds_i \in \mathcal{DS}_t, || g^{\omega}\left(\mathcal{DS}_t\right) - g^{\omega}\left(\mathcal{DS}_t - \{ds_i\}\right)||\leq\omega
\end{equation}
where $g^{\omega}(\cdot)\gets\mathsf{truncate}\left(\mathsf{new\_entry}(\cdot), \omega \right)$. This indicates any input data only affects at most $\omega$ rows in the truncated $\Delta\mathcal{V}$. Once the truncated outputs are available, $\trans$ updates and re-shares the cardinality counter $c$, then appends the truncated outputs to the secure cache $\boldsymbol{\sigma}$ (Alg~\ref{algo:precom}:5-7). 

\vspace{-1mm}
\boldparagraph{$q$-stable transformation.} We now provide the following lemmas with respect to the $q$-stable transformation.
\begin{lemma}\textup{(\textsc{$q$-Stable Transformation}~\label{def:bct} \textup{\cite[][Definition~2]{mcsherry2009privacy}})}. Let $T:\mathbb{D}\rightarrow\mathbb{D}$ to be a transformation, we say $T$ is $q$-stable, if for any two databases $\mathcal{D}_1$, $\mathcal{D}_2\in \mathbb{D}$, it satisfies
$|| T(\mathcal{D}_1) - T(\mathcal{D}_2)|| \leq q || \mathcal{D}_1 - \mathcal{D}_2|| $
\end{lemma}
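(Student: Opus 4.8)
The statement labeled as a lemma is, as worded, a \emph{definition}: it assigns the name ``$q$-stable'' to any transformation $T:\mathbb{D}\rightarrow\mathbb{D}$ whose output distance is controlled by the input distance with multiplicative constant $q$, and it is recalled essentially verbatim from \cite{mcsherry2009privacy} (Definition~2). A definition carries no truth value, so there is strictly no proof obligation to discharge. The plan is therefore to treat it as a terminological declaration that fixes notation for later use, not as a claim to be verified. I would flag explicitly that any attempt to ``prove'' it would amount to proving a \emph{different} statement — for instance, that a particular concrete transformation satisfies the condition, or that the condition composes — and that none of those is what the sentence asserts.

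If a sanity check is nonetheless expected, the only content worth confirming is that the inequality is well-posed. First I would note that $||\cdot||$ here denotes the symmetric-difference metric on database instances (equivalently the $L_1$ distance between their multiset/histogram encodings), the same norm used elsewhere in the paper for neighboring growing databases in Definition~\ref{def:ngdb} and for the DP mechanism in Definition~\ref{def:dp-jpat}. Under this metric the right-hand side $q\,||\mathcal{D}_1-\mathcal{D}_2||$ is a nonnegative real, the quantifier is universal over all pairs $\mathcal{D}_1,\mathcal{D}_2\in\mathbb{D}$, and $q$ is a fixed parameter of $T$, so the predicate ``$T$ is $q$-stable'' is unambiguous. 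One may additionally record the immediate structural facts implicit in the definition — the identity map is $1$-stable, $q$-stability implies $q'$-stability for every $q'\ge q$, and the composition of a $q_1$-stable with a $q_2$-stable transformation is $q_1 q_2$-stable — but these are downstream corollaries, not the content of the definition itself.

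The ``main obstacle'' is thus not technical but a matter of correctly reading the statement: the crucial step is to recognize that no inequality is being asserted to hold for a particular $T$, and hence to resist the temptation to substitute the concrete truncated transformation $g^{\omega}$ of Equation~\eqref{eq:trunc} and argue that it is $\omega$-stable. That argument belongs to a separate, later result; folding it into the present definition would again mean proving something other than what is stated.
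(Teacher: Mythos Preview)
Your reading is correct and matches the paper's treatment: this ``lemma'' is in fact just the definition of $q$-stability recalled from \cite{mcsherry2009privacy}, and the paper supplies no proof for it (note even the label \texttt{def:bct}). Your additional remarks about well-posedness and the downstream structural facts are fine but indeed optional, since the paper itself offers none of them and simply invokes the definition to feed into Lemma~\ref{def:stable-privacy}.
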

\begin{lemma}\label{def:stable-privacy} Given $T$ is $q$-stable $T$, and an $\epsilon$-DP mechanism $\mathcal{M}$. The composite computation $\mathcal{M} \circ T$ implies $q\epsilon$-DP\textup{~\cite[][Theorem 2]{mcsherry2009privacy}}.
\end{lemma}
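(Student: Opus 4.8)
The plan is to derive the statement from the group-privacy property of differential privacy, using $q$-stability to bound the effective group size. First I would fix an arbitrary pair of neighboring databases $\mathcal{D}_1, \mathcal{D}_2$, i.e.\ databases differing by the addition or removal of a single tuple so that $||\mathcal{D}_1 - \mathcal{D}_2|| = 1$, together with an arbitrary output set $O \in \mathcal{O}$. The goal is to establish
\[
\Pr[(\mathcal{M} \circ T)(\mathcal{D}_1) \in O] \leq e^{q\epsilon}\,\Pr[(\mathcal{M} \circ T)(\mathcal{D}_2) \in O],
\]
which is exactly the $q\epsilon$-DP guarantee for the composite $\mathcal{M} \circ T$.

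Next I would invoke $q$-stability (Lemma~\ref{def:bct}): since $||\mathcal{D}_1 - \mathcal{D}_2|| = 1$, the transformed instances satisfy $||T(\mathcal{D}_1) - T(\mathcal{D}_2)|| \leq q$. The crucial observation is that the database distance $||\cdot||$ is an integer-valued path metric whose unit steps coincide with the neighboring relation used by $\mathcal{M}$. Hence a distance of at most $q$ lets me interpolate a finite chain $T(\mathcal{D}_1) = E_0, E_1, \ldots, E_k = T(\mathcal{D}_2)$ with $k \leq q$, where each consecutive pair $E_j, E_{j+1}$ differs by a single tuple and is therefore neighboring in the sense required by the $\epsilon$-DP guarantee of $\mathcal{M}$.

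With this chain in hand the argument telescopes. Applying the $\epsilon$-DP property of $\mathcal{M}$ to each adjacent pair gives $\Pr[\mathcal{M}(E_j) \in O] \leq e^{\epsilon}\,\Pr[\mathcal{M}(E_{j+1}) \in O]$ for every $j$, and chaining these $k$ inequalities yields $\Pr[\mathcal{M}(E_0) \in O] \leq e^{k\epsilon}\,\Pr[\mathcal{M}(E_k) \in O] \leq e^{q\epsilon}\,\Pr[\mathcal{M}(E_k) \in O]$, using $k \leq q$ in the final step. Substituting $E_0 = T(\mathcal{D}_1)$ and $E_k = T(\mathcal{D}_2)$ closes the inequality, and swapping the roles of $\mathcal{D}_1$ and $\mathcal{D}_2$ (valid since the neighboring relation is symmetric) delivers the matching bound in the other direction.

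I expect the main obstacle to be the interpolation step rather than the telescoping. Concretely, one must argue that a symmetric-difference distance of at most $q$ between $T(\mathcal{D}_1)$ and $T(\mathcal{D}_2)$ can always be realized by a sequence of at most $q$ single-tuple insertions or deletions that remain inside the valid domain $\mathbb{D}$, so that each intermediate $E_j$ is a legitimate input on which $\mathcal{M}$'s guarantee applies. This is precisely where the chosen metric $||\cdot||$ must be exploited; once it is recognized as a graph/path metric with neighboring pairs as edges, the remaining steps are routine.
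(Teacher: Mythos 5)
Your proof is correct. Note that the paper does not prove this lemma at all: it imports it verbatim from McSherry's PINQ paper (Theorem~2 of the cited work), whose one-line argument is exactly the one you reconstruct --- $q$-stability bounds $\lVert T(\mathcal{D}_1)-T(\mathcal{D}_2)\rVert$ by $q$, and then the $e^{\epsilon\cdot\mathrm{distance}}$ group-privacy bound for $\mathcal{M}$ finishes the claim. Your telescoping chain $E_0,\dots,E_k$ is just the standard proof of that group-privacy bound made explicit, and your flagged ``obstacle'' (that each intermediate $E_j$ is a legitimate database) is unproblematic here because the distance in question is the symmetric-difference metric on (multi)sets of tuples, whose unit steps are single insertions/deletions and whose intermediates are always valid instances. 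So the proposal is a correct, self-contained proof taking essentially the same route as the cited source.
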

According to Lemma~\ref{def:bct}, it's clear that protocol $\trans$ is $q$-stable, and thus by Lemma~\ref{def:stable-privacy}, applying an $\epsilon$-DP mechanism over the outputs of $\trans$ protocol (the cached data) implies $q\epsilon$-DP over the original data. Therefore, if $q$ is constant, then the total privacy loss with respect to the input of $\trans$ is bounded.


\vspace{-1mm}
\boldparagraph{Contribution over time.} According to the overall architecture, $\trans$ is invoked repeatedly for transforming outsourced data into view entries at each time step. Thus having a $q$-stable $\trans$ does not immediately imply bounded privacy loss with respect to the logical database. There are certain cases where one record may contribute multiple times over time as the input to $\trans$. For example, suppose the servers maintain a join table on both Alice's and Bob's data. When Alice submits new data, the servers need to compute new join tuples between her new data and Bob's entire database, which results in some of Bob's data being used multiple times. This could eventually lead to unbounded privacy loss.
\vspace{-1mm}
\begin{theorem}\label{tm:bcompose} Given a set of transformations $T = \{T_i\}_{i\geq0}$, where each $T_i$ is a $q_i$-stable transformation. Let $\{\mathcal{M}_i\}_{i\geq 0}$ be a set of mechanisms, where each $\mathcal{M}_i$ provides $\epsilon_i$-differential privacy. Let another mechanism $\mathcal{M}(\mathcal{D})$ that executes each $\mathcal{M}_i$ using independent randomness with input $T_i(\mathcal{D})$. Then $\mathcal{M}$ satisfies $\epsilon$-DP, where 
\begin{equation}\label{eq:compose}
\begin{split}
\epsilon = \max_{u, \mathcal{D}}\left(\sum_{i ~:~ \tau_i(u) > 0} q_i\epsilon_i\right)
\end{split}
\end{equation}
and $\tau_i(u) = ||T_i(\mathcal{D}) - T_i(\mathcal{D} - \{u\})||$, denotes the contribution of record $u$ to the transformation $T_i$'s outputs.
\end{theorem}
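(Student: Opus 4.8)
The plan is to establish the bound on one arbitrary neighboring pair and then take the worst case, exploiting the fact that only the transformations actually perturbed by the changed record can contribute any privacy loss. First I would fix a neighboring pair and, invoking Definition~\ref{def:ngdb}, write it as $\mathcal{D}$ and $\mathcal{D} - \{u\}$ for a single logical update $u$; it suffices to verify the $\epsilon$-DP inequality of Definition~\ref{def:dp-jpat} for this pair and then maximize over $(u,\mathcal{D})$, which is exactly what yields the outer $\max$ in Equation~\eqref{eq:compose}. Because $\mathcal{M}(\mathcal{D})$ returns the tuple $\left(\mathcal{M}_i(T_i(\mathcal{D}))\right)_{i}$ with mutually independent coins, the output density factorizes as $\prod_i \Pr[\mathcal{M}_i(T_i(\mathcal{D})) = o_i]$, so the log privacy loss of $\mathcal{M}$ decomposes into a sum of per-component log privacy losses, reducing the task to bounding each term $\ln\frac{\Pr[\mathcal{M}_i(T_i(\mathcal{D})) = o_i]}{\Pr[\mathcal{M}_i(T_i(\mathcal{D} - \{u\})) = o_i]}$ in isolation.

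The crux is a case split on $\tau_i(u) = \|T_i(\mathcal{D}) - T_i(\mathcal{D} - \{u\})\|$. When $\tau_i(u) = 0$ the inputs $T_i(\mathcal{D})$ and $T_i(\mathcal{D} - \{u\})$ coincide, so $\mathcal{M}_i$ induces identical output distributions and the term vanishes; this is precisely what lets the surviving sum range only over indices with $\tau_i(u) > 0$. When $\tau_i(u) > 0$, I would invoke $q_i$-stability (Lemma~\ref{def:bct}) together with Lemma~\ref{def:stable-privacy} to conclude that $\mathcal{M}_i \circ T_i$ is $q_i\epsilon_i$-DP, so that term is bounded by $q_i\epsilon_i$. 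Summing the surviving terms bounds the privacy loss of $\mathcal{M}$ by $\sum_{i\,:\,\tau_i(u) > 0} q_i\epsilon_i$ for the fixed pair; since this holds pointwise in the output, it integrates to the DP inequality on any output event, and taking the maximum over $(u, \mathcal{D})$ establishes $\epsilon$-DP with $\epsilon$ as in Equation~\eqref{eq:compose}.

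The step I expect to be the main obstacle is treating the $\tau_i(u) = 0$ indices rigorously, since the entire advantage over naive sequential composition---which would charge every $q_i\epsilon_i$---rests on showing that an unperturbed transformation contributes exactly zero, not merely a small amount. I would therefore be careful to argue that $\tau_i(u) = 0$ gives genuinely identical distributions, and to note that the active index set $\{i : \tau_i(u) > 0\}$ may differ from pair to pair, which is why the result must be phrased as a maximum over $(u,\mathcal{D})$ rather than obtained from a single global composition step. A minor secondary point is that Lemma~\ref{def:stable-privacy} already absorbs the group-privacy factor needed when $\tau_i(u)$ exceeds one, so the per-term bound $q_i\epsilon_i$ requires no separate treatment.
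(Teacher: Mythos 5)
Your proposal is correct and follows essentially the same route as the paper's proof: factorize the output distribution using the independence of the $\mathcal{M}_i$'s coins, observe that indices with $\tau_i(u)=0$ yield identical input distributions and hence contribute nothing, bound each surviving factor by $e^{q_i\epsilon_i}$ via Lemma~\ref{def:stable-privacy}, and take the maximum over $(u,\mathcal{D})$. Your treatment is in fact slightly more careful than the paper's (which argues pointwise on outputs and leaves the integration to events and the pair-dependence of the active index set implicit), but the decomposition and key lemma are the same.
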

\vspace{-1mm}
Theorem~\ref{tm:bcompose} shows that the overall privacy loss may still be infinite when applying the DP mechanisms over a composition of $q$-stable transformations (i.e. repeatedly invoke $\trans$). However, if the composed transformation $T$ is also $q$-stable, then one can still obtain bounded privacy loss as \smash{$\max_{u, \mathcal{D}}\left(\sum_{i ~:~ \tau_i(u) > 0} q_i\epsilon_i\right) \leq q\max(\epsilon_i)$}. 
Inspired by this, the following steps could help to obtain fixed privacy loss over time: First we assign a total contribution budget $b$ to each outsourced data $ds_i \in \mathcal{DS}$. As long as a record is used as input to $\trans$ (regardless of whether it contributes to generating a real view entry), it is consumed with a fixed amount of budget (equal to the truncation limit $\omega$). Then $\trans$ keeps track of the available contribution budget for each record over time and ensures that only data with a remaining budget is used. According to this design, the ``life time'' contribution of each outsourced data to the materialized view object is bounded by $b$. 





\vspace{-1mm}
\boldparagraph{Implementation of $\mathsf{trans\_truncate}$ operator.} 
Na\"ively, this operator can be implemented via two separate steps. For example, one may first apply an oblivious transformation (i.e. oblivious filter~\cite{bater2018shrinkwrap, chowdhury2019crypt}, join~\cite{zheng2017opaque, eskandarian2017oblidb}, etc.) without truncation over the input data. The results are stored to an exhaustively padded array. Next, truncation can be implemented by linearly scanning the array and resets the {\it isView} bit from 1 to 0 for a subset of the data in the array such that the resulting output satisfies Eq~\ref{eq:trunc}. In practice, truncation can be integrated with the view transformation so that the protocol does not have to run an extra round of linear scan. In what follows, we provide an instantiated implementation of oblivious sort-merge join where the output is truncated with a contribution bound $b$, and we continue to provide more implementations for other operators such as filter, nested-loop join, etc. in our complete full version.

\vspace{-2mm}
\begin{example}[$b$-truncated oblivious sort-merge join]\label{exmp:smj}
Assume two tables $T_1$, $T_2$ to be joined, the algorithm outputs the join table between $T_1$ and $T_2$ such that each data owns at most $b$ rows in the resulting join table. The first step is to union the two tables and then obliviously sort~\cite{batcher1968sorting} them based on the join attributes. To break the ties, we consider $T_1$ records are ordered before $T_2$ records. Then similar to a normal sort-merge join, where the operator linearly scans
the sorted merged table then joins $T_1$ records with the corresponding records in $T_2$. There are some variations
to ensure obliviousness and bounded contribution. First, the operator keeps track of the contribution of each individual tuple. If a tuple has already produced $b$ join entries, then any subsequent joins with this tuple will be automatically discarded. Second, for linear scan, the operator outputs $b$ entries after accessing each tuple in the merged table, 
regardless of how many true joins are generated. If there are fewer joins, then pad them with additional dummy data, otherwise truncate the true joins and keep only the $b$ tuples. Figure~\ref{fig:tjoin} illustrates the aforementioned computation workflow.
\end{example}
\begin{figure}[h]
\centering
\includegraphics[width=0.7\linewidth]{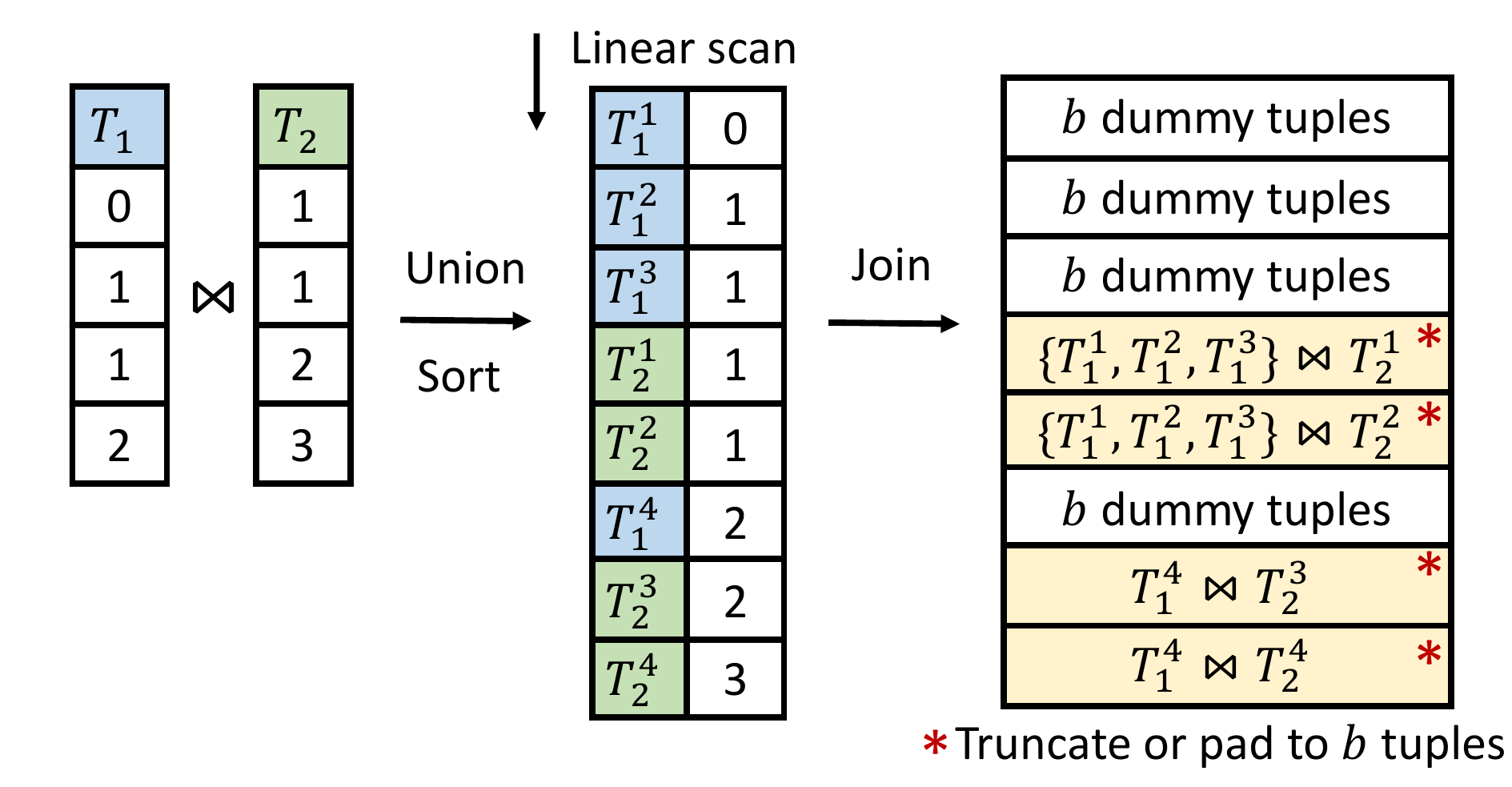}
\vspace{-4mm}
\caption{Oblivious truncated sort-merge join.}
\label{fig:tjoin}
\end{figure}
\vspace{-2mm}
\boldparagraph{Secret-sharing inside MPC.} When re-sharing the new cardinalities (Alg~\ref{algo:precom}:5-6), we must ensure none of the two servers can tamper with or predict the randomness for generating secret shares. This can be done with the following approach: each outsourcing server $\mathcal{S}_i$ chooses a value $z_i$ uniformly at random from the ring $\mathbb{Z}_{2^{32}}$, and contributes it as the input to $\trans$. The protocol then computes $\share{c}\gets\{c_0\gets z_0\oplus z_1, c_1\gets c_0\oplus c\}$ internally. By applying this, $\mathcal{S}_0$'s knowledge of the secret shared value is subject to the two random values $z_0, z_1$ while $\mathcal{S}_1$'s knowledge is bounded to $c\oplus z_0$ which is masked with a random value unknown to $\mathcal{S}_1$. A complete security proof of this technique can be found in our full version. 

\eat{
\begin{itemize}
\item $rid$, the unique record identifier associated with
each row in the outsourced data. 
\item $new(rid)$, a Boolean predicate function that checks whether the row with $rid$ is new data inserted since last time.
\item $rcb(rid)$, the function to obtain the remaining contribution budget associated to record $rid$.
\item $\sigma_{\phi}$, a filter operator, where $\phi$ is a Boolean predicate function. 
\item $\mathcal{V} \gets T(\mathcal{DS})$, a standard secure query plan. In addition, we assume $T$ creates an additional column $s\_rid$ for each generated row, which is a collection of $rid$ that contributes to generate the corresponding row.
\item $f(rid, \mathcal{V})$, a function takes in a transformed structure $\mathcal{V}$ and a $rid$, it calculates the contribution of $rid$ to $\mathcal{V}$'s row, where $f(rid, \mathcal{V})\gets \texttt{count}(\sigma_{rid \in s\_rid}(\mathcal{V}))$.
\item $\sigma^{*}_{\omega, b} (\mathcal{DS})$, operator that first applies a filter $\sigma_{\phi}$ to select rows such that \smash{$\sigma_{\phi} \gets \sigma_{rcb(rid) \leq b}$}, then consumes the budget of selected rows by setting $rcb(rid) \gets rcb(rid) + \omega$.
\item $\sigma^{+} (\mathcal{V})$, operator that filters rows that are generated by newly inserted data, $\sigma^{+}\gets\sigma_{\exists~ rid \in s\_rid~:~new(rid)==True}$.

\item $\mathcal{T}_{n, m} (\mathcal{V})$, truncation operator that first randomly delete rows to enforce that $\forall~ rid, f(rid, \mathcal{V})\leq n$. Then randomly insert (encrypted) dummy tuples to ensure $|\mathcal{V}|=m$.
\end{itemize}

Given a standard secure query plan $T$ and an underlying outsourced data $\mathcal{DS}$, we rewrite the transformation as $$ \hat{T}(\mathcal{DS}) \gets \mathcal{T}_{n,m}(\sigma^{+}(T(\sigma^{*}_{\omega, b} (\mathcal{DS}))))$$
where $\sigma^{*}_{\omega, b}$ keeps track of the available contribution budget for each record and ensures that only data with a remaining budget can contribute to the generation of view instances. $\sigma^{+}$ guarantees that the returned view entries are all caused by newly outsourced data. Finally, $\mathcal{T}$ truncates and pads the resulting view instance to enforce the bounded contribution and hides the output cardinality. In our design, we set $n\gets\omega$ and $m\gets\omega\times|\sigma^{*}_{\omega, b} (\mathcal{DS})|$, meaning that each input data, as long as it is selected, 
will be considered to contribute $\omega$ rows in the resulting view instance. \kartik{I find it hard to grasp the paragraphs above due to the overload of notation. Do not have a concrete suggestion right now, though was wondering if there is a more intuitive way of explaining this.}\cw{I can (i) simplify the notations for descriptions and (ii) Provide an example of how this rewriting works, also I can provide a query plan figure that shows how these operators are used}
}

\eat{
\begin{theorem}
Consider $\hat{T}=\{\hat{T}_t\}_{t\geq 0}$ is a composed transformation over time, where $\hat{T}_t$ is a secure query plan at time $t$ with our rewritten technique. Then $\hat{T}$ is a $q$-stable transformation. 
\end{theorem}
\begin{proof} The operator \smash{$\sigma^{*}_{\omega, b}$} constrains the maximum number of rows in the output of $\hat{T}$ contributed by a single record $r$, as bounded by $b$. Thus for any outsourced data $\mathcal{DS}_1$ and $\mathcal{DS}_2$, $||\hat{T}(\mathcal{DS}_1) -\hat{T}(\mathcal{DS}_2)||\leq b \times ||\mathcal{DS}_1 - \mathcal{DS}_2||$. Then $\hat{T}$ is $q$-stable, where  $q=b$.
\end{proof}
}
\vspace{-1mm}
\subsection{$\sync$ Protocol}\label{sec:dp-st}
We propose two secure protocols that synchronize tuples from the secure cache to the materialized view. Our main idea originates from the {\it private synchronization strategy} proposed in DP-Sync~\cite{wang2021dp}, but with non-trivial variations and additional design. \re{ Typically, DP-Sync enforces trusted entities to execute private synchronization strategies, whereas in our scenario, the framework is supposed to allow untrusted servers to evaluate the view update protocol. Therefore, na\"ively adopting their techniques could lead to additional leakage and exposure to untrusted servers, such as the internal states (i.e., randomness) during protocol execution. Furthermore, DP-Sync considers that the subjects evaluating the synchronization strategies have direct access to a local cache in the clear, whereas in our setup the protocol must synchronize cached view tuples from an exhaustively padded (with dummy entries) secure array without knowing how the real data is distributed. To address these problems, we incorporate the following techniques:
\vspace{-1mm}
\begin{itemize}
    \item We utilize a joint noise-adding mechanism to generate DP noise, which ensures that no admissible adversary can obtain or tamper with the randomness used to generate the noise.  
    \item We implement a secure cache read operation that enforces the real data is always fetched before the dummy tuples when a cache read is performed.
\end{itemize}
}


In what follows, we review the technical details of two view update protocols, $\timer$ and $\ant$. 
\vspace{-2mm}
 \subsubsection{Timer-based approach ($\timer$)}\label{sec:dp-timer-detail}
 $\timer$ is a 2PC protocol among $\sx$, and $\sy$, parameterized by $T$, $\epsilon$ and $b$, where it updates the materialized view every $T$ time units with a batch of DP-sized tuples. Algorithm~\ref{algo:mpcdptimer} shows the corresponding details.
\begin{algorithm}[]
\caption{\timer}
\begin{algorithmic}[1]
\Statex
\textbf{Input}: $\epsilon$; $b$ (contribution budget); $T$ (update interval);  $\boldsymbol{\sigma}$; $\mathcal{V}$;
\For{$t \leftarrow 1,...$}
\If{$t \mod T == 0$}
\State recover $c$ internally
\State \smash{$(z_0, z_1)\xLeftarrow[]{} (\sx, \sy)$}, s.t. \smash{$\forall z_i \xleftarrow[]{\rd}\mathbb{Z}_{2^{32}}$}
\State $z\gets z_0 \oplus z_1$, $r \gets \mathsf{fixed\_point}(z)$, s.t. $r\in (0,1)$
\State \smash{$\texttt{sz} \gets c + \frac{b}{\epsilon}\ln{r}\times \mathsf{sign}(\msb(z))$} \Comment{$\texttt{sz}\gets\lap(\frac{b}{\epsilon})$}
\State $\boldsymbol{\hat{\sigma}}\gets \sort(\boldsymbol{\sigma}, \mathsf{key}=isView)$ 
\State ${\bf o}\gets \boldsymbol{\hat{\sigma}}[0,1,2,..,\texttt{sz}-1]$, $\mathcal{V}\gets\mathcal{V} \cup {\bf o}$, $\boldsymbol{\sigma}\gets\boldsymbol{\hat{\sigma}}[\texttt{sz},...]$
\State reset $c=0$ and re-share it to both servers.
\EndIf 
\EndFor
\end{algorithmic}
\label{algo:mpcdptimer}
\end{algorithm}

For every $T$ time steps $\timer$ obtains the secret-shared cardinality counter from both servers, and recovers the counter $c$ internally. 
The protocol then distort this cardinality with Laplace noise sampled from $\lap(\frac{b}{\epsilon})$. To prevent information leakage, we must ensure none of the entities can control or predict the randomness used to generate this noise. To this end, inspired by the idea in~\cite{dwork2006our}, we implement a joint noise generation approach (Alg~\ref{algo:mpcdptimer}:4-6), where each server generates a random value $z_i\in \mathbb{Z}_{2^{32}}$ uniformly at random and contributes it as an input to $\timer$. The protocol computes $z \gets z_0 \oplus z_1$ internally, and converts $z$ to a fixed-point random seed $r\in(0,1)$. Finally, $\timer$ computes \smash{$\lap(\frac{b}{\epsilon})\gets\frac{b}{\epsilon}\ln{r}\times\textup{sign}$}, using one extra bit of randomness to determine the sign, i.e. the most-significant bit of $z$. By applying this, as long as one server honestly chooses the value uniformly at random and does not share it with any others (which is captured by the non-colluding server setting), she can be sure that no other entity can know anything about the resulting noise. In our design, this joint noise adding technique is used whenever the protocol involves DP noise generation. For ease of notation, we denote this approach as $\tilde{x} \gets \mathsf{JointNoise}(\sx, \sy, \Delta, \epsilon, x)$, where $\tilde{x}\gets\lap(\frac{\Delta}{\epsilon})$.
\vspace{1mm}
\begin{figure}[h]
\centering
\includegraphics[width=0.9\linewidth]{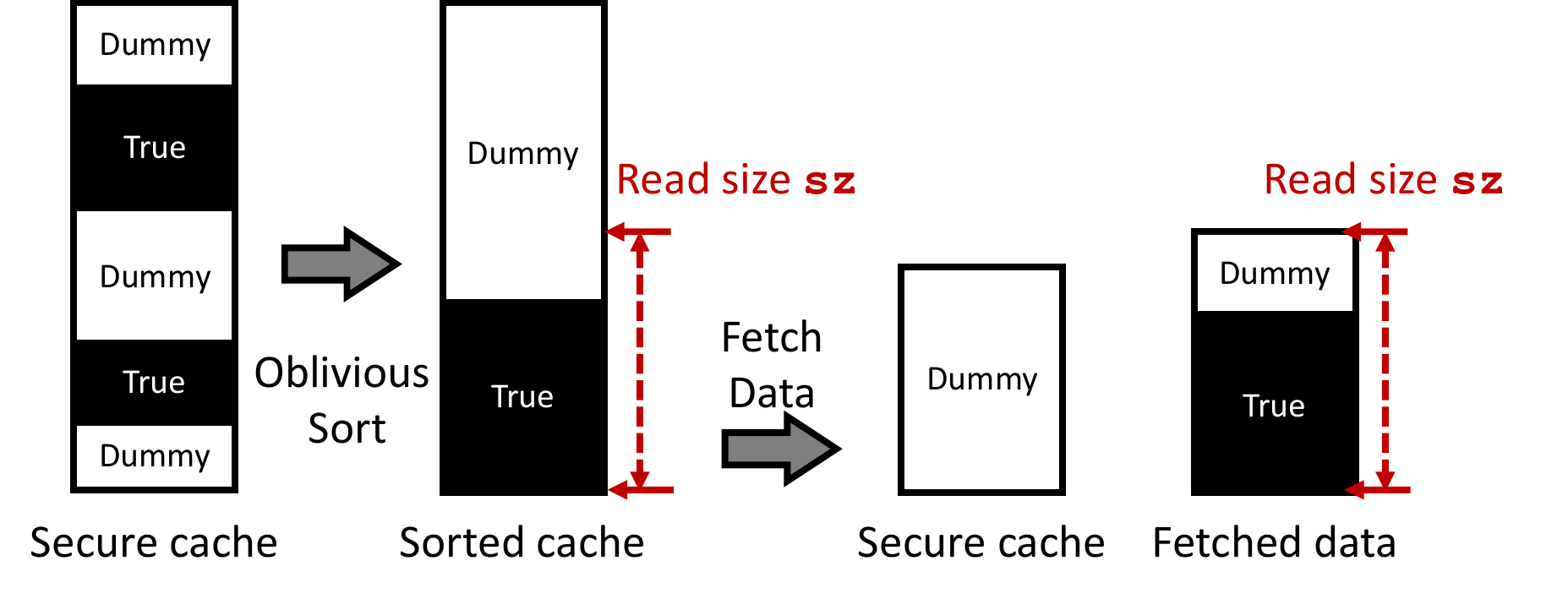}
\vspace{-5mm}
\caption{Cache read operation.}
\label{fig:cache}
\end{figure}

Next, $\timer$ obliviously sorts the exhaustively padded cache $\boldsymbol{\sigma}$ based on the $isView$ bit, moving all real tuples to the head and all dummy data to the tail, then cuts off the first $\texttt{sz}$ elements from the sorted array and stores them as a separate structure ${\bf o}$. $\timer$ then updates the materialized view $\mathcal{V}$ by appending ${\bf o}$ to the old view instance. Figure~\ref{fig:cache} shows an example of the aforementioned operation. Such secure array operation ensures that the real data is always fetched before the dummy elements, which allows us to eliminate a subset of the dummy data and shrink the size of the updated view entries. Finally, after each update, $\timer$ resets $c$ to 0 and re-shares it secretly to both participating servers. 


\re{
 Note that query errors of \system are caused by two factors, namely, the valid data discarded by the $\trans$ due to truncation constraints and the total amount of unsynchronized data in the cache. When the truncation bound is set too small, then a large number of valid view entries are dropped by $\trans$, resulting in inaccurate query answers. We further investigate how truncation bound would affect query errors in a later section (Section~\ref{sec:exp-omega}).
 
 However, one could still choose a relatively large truncation bound to ensure that no data is discarded by $\trans$. As a result, query errors under such circumstances will be caused primarily by unsynchronized cached view tuples. Typically, less unsynchronized cached data that satisfy the requesting query implies better accuracy and vice versa. For instance, when \system has just completed a new round of view update, the amount of cached data tends to be relatively small, and thus less data is missing from the materialized view. Queries issued at this time usually have better accuracy. 
}

\eat{
\begin{algorithm}[]
\caption{\texttt{NewJoin}}
\begin{algorithmic}[1]
\Function{$\texttt{JUpdate}$}{$c^0, c^1, (B^0_t \cup \gamma^0_t, B^1_t \cup \gamma^1_t)$}
    \State $\pi \gets B^0_t \bowtie \gamma^1_t \cup B^1_t \bowtie \gamma^0_t \cup \gamma^0_t \bowtie \gamma^1_t$ \Comment{Compute new joins}
    \State $c\gets c^0 \oplus c^1$, $c\gets c + |\pi|$ \Comment{Recover and update $c$} 
    \State $\hat{\pi} \gets \textup{Padd}(\pi, \textup{MAXSIZE})$ \Comment{Exhaustive padding}
    \State $\sigma \gets \sigma \cup \hat{\pi}$
    \State Generate new secret shares $c^0, c^1$ for $c$.
 \State {\bf return} $(c^0, c^1, \bot)$
\EndFunction
\end{algorithmic}
\label{algo:newjoin}
\end{algorithm}

\begin{algorithm}[]
\caption{ToLaplace}
\begin{algorithmic}[1]
\Function{$\texttt{JointNoise}$}{$z^0, z^1, (\texttt{val}, \epsilon, \Delta)$}
\State $z\gets z^0 \oplus z^1$, $\textup{coef} \gets \textup{sign}(z)$.
\State $r \gets \textup{convert\_to\_float}(z) $ : $r \in (0,1)$ \Comment{Generate seed $r$.}
\State $v \gets \textup{coef} \times \frac{\Delta}{\epsilon} \ln{r}$ \Comment{Transform seed $r$ to Laplace noise.}
\State {\bf return} $\texttt{val} + v$
\EndFunction
\end{algorithmic}
\label{algo:jnoise}
\end{algorithm}

\begin{algorithm}[]
\caption{\texttt{TimerUpdt}}
\begin{algorithmic}[1]
\Function{$\texttt{TimerUpdt}$}{$\share{c_0}, \share{c}_1, (\epsilon, C)$}
\State ${c} \gets \texttt{Recover}(\share{c}_0, \share{c}_1)$\Comment{Recover $c$ with secret shares}
\State $\texttt{sz} \gets c + \lap(\frac{C}{\epsilon})$\Comment{Distort $c$ with DP noises}
\State $c \gets 0$, generate new secret shares $\share{c} = (\share{c}_0, \share{c}_1)$.
\State {\bf return} $(c^0, c^1, \texttt{sz})$
\EndFunction
\end{algorithmic}
\label{algo:mpc-timer}
\end{algorithm}
}
\vspace{-2mm}
\begin{theorem}\label{lg:timer}
Given $\epsilon, b$, and \smash{$k \geq 4\log{\frac{1}{\beta}}$}, where $\beta \in (0,1)$. The number of deferred data $c_d$ after $k$-th updates satisfies \smash{$\textup{Pr}\left[c_d \geq \alpha \right] \leq \beta$}, where \smash{$\alpha=\frac{2b}{\epsilon}\sqrt{k\log{\frac{1}{\beta}}}$}.
\end{theorem}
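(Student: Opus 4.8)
The plan is to first extract from Algorithm~\ref{algo:mpcdptimer} the recursion that governs the amount of deferred real data, and then recognize it as a reflected random walk. Let $c_d^{(i)}$ be the number of genuine view tuples left in $\boldsymbol{\sigma}$ immediately after the $i$-th synchronization, and let $\eta_i \sim \lap(\tfrac{b}{\epsilon})$ be the i.i.d.\ noise drawn at step $i$, so the read size is $\texttt{sz}_i = c_i + \eta_i$ with $c_i$ the counter of fresh real tuples cached since the previous update. Because the oblivious sort moves all real tuples to the head and exactly $\texttt{sz}_i$ tuples are popped, a case analysis on the sign of $\eta_i$ shows the real residue obeys $c_d^{(i)} \le \max\!\left(0,\, c_d^{(i-1)} - \eta_i\right)$ with $c_d^{(0)} = 0$ (with equality whenever $\texttt{sz}_i \ge 0$). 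Thus $c_d = c_d^{(k)}$ is dominated by a nonnegative reflected walk driven by the symmetric increments $-\eta_i$; note in particular that the reflection at $0$ makes $c_d^{(k)}$ strictly larger than the single accumulated sum $-\sum_i \eta_i$ in general, so the terminal sum alone does \emph{not} upper bound it.

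Unrolling this Lindley-type recursion yields the exact representation $c_d^{(k)} \le \max_{0\le j\le k}\sum_{i=j+1}^{k}(-\eta_i)$. Reindexing the increments in reverse order (they are i.i.d., hence exchangeable) turns the right-hand side into $\max_{1\le m\le k} S_m$, where $S_m = \sum_{l=1}^m X_l$ and the $X_l$ are i.i.d.\ $\lap(\tfrac{b}{\epsilon})$. Since the Laplace distribution is symmetric about $0$, I would invoke L\'evy's maximal inequality, $\textup{Pr}[\max_{1\le m\le k} S_m \ge \alpha] \le 2\,\textup{Pr}[S_k \ge \alpha]$, to replace the maximum over the entire trajectory by its single endpoint $S_k$, a sum of exactly $k$ i.i.d.\ Laplace variables. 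This is the step that avoids the naive union bound over the $k$ possible ``restart'' points, which would otherwise lose a multiplicative factor of $k$.

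It then remains to tail-bound $S_k = \sum_{l=1}^k \lap(\tfrac{b}{\epsilon})$. I would use the moment generating function $E[e^{tX_l}] = (1-(\tfrac{b}{\epsilon})^2 t^2)^{-1}$, valid for $|t| < \tfrac{\epsilon}{b}$, with a Chernoff argument: optimizing over $t$ gives a bound of the form $\exp\!\big(-\Theta(\alpha^2 \epsilon^2/(k b^2))\big)$, and substituting $\alpha = \tfrac{2b}{\epsilon}\sqrt{k\log\tfrac1\beta}$ drives the exponent to $\log\tfrac1\beta$, so that $\textup{Pr}[c_d^{(k)}\ge\alpha]\le 2\,\textup{Pr}[S_k\ge\alpha]\le\beta$ once the constant is folded into the sub-Gaussian exponent. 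The hypothesis $k \ge 4\log\tfrac1\beta$ enters precisely here: at the optimizing parameter one has $\tfrac{b}{\epsilon}\,t^\star = \sqrt{\log(1/\beta)/k} \le \tfrac12$, keeping $t^\star$ well inside the radius of convergence and letting the MGF admit the clean sub-Gaussian estimate; without it the Laplace tail leaves the sub-Gaussian regime and the $\sqrt{k\log(1/\beta)}$ scaling of $\alpha$ would no longer hold. I expect the main obstacle to be exactly this coupling of the two requirements in the concentration step: one must control the \emph{maximum} of the partial sums rather than the smaller terminal sum, while simultaneously keeping the Chernoff parameter in the range where sums of Laplace noise concentrate sub-Gaussianly, and matching the exact constant $2$ in $\alpha$ requires carrying out the variance-optimal form of this estimate carefully under the stated bound on $k$.
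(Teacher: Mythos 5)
Your reflected-walk modeling is actually \emph{more} faithful to Algorithm~\ref{algo:mpcdptimer} than the paper's own proof: the paper simply identifies the deferred count with the terminal accumulated noise, $c_d = c_k - \tilde{c}_k = -\sum_{i=1}^{k} Y_i$, and applies Corollary~\ref{col:sum}; it never addresses the reflection at zero that you correctly isolate (and you are right that the reflected walk dominates the terminal sum, so the paper's accounting is the optimistic one). However, your last step has a genuine quantitative gap. L\'evy's maximal inequality costs a factor of $2$, and that factor cannot be ``folded into the sub-Gaussian exponent'': the constant $4$ in $\exp\left(-\alpha^2\epsilon^2/(4kb^2)\right)$ (Lemma~\ref{lemma:sum}) is already variance-optimal for sums of $\lap(\frac{b}{\epsilon})$ variables --- each has variance $2b^2/\epsilon^2$, so even a Gaussian with matched variance gives exactly this exponent --- and a from-scratch Chernoff bound using $\mathbb{E}[e^{tX_l}]=(1-b^2t^2/\epsilon^2)^{-1}$ with the standard estimate $(1-x)^{-1}\le e^{2x}$ only yields the \emph{worse} constant $8$. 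Hence, evaluated at the theorem's $\alpha=\frac{2b}{\epsilon}\sqrt{k\log\frac{1}{\beta}}$, your argument proves $\textup{Pr}\left[c_d\ge\alpha\right]\le 2\beta$, not $\beta$; to reach $\beta$ along this route you would have to inflate $\alpha$ to $\frac{2b}{\epsilon}\sqrt{k\left(\log 2+\log\frac{1}{\beta}\right)}$, which is not the stated theorem.

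The gap is fixable without abandoning your model: replace L\'evy's inequality by Doob's maximal inequality applied to the nonnegative submartingale $e^{tS_m}$ (for $t$ inside the MGF's radius of convergence), which gives $\textup{Pr}\left[\max_{1\le m\le k}S_m\ge\alpha\right]\le e^{-t\alpha}\,\mathbb{E}\left[e^{tS_k}\right]$. The running maximum then inherits, with no factor $2$, exactly the same Chernoff bound as the endpoint, so the tail estimate of Lemma~\ref{lemma:sum} (whose proof is that very MGF optimization) and Corollary~\ref{col:sum} close the argument under $k\ge 4\log\frac{1}{\beta}$ --- and in fact yield a statement strictly stronger than the paper's, since it also covers the reflection the paper's proof silently ignores. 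Two smaller points: you should invoke Lemma~\ref{lemma:sum} rather than promise to re-derive it, because the exact constant $4$ it asserts is not reachable by the naive MGF estimate you sketch; and your reading of where $k\ge 4\log\frac{1}{\beta}$ enters is correct and coincides with the paper's, being equivalent to the hypothesis $\alpha\le kb/\epsilon$ of Lemma~\ref{lemma:sum}, i.e., to keeping the optimal Chernoff parameter at $\frac{b}{\epsilon}t^{\star}=\sqrt{\log(1/\beta)/k}\le\frac{1}{2}$.
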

As per Theorem~\ref{lg:timer}, we can derive the upper bound for total cached data at any time as \smash{$c^* + O(\frac{2b\sqrt{k}}{\epsilon})$}, where $c^*$ refers to the number of newly cached entries since last update, and the second term captures the upper bound for deferred data. The data in the cache, although stored on the server, is not used to answer queries. Therefore large amounts of cached data can lead to inaccurate query results. \re{
Although one may also adopt strategies such as returning the entire cache to the client, or scanning the cache while processing the query so that no data is missing. This will not break the security promise, however, will either increase the communication cost or the runtime overhead as the secure cache is exhaustively padded with dummy data. Moreover, in later evaluations (Section~\ref{sec:e2ecmp}), we observe that even without using the cache for query processing, the relative query errors are small (i.e. bounded by 0.04).
}

It's not hard to bound $c^*$ by picking a smaller interval $T$, however the upper bound for deferred data accumulates when $k$ increases. In addition, at each update, the server only fetches a batch of DP-sized data, leaving a large number of dummy tuples in the cache. Thus, to ensure bounded query errors and prevent the cache from growing too large, we apply an independent cache flushing mechanism to periodically clean the cache. To flush the cache, the protocol first sorts it, then fetches a set of data by cutting off a fixed number of tuples from the head of the sorted array. The fetched data is updated to the materialized view immediately and the remaining array is recycled (i.e. freeing the memory space).  As per Theorem~\ref{lg:timer}, we can set a proper flush size, such that with at most (a relatively small) probability $\beta$ there is non-dummy data been recycled.
\begin{theorem}\label{perf:timer}
Given $\epsilon, b$, and \smash{$k \geq 4\log{\frac{1}{\beta}}$}, where $\beta \in (0,1)$. Suppose the cache flush interval is $f$ with flush size $s$. Then the number of data entries inserted to the materialized view after the $k$-th update is bounded by \smash{$O(\frac{2b\sqrt{k}}{\epsilon}) + \frac{skT}{f}$}.
\end{theorem}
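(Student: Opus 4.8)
The plan is to bound the total number of tuples appended to $\mathcal{V}$ up to time $kT$ by separately accounting for the two independent mechanisms that write to the view: the periodic $\timer$ updates (Algorithm~\ref{algo:mpcdptimer}) and the independent cache-flushing routine. First I would fix the time horizon: after $k$ timer updates we are at time $kT$, during which the flush routine (interval $f$, size $s$) fires $\lfloor kT/f\rfloor$ times. Since each flush cuts off exactly $s$ tuples from the head of the sorted cache and appends them to $\mathcal{V}$, the flush mechanism contributes at most $s\lfloor kT/f\rfloor \le \frac{skT}{f}$ entries, which yields the second term of the bound directly.

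For the timer mechanism, I would write the number of tuples it appends over the $k$ updates as $\sum_{j=1}^{k}\texttt{sz}_j$, where $\texttt{sz}_j = c_j + \eta_j$, with $c_j$ the (reset) real-entry counter for period $j$ and $\eta_j\sim\lap(b/\epsilon)$ the joint noise drawn at update $j$ (Alg~\ref{algo:mpcdptimer}:4-6). Summing gives $\sum_j \texttt{sz}_j = \big(\sum_j c_j\big) + \sum_j \eta_j$; the first sum is exactly the genuine view content generated by $\trans$, so the entire protocol overhead is carried by the noise sum $\sum_{j=1}^{k}\eta_j$. The crux is therefore to show $|\sum_{j=1}^{k}\eta_j| = O\!\big(\frac{2b\sqrt{k}}{\epsilon}\big)$ with probability at least $1-\beta$, which is precisely the deferred-data estimate already established in Theorem~\ref{lg:timer}: negative noise leaves real tuples deferred in the cache while positive noise pulls in dummies, and in both directions the deviation is governed by the partial sums of the $\eta_j$.

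To obtain the $\sqrt{k}$ scaling I would invoke a Bernstein-type tail bound for sums of i.i.d.\ sub-exponential (Laplace) variables: each $\eta_j$ has scale $b/\epsilon$, so $\sum_j \eta_j$ has variance $\Theta\!\big(k(b/\epsilon)^2\big)$, and for a deviation $t=\frac{2b}{\epsilon}\sqrt{k\log\frac{1}{\beta}}$ the hypothesis $k\ge 4\log\frac{1}{\beta}$ places $t$ inside the sub-Gaussian regime of the tail, so that $\Pr[|\sum_j\eta_j|\ge t]\le \beta$. Combining the two mechanisms, with probability at least $1-\beta$ the number of entries appended beyond the genuine content $\sum_j c_j$ is at most $O(\frac{2b\sqrt{k}}{\epsilon})+\frac{skT}{f}$, matching the claimed bound.

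I expect the main obstacle to be the bookkeeping at the interface of the two mechanisms rather than the concentration step itself: because both $\timer$ and the flush routine fetch real tuples before dummies from the same shared cache $\boldsymbol{\sigma}$, I must argue that extracting real data through one channel only reduces what the other channel can append, so that the two overhead estimates do not double-count and the deferred-data guarantee of Theorem~\ref{lg:timer} remains valid under interleaving. Making this monotonicity precise—and confirming the constant hidden in $O(\cdot)$ is consistent with the regime $k\ge 4\log\frac{1}{\beta}$—is the delicate part of the argument.
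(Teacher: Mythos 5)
Your proposal is correct and follows essentially the same route as the paper's proof: the same decomposition of the view's content into genuine entries, the sum of $k$ i.i.d.\ $\lap(b/\epsilon)$ noises, and at most $\lfloor kT/f\rfloor\le kT/f$ flushes of size $s$, with the noise sum controlled by the same Bernstein-type concentration bound for i.i.d.\ Laplace variables (the paper's Lemma~\ref{lemma:sum} and Corollary~\ref{col:sum}), where $k\ge 4\log\frac{1}{\beta}$ serves exactly to keep the deviation $\frac{2b}{\epsilon}\sqrt{k\log\frac{1}{\beta}}$ inside the sub-Gaussian regime. The interleaving/double-counting issue you flag is simply not addressed in the paper's proof, so your treatment is, if anything, more careful on that point.
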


\vspace{-1mm}
\subsubsection{Above noisy threshold. ($\ant$)} The above noise threshold protocol (Algorithm~\ref{algo:mpcdpant}) takes $\theta$, $\epsilon$ and $b$ as parameters and updates the materialized view whenever the number of new view entries is approximately equal to a threshold $\theta$.
\begin{algorithm}[]
\caption{$\ant$}
\begin{algorithmic}[1]
\Statex
\textbf{Input}:  $\epsilon$; $b$; $\theta$ (sync threshold); $\boldsymbol{\sigma}$; $\mathcal{V}$;
\State $\epsilon_1 = \epsilon_2 =\frac{\epsilon}{2}$
\State $\tilde{\theta} \gets \mathsf{JointNoise}(\sx, \sy, b, \epsilon_1/2, \theta)$\Comment{distort the threshold}
\State $\share{\tilde{\theta}} \gets \gs(\tilde{\theta})$, $\share{\tilde{\theta}}\xRightarrow[]{} (\sx, \sy)$
\For{$t \leftarrow 1,...$}
\State recover $c$, and $\tilde{\theta}$ internally
\State $\tilde{c} \gets \mathsf{JointNoise}(\sx, \sy, b, \epsilon_1/4, c)$
\If{$\tilde{c}\geq \tilde{\theta}$}\Comment{updates if greater than noisy threshold}
    \State $\texttt{sz} \gets \mathsf{JointNoise}(\sx, \sy, b, \epsilon_2, c)$ 
    \State $\boldsymbol{\hat{\sigma}}\gets \sort(\boldsymbol{\sigma}, \mathsf{key}=isView)$
    \State ${\bf o}\gets \boldsymbol{\hat{\sigma}}[0,1,2,..,\texttt{sz}-1]$, $\mathcal{V}\gets\mathcal{V} \cup {\bf o}$, $\boldsymbol{\sigma}\gets\boldsymbol{\hat{\sigma}}[\texttt{sz},...]$
    \State $\tilde{\theta} \gets \mathsf{JointNoise}(\sx, \sy, b, \epsilon_1/2, \theta)$
     \State \smash{$\share{\tilde{\theta}} \gets \gs(\tilde{\theta})$, $ \share{\tilde{\theta}}\xRightarrow[]{} (\sx, \sy)$}
    \State reset $c=0$ and re-share it to both servers.
    \EndIf
\EndFor
\end{algorithmic}
\label{algo:mpcdpant}
\end{algorithm}

Initially, the protocol splits the overall privacy budget $\epsilon$ to two parts $\epsilon_1$, and $\epsilon_2$, where $\epsilon_1$ is used to construct the noisy condition check (Alg~\ref{algo:mpcdpant}:7) and $\epsilon_2$ is used to distort the true cardinalities (Alg~\ref{algo:mpcdpant}:8). The two servers then involve a joint noise adding protocol that securely distort $\theta$ with noise $\lap(\frac{2b}{\epsilon_1})$. This noisy threshold will remain unchanged until $\ant$ issues a new view update. An important requirement of this protocol is that such noisy threshold must remain hidden from untrusted entities. Therefore, to cache this value, $\ant$ generates secret shares of $\tilde{\theta}$ internally and disseminates the corresponding shares to each server (Alg~\ref{algo:mpcdpant}:3).

From then on, for each time step, the protocol gets the secret shares $\share{c}$ (true cardinality) and $\share{\tilde{\theta}}$ from $\sx$ and $\sy$, which are subsequently recovered inside the protocol. $\ant$ distorts the recovered cardinality \smash{$\tilde{c} \gets \lap(\frac{4b}{\epsilon_1})$} and compares the noisy cardinality $\tilde{c}$ with $\tilde{\theta}$. A view update is posted if $\tilde{c}\geq\tilde{\theta}$.  By issuing updates, $\ant$ distorts $c$ with another Laplace noise \smash{$\lap(\frac{b}{\epsilon_2})$} to obtain the read size $\texttt{sz}$. Similar to $\timer$, it obliviously sorts the secure cache and fetches as many tuples as specified by $\texttt{sz}$ from the head of the sorted cache. Note that, each time when an update is posted, $\ant$ must re-generate the noisy threshold with fresh randomness. Therefore, after each updates, $\ant$ resets $c=0$, produces a new $\tilde{\theta}$, and updates the corresponding secret shares stored on the two servers (Alg~\ref{algo:mpcdpant}:11-13). In addition, the same cache flush method in $\timer$ can be adopted by $\ant$ as well. The following theorem provides an upper bound on the cached data at each time, which can be used to determine the cache flush size.
\begin{theorem}\label{one:ant}
Given $\epsilon, b$, and let the cache flushes every $f$ time steps with fixed flushing size $s$, the number of deferred data at any time $t$ is bounded by \smash{$O(\frac{16b\log{t}}{\epsilon})$} and the total number of dummy data inserted to the materialized view is bounded by \smash{$O(\frac{16b\log{t}}{\epsilon}) + s\lfloor\frac{t}{f}\rfloor$}.
\end{theorem}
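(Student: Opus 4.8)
The plan is to recognize $\ant$ as a restarted instance of the sparse vector technique (the \emph{above-threshold} mechanism) and to read off both bounds from its accuracy guarantee combined with a union bound over the $t$ time steps. Within each synchronization episode the protocol treats the running cardinality counter $c$ as a monotone stream of queries of sensitivity $b$ (guaranteed by the $b$-truncation of $\trans$), compares the noisy counter $\tilde c = c + \lap(4b/\epsilon_1)$ against the noisy threshold $\tilde\theta = \theta + \lap(2b/\epsilon_1)$, fires exactly once, and then redraws fresh independent noise. This is precisely the above-threshold test with threshold-noise scale $2b/\epsilon_1$ and query-noise scale $4b/\epsilon_1$.

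First I would invoke the standard accuracy theorem for above-threshold: across the at most $t$ comparisons made over $[0,t]$, with probability at least $1-\beta$ the test is $\alpha$-accurate for $\alpha = \tfrac{8b}{\epsilon_1}\bigl(\ln t + \ln\tfrac{2}{\beta}\bigr)$; since $\epsilon_1 = \epsilon/2$ this is $O\!\bigl(\tfrac{16b\ln t}{\epsilon}\bigr)$. A union bound over the (at most $t$) episodes preserves the guarantee, absorbing the extra constant into the $\ln$ term. Accuracy here means that whenever $\ant$ fires the true counter satisfies $c \ge \theta - \alpha$, and at every non-firing step $c \le \theta + \alpha$.

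Next I would translate accuracy into the deferred-data bound. The crucial difference from the timer analysis (Theorem~\ref{lg:timer}) is that $\ant$ redraws $\tilde\theta$ with fresh randomness after each update, so the episodes are independent and the per-episode discrepancies do \emph{not} compound into a random walk; the deferred data at time $t$ is governed only by the current episode. There the protocol reads $\texttt{sz} = c + \lap(b/\epsilon_2)$ real-first tuples from the sorted cache, so the real tuples it fails to fetch are bounded by the gap between the true count and $\texttt{sz}$, which is in turn controlled by $\alpha$ together with the magnitude of a single $\lap(b/\epsilon_2)$ draw. Union-bounding that draw over the at most $t$ updates again gives $O(\tfrac{b}{\epsilon}\ln t)$, so the total deferred data is $O\!\bigl(\tfrac{16b\ln t}{\epsilon}\bigr)$.

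Finally, for the dummy count I would split the tuples written to $\mathcal{V}$ into those produced by over-reads and those produced by cache flushing. An over-read injects at most $\max(0,\lap(b/\epsilon_2))$ dummies per update, which by the same single-draw union bound contributes $O(\tfrac{16b\ln t}{\epsilon})$, while the independent flush mechanism writes a fixed $s$ tuples every $f$ steps, contributing exactly $s\lfloor t/f\rfloor$ over $[0,t]$; summing yields the claimed $O(\tfrac{16b\ln t}{\epsilon}) + s\lfloor t/f\rfloor$. The step I expect to be the main obstacle is the decoupling argument in the third paragraph: making rigorous that the restart after each update keeps the deferred data pinned to the current-episode accuracy $\alpha$, rather than letting the read noises accumulate into the $O(\sqrt{t})$ growth that afflicts the timer. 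This requires pairing the ``fires late'' and ``fires early'' cases of the test with the sign of the read noise and arguing that, in either case, the surviving real tuples are dominated by $\alpha$ plus one Laplace draw.
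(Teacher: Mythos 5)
Your overall framing --- $\ant$ as a restarted above-threshold mechanism, analyzed through its accuracy guarantee plus union bounds over at most $t$ comparisons --- is in the same spirit as the paper's proof, which likewise tail-bounds the three noise sources (threshold noise $\lap(\frac{4b}{\epsilon})$, comparison noise $\lap(\frac{8b}{\epsilon})$, read-size noise $\lap(\frac{2b}{\epsilon})$) and calibrates failure probabilities to reach $\alpha = O\bigl(\frac{16b(\log t + \log(2/\beta))}{\epsilon}\bigr)$. The genuine gap is exactly the step you flagged as the main obstacle: the decoupling claim that fresh threshold randomness after each update keeps the deferred data ``pinned to the current episode.'' That claim is false. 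When the read noise is negative, the unfetched real tuples are not discarded; they remain in the cache, and the next read size is $\texttt{sz}=c+\lap(\frac{2b}{\epsilon})$ where the counter $c$ is reset after every update and counts only entries cached \emph{since} that update, not the backlog. So, writing $R_j$ for the real tuples left behind after the $j$-th update and $Y_j$ for the $j$-th read noise, the backlog obeys the Lindley recursion $R_j=\max(0,\,R_{j-1}-Y_j)$ with i.i.d.\ zero-mean increments. Redrawing $\tilde\theta$ controls only \emph{when} updates fire; it does nothing to drain this backlog, which is recovered only when later read noises happen to come out positive. Unrolling the recursion, $R_k$ equals the maximum of the suffix sums $\sum_{j=m+1}^{k}(-Y_j)$, i.e.\ the running maximum of a zero-drift random walk, which grows as $\Theta\bigl(\frac{b}{\epsilon}\sqrt{k}\bigr)$ over $k$ updates --- precisely the accumulation you assert the restarts rule out. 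Your dummy-count step inherits the same max-versus-sum slip: each update can inject up to $\max(0,Y_j)$ dummies, and union-bounding individual draws controls the largest single injection, not their sum over all updates.

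The paper's proof does not attempt your per-episode localization. It identifies the deferred data with the accumulated discrepancy $\sum_{i:\,a_i\neq\bot}(c_i-a_i)$ taken over \emph{all} update times up to $t$ (where $a_i$ is the released read size), upper-bounds it by $\sum_{i:\,a_i\neq\bot}|a_i-c_i|$, and controls that quantity through a chain of tail estimates on the threshold deviations $|\tilde\theta_j-\theta|$, the comparison deviations $|\tilde c_i-c_i|$, and the release deviations $|a_i-c_i|$, splitting $\beta$ among the three sources, to conclude the $O(\frac{16b\log t}{\epsilon})$ bound; the $s\lfloor\frac{t}{f}\rfloor$ term then comes from the fixed-size cache flushes, which your last paragraph does handle correctly. (Whether the paper's own chain is airtight is a separate question --- it too trades per-draw tail bounds for bounds on sums --- but it is the paper's route, and it does not rest on the independence claim your argument needs.) To reconstruct the paper's argument you should delete the decoupling paragraph entirely and instead bound the total accumulated gap across all updates, in the style of the numeric-sparse accuracy analysis, rather than trying to show the backlog resets with each episode.
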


\eat{
\begin{algorithm}[]
\caption{\texttt{SecNewJoin}}
\begin{algorithmic}[1]
\Function{$\texttt{SecNewJoin}$}{$B^0_t \cup \gamma^0_t, B^1_t \cup \gamma^1_t, \bot$}
\State $\pi \gets B^0_t \bowtie \gamma^0_t \cup B^1_t \bowtie \gamma^0_t \cup \gamma^0_t \bowtie \gamma^1_t$ \Comment{compute new joins}
\State $\pi \gets \textup{Padd}(\pi, \textup{MAXSIZE})$ \Comment{Padding $\pi$ with dummy joins}
\State {\bf return} $(\bot, \bot, \pi)$
\EndFunction
\end{algorithmic}
\label{algo:secnewjoin}
\end{algorithm}
}

\vspace{-1mm}
\section{Security Proof}\label{sec:proof}
We provide the security and privacy proofs in this section.
\begin{theorem}\label{tm:pf-timer}
\system with $\timer$ satisfies Definition~\ref{def:protocol-sogdb}.
\end{theorem}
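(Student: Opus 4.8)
The plan is to establish the two ingredients demanded by Definition~\ref{def:protocol-sogdb}: a leakage mechanism $F$ that satisfies $\epsilon$-DP over growing data (Definition~\ref{def:dp-jpat}), together with a p.p.t.\ simulator $\mathcal{S}$ whose output view is computationally indistinguishable from the real adversarial view $\vi^{\Pi}(\mathcal{D},\pp)$ given only $\pp$ and $F(\mathcal{D})$. First I would isolate exactly what an admissible adversary (corrupting all owners but only one of the two servers) observes while running \system with \timer. Inspecting Algorithm~\ref{algo:mpcdptimer}, everything the corrupted server sees falls into one of: the public update schedule (an update fires every $T$ steps, independent of the data); the sizes of the exhaustively padded cache $\boldsymbol{\sigma}$ and of the view $\mathcal{V}$, which are deterministic functions of $\pp$; the secret share of the counter $c$ and the random contributions $(z_0,z_1)$ it holds or exchanges; the internal 2PC transcripts of $\trans$, \sort, and the joint-noise routine; and the sequence of fetch sizes $\{\texttt{sz}_k\}_k$ released at each update, where $\texttt{sz}_k = c_k + \lap(b/\epsilon)$. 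The only input-dependent quantity is $\{\texttt{sz}_k\}_k$, so I would define the leakage profile as $F(\mathcal{D}) = f(\upatt(\mathcal{D})) = \{\texttt{sz}_k\}_k$.

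Next I would show $F$ satisfies Definition~\ref{def:dp-jpat}. The $\mathsf{trans\_truncate}$ operator makes each invocation of $\trans$ an $\omega$-stable transformation (Lemma~\ref{def:bct}), and the contribution-budget mechanism caps the total contribution of any single logical update, summed across all the counters it ever feeds into, at $b$. Since \timer releases each counter $c_k$ through an independent $\lap(b/\epsilon)$ mechanism, the per-update costs telescope: by Lemma~\ref{def:stable-privacy} and the bounded-contribution composition of Theorem~\ref{tm:bcompose}, $\sum_{i:\tau_i(u)>0} q_i\epsilon_i \le (\epsilon/b)\sum_i \tau_i(u) \le \epsilon$ for every record $u$ and database $\mathcal{D}$, so $F$ is $\epsilon$-DP on neighboring growing databases. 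This discharges the privacy half of SIM-CDP.

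For the security half I would construct $\mathcal{S}$ on input $(\pp, F(\mathcal{D}))$ and prove indistinguishability by a hybrid argument. $\mathcal{S}$ samples the corrupted server's secret share of the counter $c$ as a uniform ring element, which is perfectly correct by the confidentiality of $(2,2)$-secret sharing; it invokes the underlying 2PC simulators of the component operators ($\trans$, \sort, and the joint-noise subroutine) to fabricate the protocol transcripts from their outputs; it populates $\boldsymbol{\sigma}$ and $\mathcal{V}$ with dummy tuples of the public padded sizes; and it programs each view update to absorb exactly $\texttt{sz}_k$ tuples, the value read off $F(\mathcal{D})$. Crucially, because at least one server is honest, its random contribution $z_{1-i}$ to the joint-noise step is uniform and independent of the corrupted view, so the recovered noise---and hence $\texttt{sz}_k$---is distributed exactly as $F$ prescribes and cannot be de-masked by the adversary. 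A sequence of hybrids that replaces real shares by uniform ones (perfect), real transcripts by simulated ones (one negligible term per operator), and real cached payloads by dummies (indistinguishable since $\boldsymbol{\sigma}$ is exhaustively padded and the payloads are secret-shared) bounds the total gap by $\negl(\kappa)$.

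The main obstacle is the security half rather than the privacy half, and two points need care. First, I must argue that the corrupted server learns nothing about the true cardinality $c$ or the injected Laplace noise beyond what $\{\texttt{sz}_k\}_k$ already reveals; this rests on (i) its share of $c$ being independent of $c$, and (ii) the honest party's uniform $z_{1-i}$ masking $z = z_0\oplus z_1$, so that the derived fixed-point seed $r$ and the sign bit look random in isolation. Second, since \timer runs for an unbounded number of time steps, I must compose the per-operation MPC simulators across all rounds while keeping a single negligible bound; this is precisely where I would invoke the UC composability noted under Definition~\ref{def:protocol-sogdb}, so that the simulated and real executions remain computationally close even under sequential repetition.
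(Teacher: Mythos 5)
Your proposal is correct and mirrors the paper's own proof: both identify the leakage profile as the sequence of noisy fetch sizes released at update times, prove it satisfies $\epsilon$-DP by combining the per-release $\lap(b/\epsilon)$ mechanism with the $b$-bounded stability of $\trans$ (the paper via parallel composition over disjoint insertion intervals plus Lemma~\ref{def:stable-privacy}, you via the bounded-contribution composition of Theorem~\ref{tm:bcompose}), and then construct a p.p.t.\ simulator that fabricates uniform secret shares, dummy padded structures, and synchronized batches of exactly the leaked size, with indistinguishability resting on $(2,2)$-secret-sharing security and 2PC simulation. The only cosmetic differences are that the paper's simulator (Table~\ref{tab:sim1}) additionally simulates the data-independent cache-flush events, which you omit but which are trivially reproducible from the public parameters $f$ and $s$, and that your intermediate inequality loosely interchanges $q_i$ with $\tau_i(u)$, though the budget accounting ($b/\omega$ uses, each $\omega$-stable, each release $\epsilon/b$-DP) yields the same $\epsilon$ bound.
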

\begin{proof}\label{pf:dp-timer}
We prove this theorem by first providing a mechanism $\mathcal{M}$ that simulates the update pattern leakage of the view update protocol and proving that $\mathcal{M}$ satisfies $\epsilon$-DP. Second, we construct a {\it p.p.t} simulator that accepts as input only the output of $\mathcal{M}$ which can simulate the outputs that are computationally indistinguishable compared to the real protocol execution. In what follows, we provide the $\mathcal{M}_{\mathsf{timer}}$ that simulates the update pattern of $\timer$.
\begin{align*}
\begin{array}{l}
\mathcal{M}_{\mathsf{timer}}\\
\forall~ t ~:
\end{array}~\  
\begin{array}{l}
\textbf{return} ~\texttt{count}(\sigma_{t-T<t_{tid}}(\mathcal{D})) + \lap(\frac{b}{\epsilon}), \textbf{if}~0\equiv t\Mod T \\
\textbf{return} ~0, \text{otherwise}
\end{array}
\end{align*}
where $t_{tid}$ denotes the time stamp when tuple $rid$ is inserted to $\mathcal{D}$, and $\sigma_{t-T<t_{tid}}$ is a filter operator that selects all tuples inserted within the time interval $(t-T,t]$. In general, $\mathcal{M}_{\mathsf{timer}}$ can be formulated as a series of $\frac{\epsilon}{b}$-DP Laplace mechanisms that applies over disjoint data (tuples inserted in non-overlapping intervals). Thus by parallel composition theorem~\cite{dwork2014algorithmic}, $\mathcal{M}_{\mathsf{timer}}$ satisfies $\frac{\epsilon}{b}$-DP. Moreover, by Lemma~\ref{def:stable-privacy} given a $q$-stable transformation $\hat{T}$ such that $q=b$ (i.e. the $\trans$ protocol), then $\mathcal{M}_{\mathsf{timer}}(\hat{T}(\mathcal{D}))$ achieves $b\times\frac{\epsilon}{b} = \epsilon$-DP. 
We abstract $\mathcal{M}_{\mathsf{timer}}$'s output as $ \{\left(t, v_t\right)\}_{t\geq 0}$, where $v_t$ is the number released by $\mathcal{M}_{\mathsf{timer}}$ at time $t$. Also we assume the following parameters are publicly available: $\epsilon$, $\mathbb{Z}_{m}$, $C_r$ (batch size of owner uploaded data), $b$ (contribution bound), $s$ (cache flush size), $f$ (cache flush rate), $T$ (view update interval). In what follows, we construct a {\it p.p.t.} simulator $\mathcal{S}$ that simulates the protocol execution with only access to $\mathcal{M}_{\mathsf{timer}}$'s outputs and public parameters (Table~\ref{tab:sim1}).

\begin{table}[]
\scalebox{0.98}{\small
\begin{tabular}{|lll|}
\hline
\multicolumn{3}{|c|}{Simulator $\mathcal{S} \left(t, v_t, \epsilon, C_r, b, s, f, T\right)$}                      \\
\multicolumn{3}{|l|}{1. Initialize internal storage $B \gets \{\emptyset\}$.}            \\
\multirow{5}{*}{2. $\forall t > 0$}   & \multicolumn{2}{l|}{i. $\smash{(B_1, B_2) \xleftarrow[]{\text{rd}}\mathbb{Z}_{m}~:~|B_1|=C_r, |B_2|=bC_r}$ } \\
                           & \multicolumn{2}{l|}{ii. ${B}_3 \gets \mathsf{rd\_fetch}({B}_2 \cup B)~:~|{B}_3|=v_t$} \\
                           & \multicolumn{2}{l|}{iii. \textbf{reveal} $\smash{({B_1}, {B_2}, {B}_3, x\xleftarrow[]{\text{rd}}\mathbb{Z}_{m}) ~\textbf{if}~0\equiv t\Mod T}$} \\
                           &         & a.   \smash{$B' \gets\mathsf{rd\_fetch}(B)~:~|B'|=s$}      \\
                           & iv. $\textbf{if}~0 \equiv f \Mod t$             & b. $B \gets \{\emptyset\}$        \\
                           &              & c. \textbf{reveal} \smash{$({B_1}, {B_2}, {B}_3,B', x\xleftarrow[]{\text{rd}}\mathbb{Z}_{m})$}        \\
                           & \multicolumn{2}{l|}{v. \textbf{reveal} $({B_1}, {B_2}, {B_3})$ ~\textbf{otherwise}} \\ \hline
\end{tabular}
}
\caption{Simulator construction ($\timer$)}
\vspace{-4mm}
\label{tab:sim1}
\end{table}
\eat{
\begin{align*}
\begin{array}{l}
\mathcal{S} \left(t, v_t, \epsilon, C_r, b, s, f, T\right):\\
1. B \gets \{\emptyset\}\\
2. \forall t > 0:\\
3. \smash{(B_1, B_2) \xleftarrow[]{\text{rd}}\mathbb{Z}_{m}~:~|B_1|=C_r, |B_2|=bC_r}\\
4. {B}_3 \gets \mathsf{rd\_fetch}({B}_2 \cup B)~:~|{B}_3|=v_t\\
\textbf{Outputs}~ \\
\smash{({B_1}, {B_2}, {B}_3, x\xleftarrow[]{\text{rd}}\mathbb{Z}_{m}) ~\textbf{if}~0\equiv t\Mod T}\\
({B_1}, {B_2}, {B_3}) ~\textbf{otherwise}
\end{array}
\end{align*}
}

Initially, $\mathcal{S}$ initializes the internal storage $B$. Then for each time step,  $\mathcal{S}$ randomly samples 2 batches of data $B_1$, $B_2$ from $\mathbb{Z}_{m}$, where the cardinality of $B_1$, and $B_2$ equals to $C_r$, and $bC_r$, respectively. These two batches simulate the secret shared data uploaded by owners and the transformed tuples placed in cache at time $t$. Next, $\mathcal{S}$ appends $B_2$ to $B$ and then  samples $B_3$ from $B$ such that the resulting cardinality of $B_3$ equals to $v_t$ (if $v_t=0$ then $\mathcal{S}$ sets $B_3=\{\emptyset\}$). This step simulates the data synchronized by $\sync$ protocol. Finally, if $t\Mod T=0$, $\mathcal{S}$ generates one additional random value $x$ to simulate the secret share of the cardinality counter and reveals $x$ together with the generated data batches to the adversary (2.iii). If $0\equiv f\Mod t$, then $\mathcal{S}$ performs another random sample from internal storage $B$ to fetch $B'$ such that $|B'|=s$, followed by resetting $B$ to empty. $\mathcal{S}$ reveals $B_1, B_2, B_3, B'$ and one additional random value $x$ to the adversary. These steps (2.iv) simulate the cache flush. Otherwise $\mathcal{S}$ only reveals $B_1, B_2$ and $B_3$. The computational indistinguishability between the $B_1$, $B_2$, $B_3, x$ and the messages the adversary can obtain from the real protocol follows the security of $(2,2)$-secret-sharing scheme and the security of secure 2PC~\cite{lindell2017simulate}. 
\eat{
\begin{enumerate}
    \item For each time $t$, evaluates $\mathcal{S}\left(t, \{C_r\}_{0\leq i \leq m}, C_c, v_t\right)$
         \item $\mathcal{S}$ generates three batches of random data $B_1$, $B_2$, and $B_3$ with cardinality corresponding to $mC_r$, $C_c$, and $v_t$, respectively. 
        \item $\mathcal{S}$ encrypts $B_1$, $B_2$ and $B_3$ with random public key $\pk$ and reveals the encrypted outputs to the adversary $\mathcal{A}$.
        \item Then $\mathcal{S}$ generates $m$ random numbers and revels $m-1$ out of the $m$ random numbers to the adversary $\mathcal{A}$. 
    \end{enumerate}
    }
\end{proof}
\vspace{-3mm}
\begin{theorem}\label{tm:pfant}
\vspace{-3mm}
\system with $\ant$ satisfies Definition~\ref{def:protocol-sogdb}.
\end{theorem}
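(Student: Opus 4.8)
The plan is to follow the same two-stage template as the proof of Theorem~\ref{tm:pf-timer}: first exhibit a mechanism $\mathcal{M}_{\mathsf{ant}}$ that reproduces the update-pattern leakage of \system running $\ant$ and show it satisfies $\epsilon$-DP in the sense of Definition~\ref{def:dp-jpat}; then build a p.p.t.\ simulator $\mathcal{S}$ that, given only the public parameters and the output stream of $\mathcal{M}_{\mathsf{ant}}$, produces a view computationally indistinguishable from the real transcript of Algorithm~\ref{algo:mpcdpant}. Definition~\ref{def:protocol-sogdb} then follows by combining the DP guarantee of $\mathcal{M}_{\mathsf{ant}}$ with the simulation argument, exactly as in the timer case.

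For the leakage mechanism, I would define $\mathcal{M}_{\mathsf{ant}}$ as the \emph{restarting sparse-vector} mechanism induced by the algorithm: operating on the transformed stream $\hat{T}(\mathcal{D})$, at each step it forms the count $c$ of real entries cached since the last reset, adds query noise $\lap(4b/\epsilon_1)$, and compares against a noisy threshold $\theta+\lap(2b/\epsilon_1)$; on each crossing it releases the noisy size $\texttt{sz}=c+\lap(b/\epsilon_2)$, then redraws the threshold and resets $c=0$. Working at the level of the transformed data, where a single transformed tuple has sensitivity $1$, the comparison sequence inside one round is an instance of the sparse vector technique with threshold noise $\lap(2b/\epsilon_1)$ and per-query noise $\lap(4b/\epsilon_1)$, which is $\tfrac{\epsilon_1}{b}$-DP; the single numeric release per round adds $\tfrac{\epsilon_2}{b}$-DP by the Laplace mechanism, so each round is $\tfrac{\epsilon_1+\epsilon_2}{b}=\tfrac{\epsilon}{b}$-DP.

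Next I would lift this to the logical database. Because $c$ is reset after every update and counts only entries cached since the last reset, every transformed tuple contributes to the comparisons and to the release of exactly one round; hence altering a single transformed tuple changes the output distribution of only one round and leaves the others intact, so the whole $\mathcal{M}_{\mathsf{ant}}$ remains $\tfrac{\epsilon}{b}$-DP on $\hat{T}(\mathcal{D})$. Since $\trans$ is $q$-stable with $q=b$ (Lemma~\ref{def:bct} together with the contribution-budget argument behind Theorem~\ref{tm:bcompose}), Lemma~\ref{def:stable-privacy} gives that $\mathcal{M}_{\mathsf{ant}}\circ\hat{T}$ is $b\cdot\tfrac{\epsilon}{b}=\epsilon$-DP, establishing Definition~\ref{def:dp-jpat}.

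Finally, the simulator $\mathcal{S}$ mirrors Table~\ref{tab:sim1}, with the fixed interval $T$ replaced by the data-dependent firing times dictated by $\mathcal{M}_{\mathsf{ant}}$'s output: at every step $\mathcal{S}$ samples fresh uniform batches $B_1,B_2$ over $\mathbb{Z}_m$ to simulate the owners' shares and the padded cache writes, and reveals a fresh uniform value standing for the re-shared counter $\share{c}$ and the cached threshold share $\share{\tilde{\theta}}$; whenever $\mathcal{M}_{\mathsf{ant}}$ signals a crossing it fetches $B_3$ of the released size from its internal store, reveals an additional fresh random value for the regenerated threshold share, and resets the store, with cache flushes handled as before. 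Computational indistinguishability then follows from the security of $(2,2)$-secret sharing and of 2PC, since every revealed quantity is either a uniform share or a protocol output. I expect the main obstacle to be the DP argument for $\mathcal{M}_{\mathsf{ant}}$: unlike the timer mechanism, whose privacy came from parallel composition over a \emph{fixed} interval partition, here the round boundaries are adaptive and data-dependent, so I must invoke the sparse-vector privacy proof directly (one changed tuple influences a single, though data-chosen, segment) and verify that the cached noisy threshold $\share{\tilde{\theta}}$ leaks nothing beyond what $\mathcal{M}_{\mathsf{ant}}$ already accounts for.
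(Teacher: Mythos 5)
Your proposal matches the paper's proof essentially step for step: the same restarting sparse-vector leakage mechanism $\mathcal{M}_{\mathsf{ant}}$ (with the same noise scales as Algorithm~\ref{algo:mpcdpant}), the same argument that each round is $\tfrac{\epsilon}{b}$-DP (SVT plus one Laplace release) and that rounds compose over disjoint, reset-delimited data, the same lift to $\epsilon$-DP on the logical data via $b$-stability of $\trans$ and Lemma~\ref{def:stable-privacy}, and the same simulator obtained from Table~\ref{tab:sim1} by replacing the periodic firing condition with $v_t>0$ and revealing one extra uniform share for the refreshed noisy threshold. The only cosmetic difference is that you inline the sparse-vector privacy argument where the paper cites Theorem~11 of~\cite{wang2021dp} and defers details to its appendix.
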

\begin{proof}Following the same proof paradigm, we first provide $\mathcal{M}_{\mathsf{ant}}$ that simulates the view update pattern under $\ant$.
\begin{align*}
\begin{array}{l}
\mathcal{M}_{\mathsf{ant}}\\
\mathcal\forall~ t ~: ~
\end{array}\  
\begin{array}{l}
\tilde{\theta} \gets \theta + \lap(\frac{4b}{\epsilon}), \text{if}~t=0\\
c_t \gets \texttt{count}(\sigma_{t^* < t_{tid} \leq t}(\mathcal{D})), \tilde{c}_t \gets c_t + \lap(\frac{8b}{\epsilon})\\
\textbf{return} ~c_t + \lap(\frac{4b}{\epsilon}), \theta + \lap(\frac{4b}{\epsilon}), \text{if}~\tilde{c}_t\geq\tilde{\theta}\\
\textbf{return} ~0, \text{if}~\tilde{c}_t < \tilde{\theta}
\end{array}
\end{align*}
where $\sigma_{t^* < t_{tid}\leq t}$ is a filter that selects all data received since last update. In general, $\mathcal{M}_{\mathsf{ant}}$ is a mechanism utilizes sparse vector techniques (SVT) to periodically release a noisy count. According to~\cite{wang2021dp} (Theorem 11), this mechanism satisfies \smash{$\frac{\epsilon}{b}$}-DP (interested readers may refer to our full version for complete privacy proof of $\mathcal{M}_{\mathsf{ant}}$). As per Lemma~\ref{def:stable-privacy} we can obtain the same conclusion that $\mathcal{M}_{\mathsf{ant}}(\hat{T}(\mathcal{D}))$ achieves $\epsilon$-DP, if $\hat{T}$ is $q$-stable and $q=b$. 
Similarly, we abstract $\mathcal{M}_{\mathsf{ant}}$'s output as $ \{\left(t, v_t\right)\}_{t\geq 0}$ and we assume the following parameters are publicly available: $\epsilon$, $C_r$, $b$, $s$, $f$, $\theta$ (threshold). For simulator construction one can reuse most of the components in Table~\ref{tab:sim1} but with the following modifications: (i) For step 2.iii, one should replace the condition check as whether $v_t>0$. (ii) For step 2.iii and 2.iv, the simulator outputs one additional random value \smash{$y\xleftarrow[]{\text{rd}}\mathbb{Z}_{m})$} which simulates the secret shares of refreshed noisy threshold. Similar, the indistinguishability follows the security property of XOR-based secret-sharing scheme.   

\eat{
\begin{table}[]
\scalebox{0.98}{\small
\begin{tabular}{|lll|}
\hline
\multicolumn{3}{|c|}{Simulator $\mathcal{S} \left(t, v_t, \epsilon, C_r, b, s, f, \theta\right)$}                      \\
\multicolumn{3}{|l|}{1. Initialize internal storage $B \gets \{\emptyset\}$.}            \\
\multirow{5}{*}{2. $\forall t > 0$}   & \multicolumn{2}{l|}{i. $\smash{(B_1, B_2) \xleftarrow[]{\text{rd}}\mathbb{Z}_{m}~:~|B_1|=C_r, |B_2|=bC_r}$ } \\
                           & \multicolumn{2}{l|}{ii. ${B}_3 \gets \mathsf{rd\_fetch}({B}_2 \cup B)~:~|{B}_3|=v_t$} \\
                           & \multicolumn{2}{l|}{iii. \textbf{reveal} $\smash{({B_1}, {B_2}, {B}_3, (x,y) \xleftarrow[]{\text{rd}}\mathbb{Z}_{m}) ~\textbf{if}~v_t>0}$} \\
                           &         & a.   \smash{$B' \gets\mathsf{rd\_fetch}(B)~:~|B'|=s$}      \\
                           & iv. $\textbf{if}~0 \equiv f \Mod t$             & b. $B \gets \{\emptyset\}$        \\
                           &              & c. \textbf{reveal} \smash{$({B_1}, {B_2}, {B}_3,B', (x,y)\xleftarrow[]{\text{rd}}\mathbb{Z}_{m})$}        \\
                           & \multicolumn{2}{l|}{v. \textbf{reveal} $({B_1}, {B_2}, {B_3})$ ~\textbf{otherwise}} \\ \hline
\end{tabular}
}
\caption{Simulator construction ($\ant$)}
\vspace{-4mm}
\label{tab:sim2}
\end{table}
}
\eat{
\begin{align*}
\begin{array}{l}
\mathcal{S}\\
\forall~ t ~:
\end{array}~\  
\begin{array}{l}
\textbf{Inputs} \left(t, \{C_r\}_{0\leq i \leq m}, C_c, v_t\right)\\
(B_1, B_2) \gets \mathsf{rd\_gen}~:~|B_1|=mC_r, |B_2|=C_c\\
\pk \gets \mathsf{rd\_gen}(1^{\lambda}), (\textbf{B}_1, \textbf{B}_2)\gets \enc_{\pk}(B_1, B_2)\\
\textbf{B}_3 \gets \mathsf{rd\_sample}(\textbf{B}_2)~:~|\textbf{B}_3|=v_t\\
(x_1, x_2, ..., x_{m-1})\gets \mathsf{rd\_gen}\\
(y_1, y_2, ..., y_{m-1})\gets \mathsf{rd\_gen}\\
\textbf{reveal}~ (x_1, x_2, ..., x_{m-1}, \textbf{B}_1, \textbf{B}_2, \textbf{B}_3) ~\textup{to}~ \mathcal{A}, \textup{if}~v_t=0\\
\textbf{reveal}~ (x_1,..., x_{m-1}, y_1,..., y_{m-1}, \textbf{B}_1, \textbf{B}_2, \textbf{B}_3) ~\textup{to}~ \mathcal{A}, \textup{if}~v_t>0
\end{array}
\end{align*}
}

\eat{We start with a modified version of $\mathcal{M}^*_i$, say \smash{$\hat{\mathcal{M}^*_i}$}, where the only difference is that it outputs $\top$ once the condition \smash{$\lap(\frac{4\Delta}{\epsilon_1})$} + $|\pi| \geq  \tilde{\theta}$ is satisfied. Then we can treat $\mathcal{M}^*_i$ as a composite mechanism of \smash{$\hat{\mathcal{M}^*_i}$} plus a Laplace mechanism with $\frac{\epsilon}{2C}$-DP. In what follows, we focus on discussing the privacy guarantee of \smash{$\hat{\mathcal{M}^*_i}$}. 

Let's denote \smash{$\hat{\mathcal{M}^*_i}$}'s outputs as $O = \{o_1, o_2,...,o_m \}$, where $\forall~ 1 \leq i < m$, $o_i = \bot$, and $o_m = \top$. Suppose \smash{$\hat{\mathcal{M}^*_i}$} takes as inputs of a pair of neighboring databases $\mathcal{D}$ and $\mathcal{D}'$. We use $\tilde{c}_i$ and $\tilde{c}'_i$ denote the $i^{th}$ noisy count of cached view entries. Then computes:
\begin{equation}
\begin{split}
  & \textup{Pr}\left[~ \hat{\mathcal{M}^*_i}(\mathcal{D}) = O \right] \\
  =  \int_{-\infty}^{\infty} & \textup{Pr}\left[\tilde{\theta} = x \right]\left( \prod_{1 \leq i < m}\textup{Pr}\left[\tilde{c}_i < x\right] \right) \textup{Pr}\left[\tilde{c}_m \geq x \right]dx\\
\leq \int_{-\infty}^{\infty} & e^{\frac{\epsilon}{4C}}\textup{Pr}\left[\tilde{\theta} = x + \Delta \right]\left( \prod_{1 \leq i < m}\textup{Pr}\left[\tilde{c}_i < x \right] \right) \textup{Pr}\left[\tilde{c}_m \geq x \right]dx\\
\leq \int_{-\infty}^{\infty} & e^{\frac{\epsilon}{2C}}\textup{Pr}\left[\tilde{\theta} = x + \Delta \right]\left( \prod_{1 \leq i < m}\textup{Pr}\left[\tilde{c}'_i < x + \Delta \right] \right) \textup{Pr}\left[ \tilde{c}'_m \geq x + \Delta \right]dx\\
  = e^{\frac{\epsilon}{2C}}\textup{Pr} & [\hat{\mathcal{M}^*_i}(\mathcal{D}') = O]
\end{split}
\end{equation}
Therefore, \smash{$\hat{\mathcal{M}^*_i}$} satisfies $\epsilon$-DP, and by sequential composition theorem, each $\mathcal{M}^*_i$ satisfies $(\frac{\epsilon}{2C} + \frac{\epsilon}{2C})-DP$. Knowing that the input of each $\mathcal{M}^*_i$ is the data transformed from a $(\omega, q)$-BC transformation, where $\omega q = C$, thus the composed mechanism $\mathcal{M}_{ant}$ achieves $\epsilon$-DP guarantee. }




\end{proof}

\eat{
\begin{table}[]
\scalebox{0.98}{\small
\begin{tabular}{|rl|}
\hline
                                 & \multicolumn{1}{c|}{{\bf $\mathcal{M}_{\mathsf{timer}}(\mathcal{D}^b, \Delta, \epsilon, w, f, s, T)$}}\\
$\mathcal{M}_{\mathsf{update}}$: & $\forall i \in \mathbb{N}^{+}$, $\Gamma_i^b\gets\{\gamma_{t}\}_{iT \leq t \leq (i+1)T}$, $\hat{\Gamma}_i^b\gets\{\gamma_{t}\}_{iT-w \leq t < iT}$ \\
                                 & {\bf run} $\mathcal{M}_{\mathsf{unit}}(\Gamma_i^0, \Gamma_i^1, \hat{\Gamma}_i^0, \hat{\Gamma}_i^1, \epsilon)$\\
                                 & $\mathcal{M}_{\mathsf{unit}}$:\\
                                 & $\pi \gets \Gamma_i^0 \bowtie \Gamma_i^1 \cup \Gamma_i^0 \bowtie \hat{\Gamma}_i^1 \cup \hat{\Gamma}_i^0 \bowtie \Gamma_i^1$; \\
                                 & {\bf output} $\left(iT, |\pi| + \lap(\frac{\Delta}{\epsilon})\right)$\\
$\mathcal{M}_{\mathsf{flush}}$:  & $\forall j \in \mathbb{N}^{+}$, {\bf output} $\left(j\cdot f, ~s\right)$ \\
& \multicolumn{1}{c|}{{\bf $\mathcal{M}_{\mathsf{ant}}(\mathcal{D}^b, \Delta, \epsilon, w, f, s, \theta)$}}\\
$\mathcal{M}_{\mathsf{update}}$: & $\epsilon_1 = \epsilon_2 = \frac{\epsilon}{2}$, repeatedly run $\mathcal{M}_{\mathsf{sparse}(\epsilon_1, \epsilon_2, \theta)}$. \\
                                 & $\mathcal{M}_{\mathsf{sparse}}$:\\
                                 & $\tilde{\theta} = \theta + \lap(\frac{2\Delta}{\epsilon_1})$, $t^*\gets$ last time $\mathcal{M}_{\mathsf{sparse}}$'s output $\neq \perp$. \\
                                 & $\forall i\in \mathbb{N}^{+}$, $\Gamma_i^b\gets \{\gamma_t\}_{t^* \leq t \leq t^* + iT'}$, $\hat{\Gamma}_i^b\gets\{\gamma_{t}\}_{t^*-w \leq t < t^*}$; \\
                                 & $\pi \gets \Gamma_i^0 \bowtie \Gamma_i^1 \cup \Gamma_i^0 \bowtie \hat{\Gamma}_i^1 \cup \hat{\Gamma}_i^0 \bowtie \Gamma_i^1$; \\
                                 & {\bf output} $\begin{cases}    \left(t^* + iT', |\pi| + \lap(\frac{\Delta}{\epsilon_2})\right)  &  \text{if}~ \lap(\frac{4\Delta}{\epsilon_1}) + |\pi| \geq  \tilde{\theta},\\    \perp               &  \text{otherwise}.\end{cases}$ \\
                                 & {\bf abort} the first time when output $\neq \perp$.\\

$\mathcal{M}_{\mathsf{flush}}$:  & $\forall j \in \mathbb{N}^{+}$, {\bf output} $\left(j\cdot f, ~s\right)$  \\

\hline
\end{tabular}
}
\caption{Mechanisms to simulate the update pattern}
\label{tab:lmech}
\vspace{-4mm}
\end{table}
}
\vspace{-6mm}
\section{Experiments}\label{sec:exp}
In this section, we present evaluation results of our proposed framework. Specifically, we address the following questions:
\begin{itemize}
    \item \textbf{Question-1}: Do the view-based query answering approaches have efficiency advantages over the non-materialization (NM) approach? Also, how does the DP-based view update protocol compare with the na\"ive ones? 
    \item \textbf{Question-2}: For DP-protocols, is there a trade-off between privacy, efficiency and accuracy? Can we adjust the privacy parameters to achieve different efficiency or accuracy goals?
    \item \textbf{Question-3}: How do $\timer$ compare to the $\ant$? Under what circumstances is one better than the other one?
\end{itemize}
\vspace{-2mm}
\boldparagraph{Implementation and configuration.} We implement a prototype \system, and evaluate it with real-world datasets. We build the prototype \system based on Shrinkwrap~\cite{bater2018shrinkwrap}, a typical secure outsourced database scheme under the server-aided MPC setting. Shrinkwrap only supports static data and a standard query answering method. \system extends it to a view-based SOGDB that handles growing data.
In addition to the prototype \system, we also implement client programs that consume data from the given datasets and outsource them to the server, which simulate how real-world data owner devices would receive and outsource new data. We implement all secure 2PC protocols using EMP-Toolkit-0.2.1 package and conduct all experiments on the GCP instance with 3.8GHz Xeon CPU, 32Gb RAM, and 64 bit Ubuntu 18.04.1 OS. 

\vspace{-2mm}
\boldparagraph{Data.}
We evaluate the system using  two datasets: {\it TPC Data Stream (TPC-ds)}~\cite{tpcds}, and {\it Chicago Police Database (CPDB)}~\cite{cpdb}. TPC-ds collects the retail records for several product suppliers over a five-year period. In our evaluation, for TPC-ds, we mainly use two relational tables, the {\it Sales} and the {\it Return} table. After eliminating invalid data points with incomplete or missing values, the {\it Sales} and {\it Return} tables contain 2.2 million and 270,000 records, respectively.  CPDB is a living repository of public data about Chicago’s police officers and their interactions with the public. We primarily use two relations, the {\it  Allegation} table, which documents the results of investigations into allegations of police misconduct, and the {\it Award} table, which collects information on awards given to certain officers. The cleaned data (after eliminating invalid entries) contains 206,000 and 656,000 records for {\it Allegation} and {\it Award} table, respectively.


\vspace{-2mm}
\boldparagraph{Execution scenario \& Testing query.}  For TPC-ds data, we delegate each relational table to a client program, which then independently outsources the data to the servers. We multiplex the sales time ({\it Sales} table) or return time ({\it Return} table) associated with each data as an indication of when the client received it. In addition, we assume that the client program uploads a batch of data every single day and the uploaded data is populated to the maximum size.
In addition, we pick the following query for evaluating with TPC-ds.
\begin{itemize}
    \item Q1-Count the total number of products returned within 10 days after purchasing: ``\texttt{SELECT COUNT(*) FROM Sales INNER JON Returns ON Sales.PID = Returns.PID\\ WHERE Returns.ReturnDate - Sales.SaleDate <= 10}''
\end{itemize}
According to the testing query, we set the materialized view as a join table for all products that returned within 10 days. Furthermore, as Q1 has multiplicity 1, thus we set the truncation bound as $\omega=1$, and the total contribution budget for each data as $b=10$.

For CPDB data, we consider only {\it  Allegation} table is private and is delegated to a client program. The {\it  Award} table will be treated as a public relation. Again, we use the investigation case end time to indicate when the client program received this record, and we assume that the client outsource data once every 5 days (minimum time span is 5 days), and the data is padded to maximum possible size as well. For evaluation, we select the following query.

\begin{itemize}
    \item Q2-Count how many times has an officer received an award from the department despite the fact that the officer had been found to have misconduct in the past 10 days: ``\texttt{SELECT COUNT(*) FROM Allegation INNER JON Award ON \\ Allegation.officerID = Award.officerID\\ WHERE Award.Time - Allegation.officerID <= 10}''.
\end{itemize}
Similarly, the materialized view is a join table that process Q2. We set the truncation bound $\omega=10$ and budget for each data as $b=20$.

\vspace{-2mm}
\boldparagraph{Default setting.} Unless otherwise specified, we assume the following default configurations. For both DP protocols, we set the default privacy parameter $\epsilon=1.5$, and cache flush parameters as $f=2000$ (flush interval) and $s=15$ (flush size). We fix the $\ant$ threshold $\theta$ as 30 for evaluating both datasets. Since the average number of new view entries added at each time step is 2.7 and 9.8, respectively for TPC-ds and CPDB, thus for consistency purpose we set the timer $T$ to $10\gets\lfloor\frac{30}{2.7}\rfloor$ and $3\gets\lfloor\frac{30}{9.8}\rfloor$. For each test group, we issue one test query at each time step and report the average L1 error and query execution time (QET) for all issued testing queries.

\vspace{-1mm}
\subsection{End-to-end Comparison}\label{sec:e2ecmp}
We address {\bf Question-1} by performing a comparative analysis between DP protocols, na\"ive protocols (one-time materialization and exhaustive padding method), and the non-materialization approach (standard SOGDB model~\cite{wang2021dp}). The comparison results are summarized in Table~\ref{tab:agg} and Figure~\ref{fig:cmp}.
\begin{table}[]
\scalebox{0.86}{\small
\begin{tabular}{lllccccc}
\hline
\multicolumn{3}{|c|}{\textbf{Comparison Cat.}} &
  \textbf{DP-Timer} &
  \textbf{DP-ANT} &
  \textbf{OTM} &
  \textbf{EP} &
  \multicolumn{1}{c|}{\textbf{NM}} \\ \hline
\multicolumn{8}{|c|}{\textbf{Average query error}} \\ \hline
\multicolumn{1}{|l|}{\textbf{}} &
  \multicolumn{2}{l|}{\textbf{L1 Error}} &
  40.02 &
  32.01 &
  2008.92 &
  0 &
  \multicolumn{1}{c|}{0} \\
\multicolumn{1}{|l|}{\textbf{TPCds}} &
  \multicolumn{2}{l|}{\textbf{Relative Error}} &
  0.03 &
  0.029 &
  1 &
  N/A &
  \multicolumn{1}{c|}{N/A} \\
\multicolumn{1}{|l|}{} &
  \multicolumn{2}{l|}{\textbf{Imp.}$^\dag$} &
  {\color[HTML]{009901} \textbf{50$\times$}} &
  {\color[HTML]{009901} \textbf{63$\times$}} &
  {\color[HTML]{3531FF} \textbf{1$\times$}}$^\ddag$ &
  N/A &
  \multicolumn{1}{c|}{N/A} \\ \hline
\multicolumn{1}{|l|}{\textbf{}} &
  \multicolumn{2}{l|}{\textbf{L1 Error}} &
  61.93 &
  52.45 &
  6595.6 &
  0 &
  \multicolumn{1}{c|}{0} \\
\multicolumn{1}{|l|}{\textbf{CPDB}} &
  \multicolumn{2}{l|}{\textbf{Relative Error}} &
  0.043 &
  0.038 &
  1 &
  N/A &
  \multicolumn{1}{c|}{N/A} \\
\multicolumn{1}{|l|}{} &
  \multicolumn{2}{l|}{\textbf{Imp.}} &
  {\color[HTML]{009901} \textbf{107$\times$}} &
  {\color[HTML]{009901} \textbf{126$\times$}} &
  {\color[HTML]{3531FF} \textbf{1$\times$}} &
  N/A &
  \multicolumn{1}{c|}{N/A} \\ \hline
\multicolumn{8}{|c|}{\textbf{Average execution time (s)}} \\ \hline
\multicolumn{1}{|l|}{} &
  \multicolumn{2}{l|}{\textbf{$\trans$}} &
  9.72 &
  9.69 &
  N/A &
  9.71 &
  \multicolumn{1}{c|}{N/A} \\
\multicolumn{1}{|l|}{} &
  \multicolumn{2}{l|}{\textbf{$\sync$}} &
  0.34 &
  0.37 &
  N/A &
  N/A &
  \multicolumn{1}{c|}{N/A} \\
\multicolumn{1}{|l|}{\textbf{TPC-ds}} &
  \multicolumn{2}{l|}{\textbf{QET}} &
  0.051 &
  0.052 &
  0 &
  5.84 &
  \multicolumn{1}{c|}{7982} \\
\multicolumn{1}{|l|}{} &
  \multicolumn{2}{l|}{\textbf{Imp. (over NM)}} &
  {\color[HTML]{009901} \textbf{1.5e+5$\times$}} &
  {\color[HTML]{009901} \textbf{1.5e+5$\times$}} &
  N/A &
  {\color[HTML]{333333} 1366$\times$} &
  \multicolumn{1}{c|}{{\color[HTML]{3531FF} \textbf{1$\times$}}} \\
\multicolumn{1}{|l|}{} &
  \multicolumn{2}{l|}{\textbf{Imp. (over EP)}} &
  {\color[HTML]{009901} \textbf{115$\times$}} &
  {\color[HTML]{009901} \textbf{112$\times$}} &
  N/A &
  {\color[HTML]{3531FF} \textbf{1$\times$}} &
  \multicolumn{1}{c|}{{\color[HTML]{333333} N/A}} \\ \hline
\multicolumn{1}{|l|}{} &
  \multicolumn{2}{l|}{\textbf{$\trans$}} &
  2.93 &
  2.91 &
  0 &
  2.93 &
  \multicolumn{1}{c|}{N/A} \\
\multicolumn{1}{|l|}{} &
  \multicolumn{2}{l|}{\textbf{$\sync$}} &
  3.93 &
  3.77 &
  N/A &
  N/A &
  \multicolumn{1}{c|}{N/A} \\
\multicolumn{1}{|l|}{\textbf{CPDB}} &
  \multicolumn{2}{l|}{\textbf{QET}} &
  0.17 &
  0.17 &
  0 &
  51.36 &
  \multicolumn{1}{c|}{1341} \\
\multicolumn{1}{|l|}{} &
  \multicolumn{2}{l|}{\textbf{Imp. (over NM)}} &
  {\color[HTML]{009901} \textbf{7888$\times$}} &
  {\color[HTML]{009901} \textbf{7888$\times$}} &
  N/A &
  26.1$\times$&
  \multicolumn{1}{c|}{{\color[HTML]{3531FF} \textbf{1$\times$}}} \\
\multicolumn{1}{|l|}{} &
  \multicolumn{2}{l|}{\textbf{Imp. (over EP)}} &
  {\color[HTML]{009901} \textbf{302$\times$}} &
  {\color[HTML]{009901} \textbf{302$\times$}} &
  N/A &
  {\color[HTML]{3531FF} \textbf{1$\times$}} &
  \multicolumn{1}{c|}{{\color[HTML]{333333} N/A}} \\ \hline
\multicolumn{8}{|c|}{\textbf{Materialized view size (Mb)}} \\ \hline
\multicolumn{1}{|l|}{\textbf{TPC-ds}} &
  \multicolumn{2}{l|}{\textbf{Avg. Size}} &
  2.01 &
  2.04 &
  0.01 &
  229.65 &
  \multicolumn{1}{c|}{N/A} \\
\multicolumn{1}{|l|}{\textbf{}} &
  \multicolumn{2}{l|}{\textbf{Imp.}} &
  {\color[HTML]{009901} \textbf{114$\times$}} &
  {\color[HTML]{009901} \textbf{113$\times$}} &
  N/A &
  {\color[HTML]{3531FF} \textbf{1$\times$}} &
  \multicolumn{1}{c|}{N/A} \\ \hline
\multicolumn{1}{|l|}{\textbf{CPDB}} &
  \multicolumn{2}{l|}{\textbf{Avg. Size}} &
  6.63 &
  6.68 &
  0.01 &
  2017.38 &
  \multicolumn{1}{c|}{N/A} \\
\multicolumn{1}{|l|}{} &
  \multicolumn{2}{l|}{\textbf{Imp.}} &
  {\color[HTML]{009901} \textbf{304$\times$}} &
  {\color[HTML]{009901} \textbf{302$\times$}} &
  N/A &
  {\color[HTML]{3531FF} \textbf{1$\times$}} &
  \multicolumn{1}{c|}{N/A} \\ \hline
\multicolumn{8}{l}{$^\dag$ {\bf Imp.} denotes the improvements; $^\ddag$ {\color[HTML]{3531FF} \textbf{1$\times$}} denotes the comparison baseline;}
\end{tabular}}
\caption{Aggregated statistics for comparison experiments}
\label{tab:agg}
\vspace{-4mm}
\end{table}

\vspace{-1mm}
\boldparagraph{Observation 1. View-based query answering provides a significant performance improvements over NM method.} As per Table~\ref{tab:agg}, we observe that the non-materialization method is the least efficient group among all groups. In terms of the average QET, the DP protocols achieve performance improvements of up to 1.5e+5$\times$ and 7888$\times$ on TPC-ds and CPDB data, respectively, in contrast to the NM approach. Even the EP method provides a performance edge of up to 1366$\times$ over NM approach. This result further demonstrates the necessity of adopting view-based query answering mechanism. A similar observation can be learned from Figure~\ref{fig:cmp} as well, where in each figure we compare all test candidates along with the two dimensions of accuracy (x-axis) and efficiency (y-axis). In all figures, the view-based query answering groups lie beneath the NM approach, which indicates better performance.
\begin{figure}[ht]
\captionsetup[sub]{font=small,labelfont={bf,sf}}
    \begin{subfigure}[b]{0.48\linewidth}
    \centering    \includegraphics[width=1\linewidth]{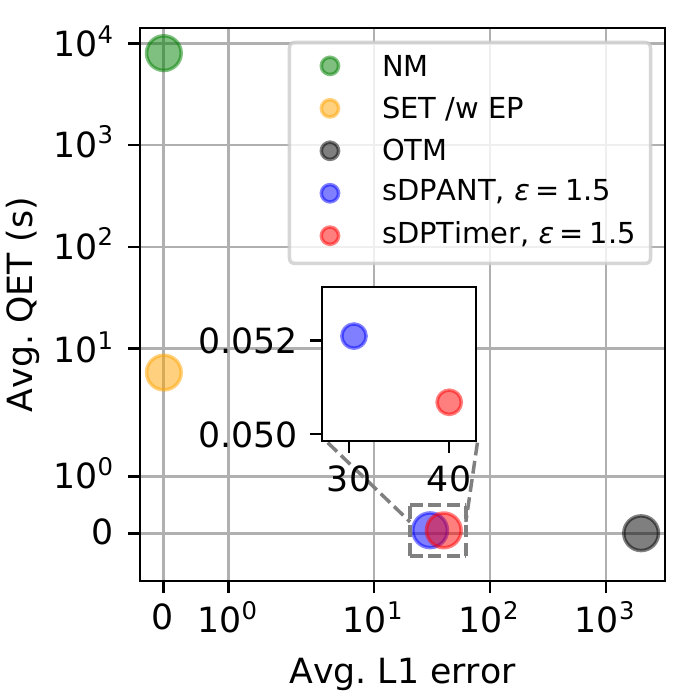}
        \caption{TPC-ds group}
        \label{fig:cmp-acc}\end{subfigure}
      \begin{subfigure}[b]{0.48\linewidth}
    \centering    \includegraphics[width=1\linewidth]{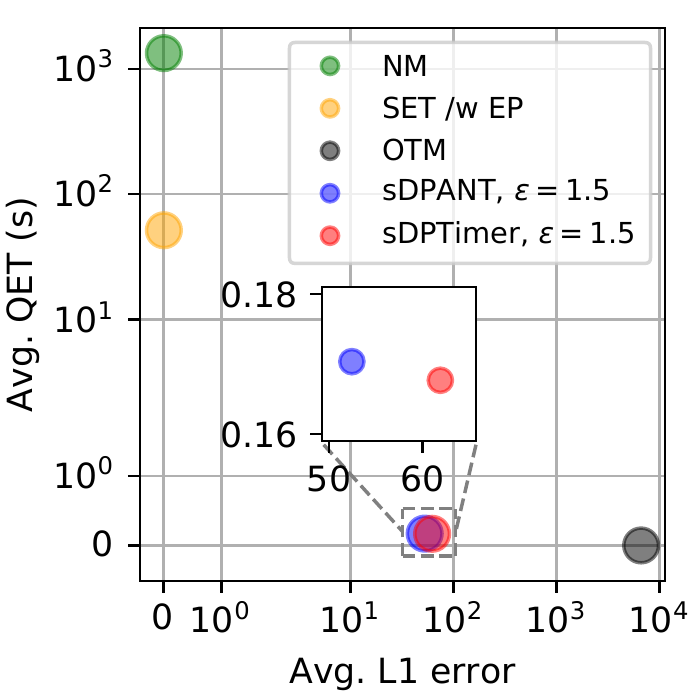}  
        \caption{CPDB group}
        \label{fig:cmp-perf}
    \end{subfigure}
    \vspace{-3mm}
   \caption{End-to-end comparison.}
   \label{fig:cmp}
\end{figure}

\vspace{-2mm}
\boldparagraph{Observation 2. DP protocols provide a balance between the two dimensions of accuracy and efficiency.} 
According to Table~\ref{tab:agg}, the DP protocols demonstrate at least 50$\times$ and 107$\times$ accuracy advantages (in terms of L1-error), respectively for TPC-ds and CPDB, over the OTM method. Meanwhile, in terms of performance, the DP protocols show a significant improvement in contrast to the EP method. For example, in TPC-ds group, the average QETs of both $\timer$ (0.051s) and $\ant$ (0.052s) are almost 120$\times$ smaller than that of EP method (5.84s). Such performance advantage is even evident (up to 302$\times$) over the CPDB data as testing query Q2 has join multiplicity greater than 1. Although, the DP approaches cannot achieve a complete accuracy guarantee, the average relative errors of all tested queries under DP protocols are below 4.3\%. These results are sufficient to show that DP approaches do provide a balance between accuracy and efficiency. This conclusion can be better illustrated with Figure~\ref{fig:cmp}, where we can observe that EP and OTM are located in the upper left and lower right corners of each plot, respectively, which indicates that they either completely sacrifice efficiency (EP) or accuracy (OTM) guarantees. However, both DP methods lie at the bottom-middle position of both figures, which further reveals that the DP protocols are optimized for the dual objectives of accuracy and efficiency. 

\vspace{-2mm}
\subsection{3-Way Trade-off}
We address {\bf Question-2} by evaluating the DP protocols with different $\epsilon$ ranging from $0.01$ to $50$. 
\begin{figure}[ht]
\captionsetup[sub]{font=small,labelfont={bf,sf}}
    \begin{subfigure}[b]{0.48\linewidth}
    \centering    \includegraphics[width=1\linewidth]{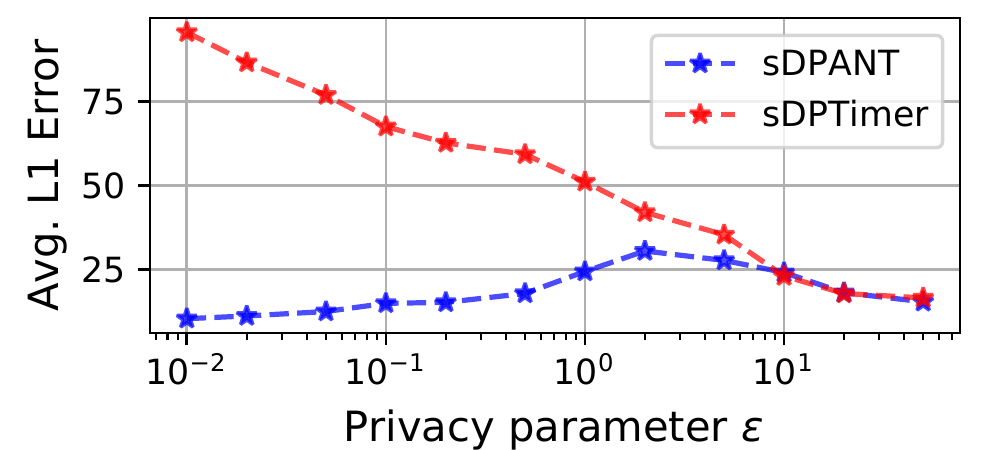}
    
        \caption{Privacy vs. Accuracy (TPC-ds)}
        \label{fig:tpc-eps-acc}\end{subfigure}
      \begin{subfigure}[b]{0.48\linewidth}
    \centering    \includegraphics[width=1\linewidth]{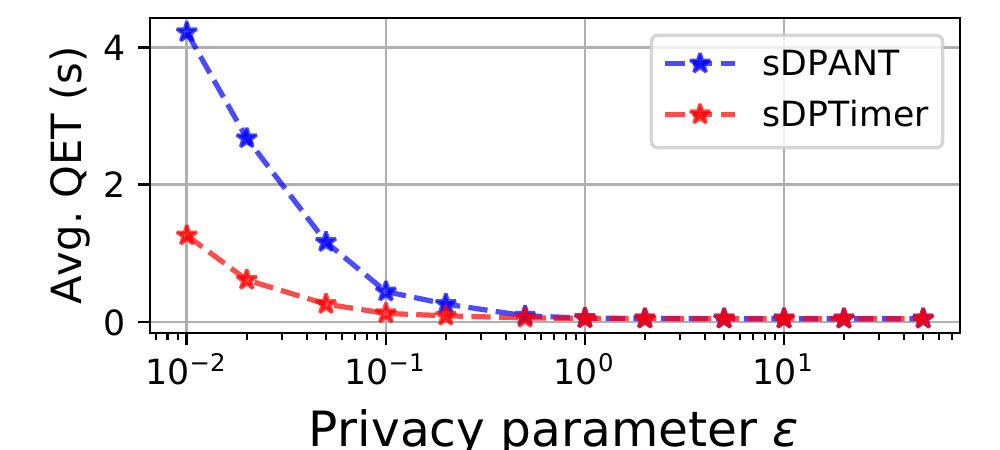}  
        \caption{Privacy vs. Efficiency (TPC-ds)}
        \label{fig:tpc-eps-perf}
    \end{subfigure}
    \newline
    \begin{subfigure}[b]{0.48\linewidth}
    \centering    \includegraphics[width=1\linewidth]{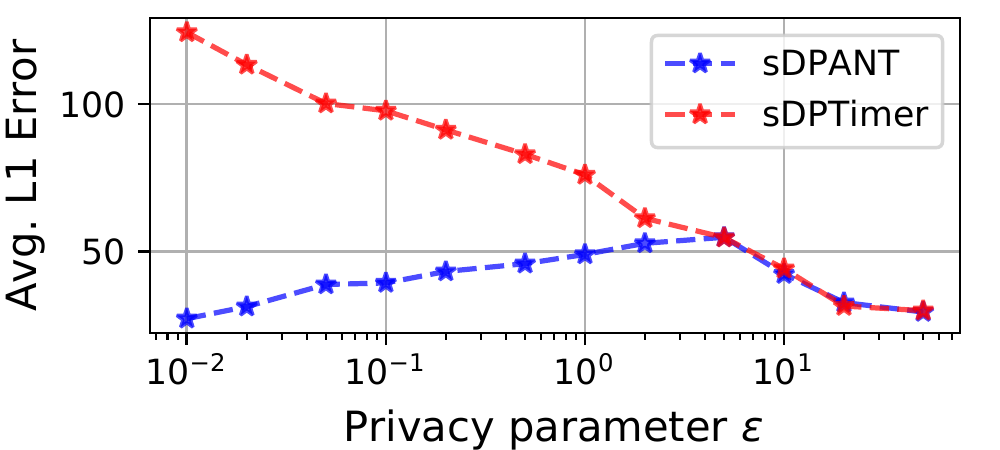}
        \caption{Privacy vs. Accuracy (CPDB)}
        \label{fig:cpdb-eps-acc}\end{subfigure}
      \begin{subfigure}[b]{0.48\linewidth}
    \centering    \includegraphics[width=1\linewidth]{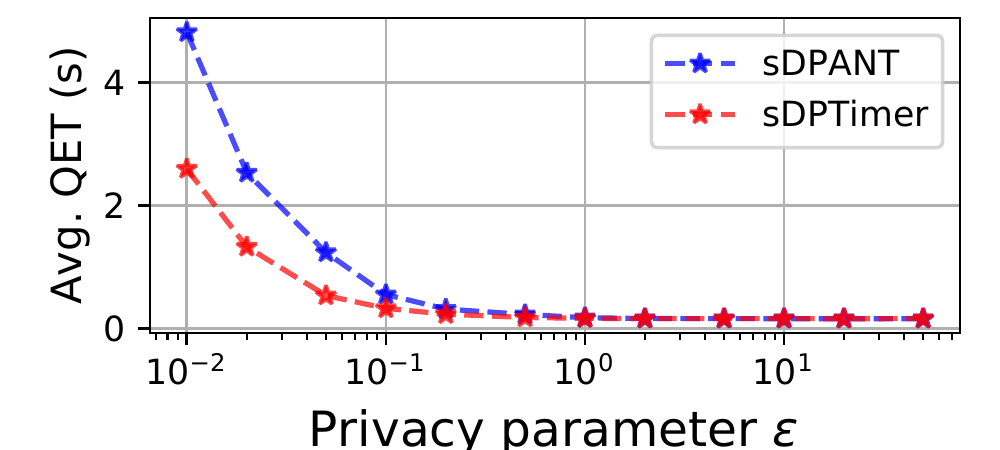}  
        \caption{Privacy vs. Efficiency (CPDB)}
        \label{fig:cpdb-eps-perf}
    \end{subfigure}
    \vspace{-3mm}
   \caption{Trade-off experiment.}
   \label{fig:trade-off}
\end{figure}

\vspace{-1mm}
\boldparagraph{Observation 3. $\timer$ and $\ant$ exhibit different privacy-accuracy trade-off.} The accuracy-privacy trade-off evaluation is summarized in Figure~\ref{fig:tpc-eps-acc} and~\ref{fig:cpdb-eps-acc}. In general, as $\epsilon$ increases from 0.01 to 50, we observe a consistent decreasing trend in the average L1 error for $\timer$, while the mean L1 error for $\ant$ first increases and then decreases. According to our previous discussion, the error upper bound of $\timer$ is given by \smash{$c^* + O(\frac{2b\sqrt{k}}{\epsilon})$}, where $c^*$ denotes the cached new entries since last update, and \smash{$O(\frac{2b\sqrt{k}}{\epsilon})$} is the upper bound for the deferred data (Theorem~\ref{lg:timer}). As $\timer$ has a fixed update frequency, thus $c^*$ is independent of $\epsilon$. However, the amount of deferred data is bounded by \smash{$O(\frac{2b\sqrt{k}}{\epsilon})$}, which leads to a decreasing trend in the L1 error of $\timer$ as $\epsilon$ increases. On the other hand, the update frequency of $\ant$ is variable and will be affected accordingly when $\epsilon$ changes. For example, a relatively small $\epsilon$  (large noise) will result in more frequent updates. As large noises can cause $\ant$ to trigger an update early before enough data has been placed in the secure cache. As a result, a relatively small $\epsilon$ will lead to a correspondingly small $c^*$, which essentially produces smaller query errors. Additionally, when $\epsilon$ reaches a relatively large level, its effect on $\ant$'s update frequency becomes less significant. Increasing $\epsilon$ does not affect $c^*$ much, but causes a decrease in the amount of deferred data (bouned by \smash{$O(\frac{16\log{t}}{\epsilon})$} as shown in  Theorem~\ref{one:ant}). This explains why there is a decreasing trend of $\ant$'s L1 error after $\epsilon$ reaches a relatively large level. Nevertheless, both protocols show a privacy-accuracy trade-off, meaning that users can actually adjust privacy parameters to achieve their desired accuracy goals.

\vspace{-1mm}
\boldparagraph{Observation 4. DP protocols have similar privacy-efficiency trade-off.} Both DP protocols show similar trends in terms of efficiency metrics (Figure~\ref{fig:tpc-eps-perf} and ~\ref{fig:cpdb-eps-perf}), that is when $\epsilon$ increases, the QET decreases. It is because with a relatively large $\epsilon$, the number of dummy data included in the view will be reduced, thus resulting in a subsequent improvement in query efficiency. Thus, similar to the accuracy-privacy trade-off, the DP protocols also provide a privacy-efficiency trade-off that allow users to tune the privacy parameter $\epsilon$ in order to obtain their desired performance goals.

\vspace{-3mm}
\subsection{Comparison Between DP Protocols}
We address {\bf Question-3} by comparing the two DP protocols over different type of workloads. In addition to the standard one, for each dataset, we create two additional datasets. 
First, we sample data from the original data and create a {\it Sparse} one, where the total number of view entries is 10\% of the standard one. Second, we process {\it Burst} data by adding data points to the original dataset, where the resulting data has 2$\times$ more view entries. 
\begin{figure}[ht]
\captionsetup[sub]{font=small,labelfont={bf,sf}}
    \begin{subfigure}[b]{0.48\linewidth}
    \centering    \includegraphics[width=1\linewidth]{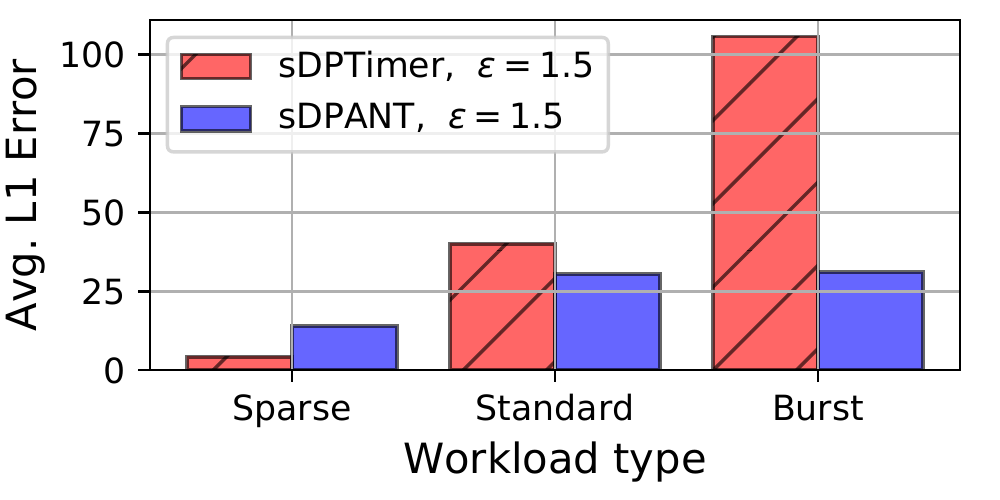}
        \caption{Workload vs. Accuracy (TPC-ds)}
        \label{fig:tpc-ld-acc}\end{subfigure}
      \begin{subfigure}[b]{0.48\linewidth}
    \centering    \includegraphics[width=1\linewidth]{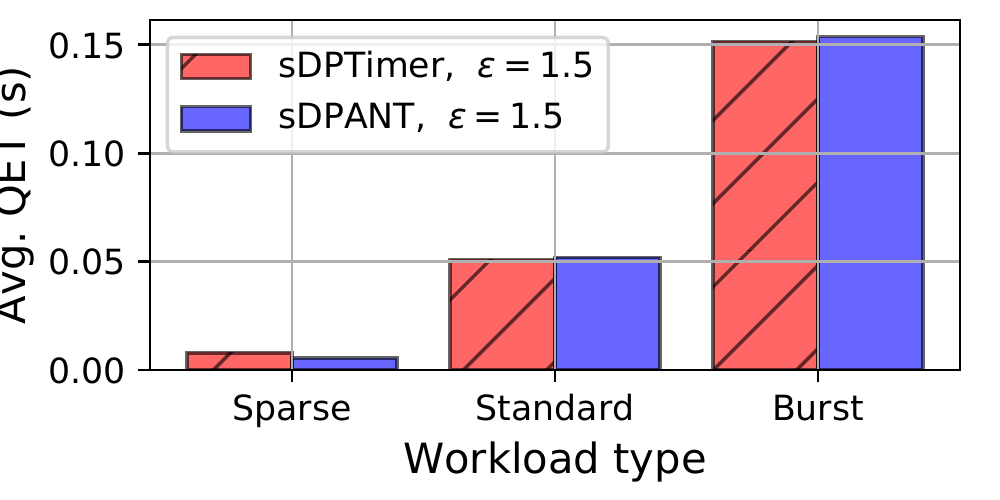}  
        \caption{Workload vs. Efficiency (TPC-ds)}
        \label{fig:tpc-ld-perf}
    \end{subfigure}
    \newline
    \begin{subfigure}[b]{0.48\linewidth}
    \centering    \includegraphics[width=1\linewidth]{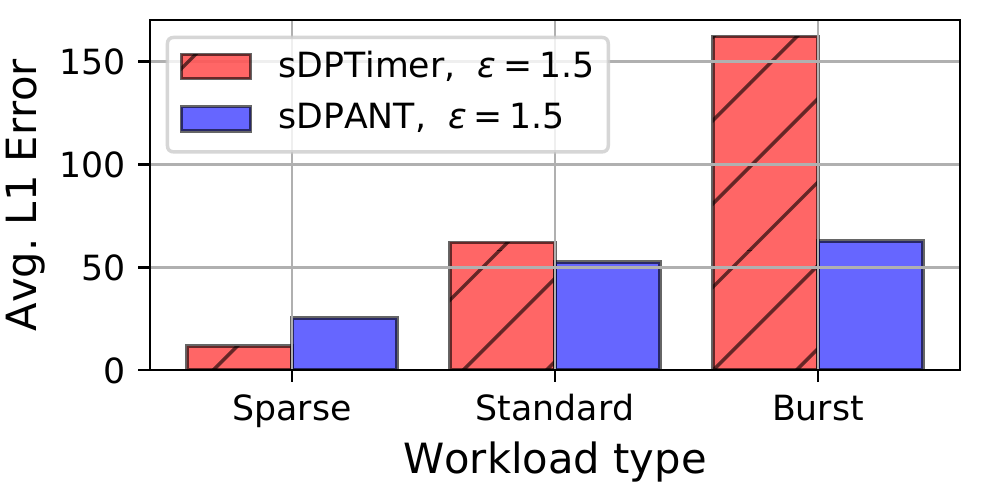}
        \caption{Workload vs. Accuracy (CPDB)}
        \label{fig:cmp-acc}\end{subfigure}
      \begin{subfigure}[b]{0.48\linewidth}
    \centering    \includegraphics[width=1\linewidth]{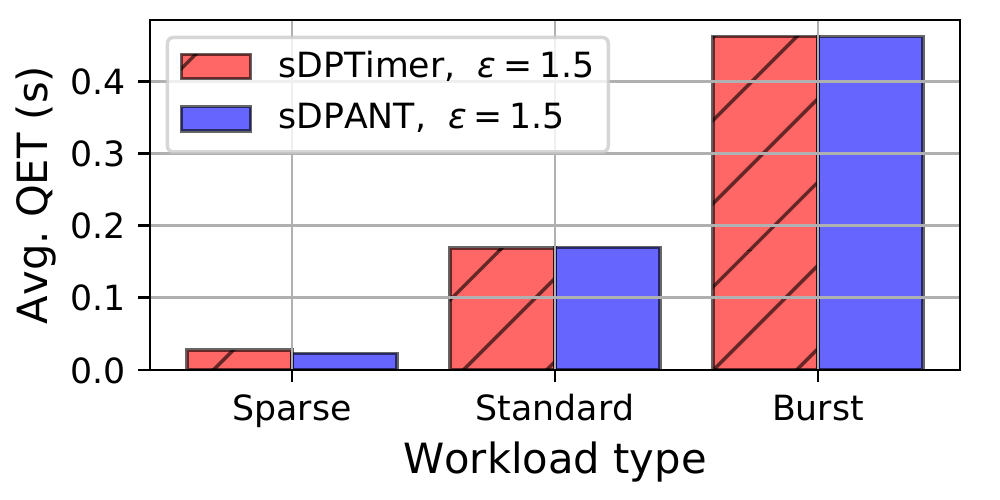}  
        \caption{Workload vs. Efficiency (CPDB)}
        \label{fig:cmp-perf}
    \end{subfigure}
    \vspace{-3mm}
   \caption{DP protocols under different workloads.}
   \label{fig:ld}
\end{figure}

\vspace{-1mm}
\boldparagraph{Observation 5. $\timer$ and $\ant$ show accuracy advantages in processing {\it Sparse} and {\it Burst} data, respectively.} According to Figure~\ref{fig:ld}, $\timer$ shows a relatively lower L1 error in the {\it Sparse} group than $\ant$. It is because it can take a relatively long time to have a new view entry when processing {\it Sparse} data. Applying $\ant$ will cause some data to be left in the secure cache for a relatively long time. However, $\timer$'s update schedule is independent of the data workload type, so when the load becomes very sparse, the data will still be synchronized on time. This explains why $\timer$ shows a better accuracy guarantee against $\ant$ for sparse data. On the contrary, when the data becomes very dense, i.e., there is a burst workload, the fixed update rate of $\timer$ causes a large amount of data to be stagnant in the secure cache. And thus causes significant degradation of the accuracy guarantee. However, $\ant$ can adjust the update frequency according to the data type, i.e., the denser the data, the faster the update. This feature gives $\ant$ an accuracy edge over $\timer$ when dealing with burst workloads. On the other hand, both methods show similar efficiency for all types of test datasets.
\begin{figure}[ht]
\captionsetup[sub]{font=small,labelfont={bf,sf}}
    \begin{subfigure}[b]{0.3\linewidth}
    \centering    \includegraphics[width=1\linewidth]{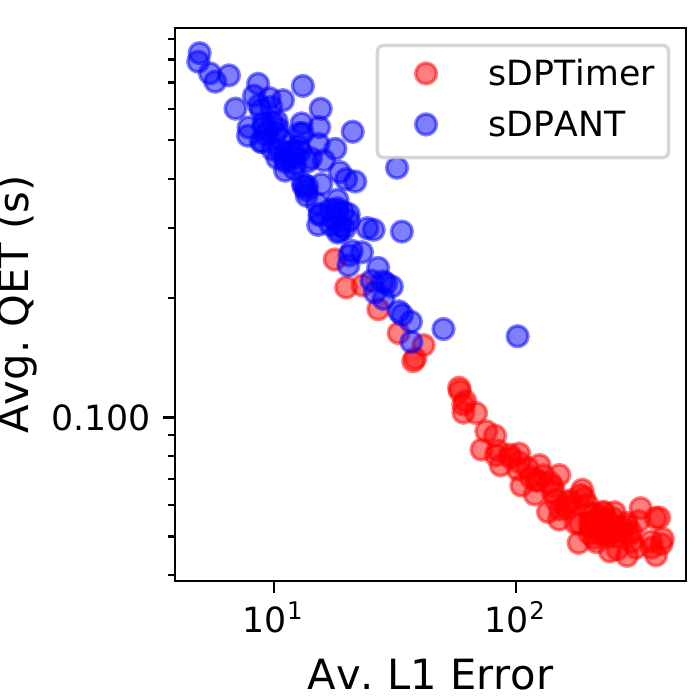}
        \caption{TPC-ds, $~~~\epsilon=0.1$}
        \label{fig:tpc-0.1-cmp}\end{subfigure}
      \begin{subfigure}[b]{0.3\linewidth}
    \centering    \includegraphics[width=1\linewidth]{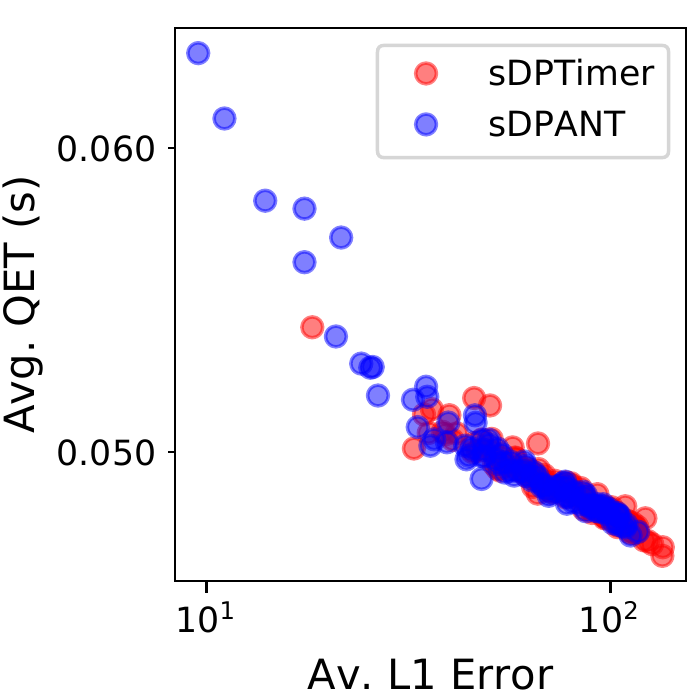}
        \caption{TPC-ds, $~~~\epsilon=1$ }
        \label{fig:tpc-1-cmp}
    \end{subfigure}
    \begin{subfigure}[b]{0.3\linewidth}
    \centering    \includegraphics[width=1\linewidth]{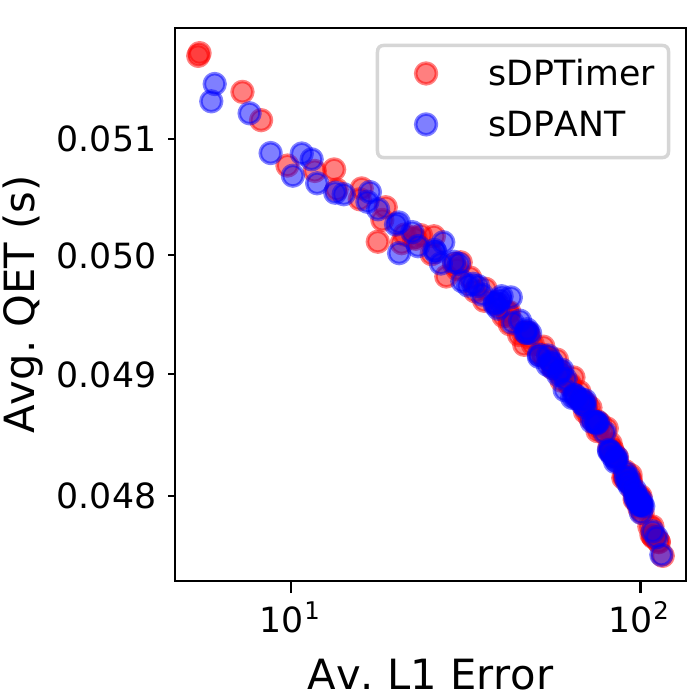}  
        \caption{TPC-ds, $~~~\epsilon=10$ }
        \label{fig:tpc-10-cmp}
    \end{subfigure}
    \newline
    \begin{subfigure}[b]{0.3\linewidth}
    \centering    \includegraphics[width=1\linewidth]{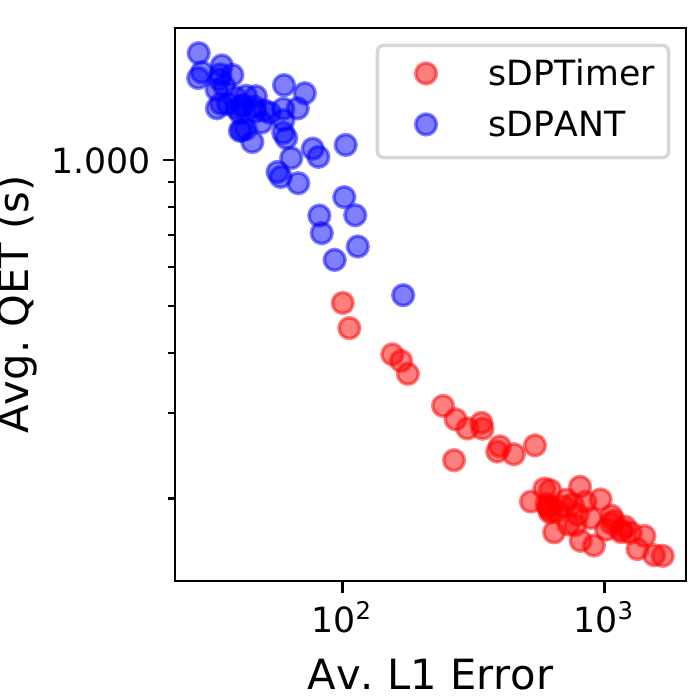}
        \caption{CPDB, $\epsilon=0.1$}
        \label{fig:cpdb-0.1-cmp}\end{subfigure}
      \begin{subfigure}[b]{0.3\linewidth}
    \centering    \includegraphics[width=1\linewidth]{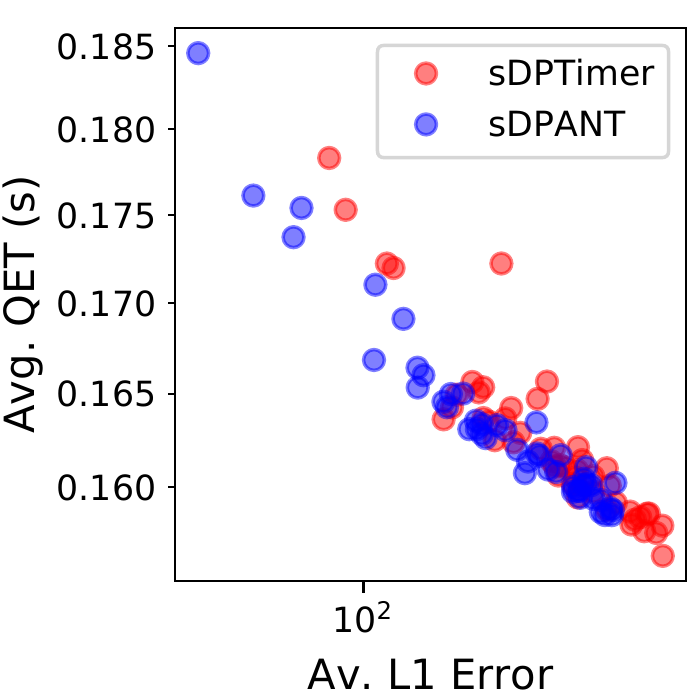}
        \caption{CPDB, $\epsilon=1$}
        \label{fig:cpdb-1-cmp}
    \end{subfigure}
    \begin{subfigure}[b]{0.3\linewidth}
    \centering    \includegraphics[width=1\linewidth]{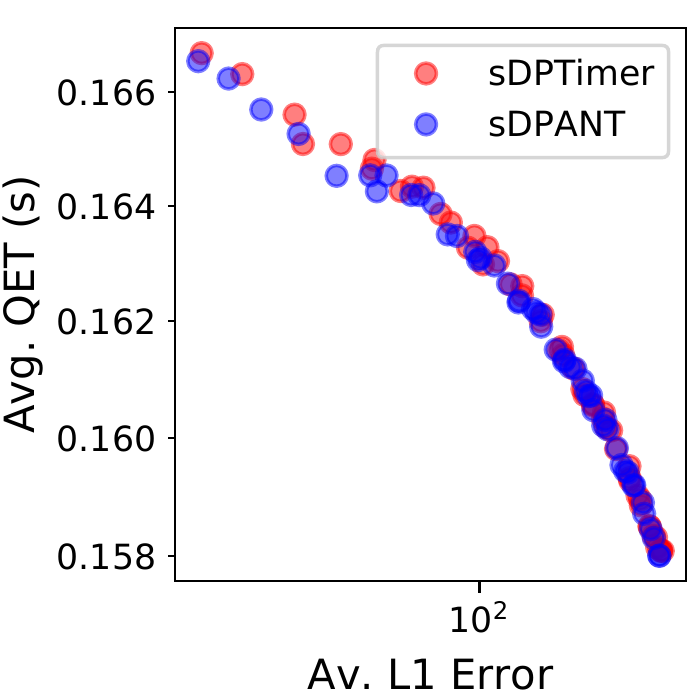}
        \caption{CPDB, $\epsilon=10$}
        \label{fig:cpdb-10-cmp}
    \end{subfigure}
    \vspace{-2mm}
   \caption{DP approaches under different workload.}
   \label{fig:dpcmp}
\end{figure}

Additionally, we also compare the two protocols with varying non-privacy parameters, i.e. $T$ and $\theta$, where we fix the $\epsilon$ then change $T$ from 1-100, and correspondingly set $\theta$ according to $T$ (As mentioned before, the average new view entries per moment are 2.7 and 9.8 for TPC-ds and CPDB data, respectively, thus we set $\theta$ to $3T$ and $10T$). We test the protocols with three privacy levels $\epsilon=0.1, 1$ and $10$ and report their comparison results in Figure~\ref{fig:dpcmp}.

\vspace{-1mm}
\boldparagraph{Observation 6. When $\epsilon$ is small, two DP protocols have different biases in terms of accuracy and performance.} According to Figure~\ref{fig:tpc-0.1-cmp} and~\ref{fig:cpdb-0.1-cmp}, when $\epsilon=0.1$, the data points for the $\ant$ locate in the upper left corner of both figures, while the $\timer$ results fall on the opposite side, in the lower right corner. This implies that when $\epsilon$ is relatively small (privacy level is high), $\ant$ tends to favor accuracy guarantees more, but at the expense of a certain level of efficiency. On the contrary, $\timer$ biases the efficiency guarantee. As per this observation, if users have strong demands regarding privacy and accuracy, then they should adopt $\ant$. However, if they have restrictive requirements for both privacy and performance, then $\timer$ is a better option. Moreover, the aforementioned deviations decrease when $\epsilon$ increases (Figure~\ref{fig:tpc-1-cmp}. In addition, when $\epsilon$ reaches a relatively large value, i.e $\epsilon=10$, both DP protocols essentially offer the same level of accuracy and efficiency guarantees. For example, for each "red" point in Figures~\ref{fig:tpc-10-cmp} and~\ref{fig:cpdb-10-cmp}, one can always find a comparative "blue" dot.

\vspace{-2mm}
\re{\subsection{Evaluation with Different $\omega$}\label{sec:exp-omega}
In this section, we investigate the effect of truncation bounds by evaluating \system under different $\omega$ values. Since the multiplicity of Q1 is 1, the $\omega$ for answering Q1 is fixed to 1. Hence, in this evaluation, we focus on Q2 over the CPDB data. We pick different $\omega$ values from the range of 2 to 32 and set the contribution budget as $b=2\omega$. The result is reported in Figure ~\ref{fig:trunc}.

\begin{figure}[ht]
\captionsetup[sub]{font=small,labelfont={bf,sf}}
    \begin{subfigure}[b]{0.48\linewidth}
    \centering    \includegraphics[width=1\linewidth]{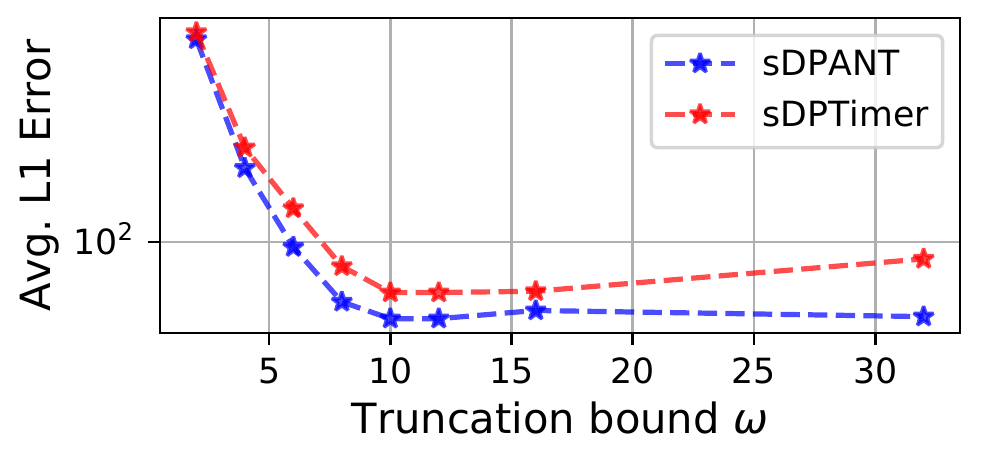}
        \caption{Query accuracy vs. $\omega$ }
        \label{fig:trunc-acc}\end{subfigure}
      \begin{subfigure}[b]{0.48\linewidth}
    \centering    \includegraphics[width=1\linewidth]{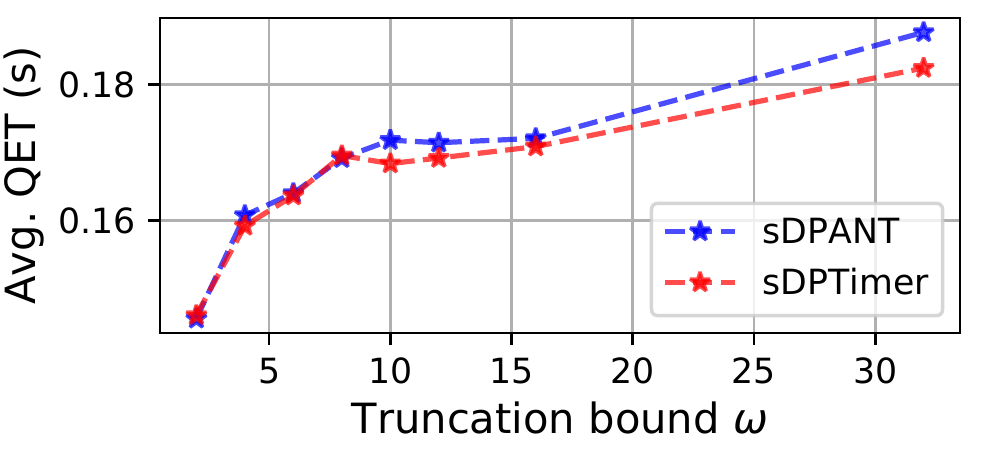}  
        \caption{Query efficiency vs. $\omega$}
        \label{fig:trunc-perf}
    \end{subfigure}
    \newline
    \begin{subfigure}[b]{0.48\linewidth}
    \centering    \includegraphics[width=1\linewidth]{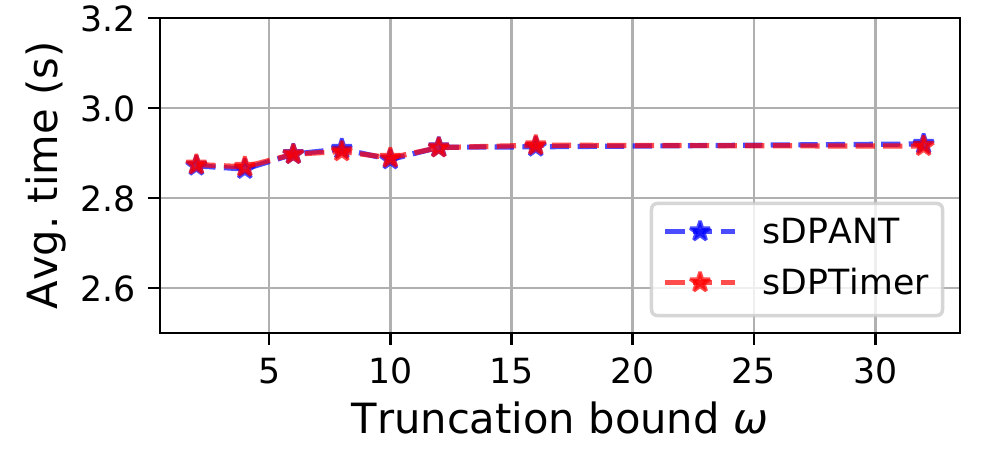}
        \caption{Avg. $\trans$ execution time }
        \label{fig:trunc-trans}\end{subfigure}
      \begin{subfigure}[b]{0.48\linewidth}
    \centering    \includegraphics[width=1\linewidth]{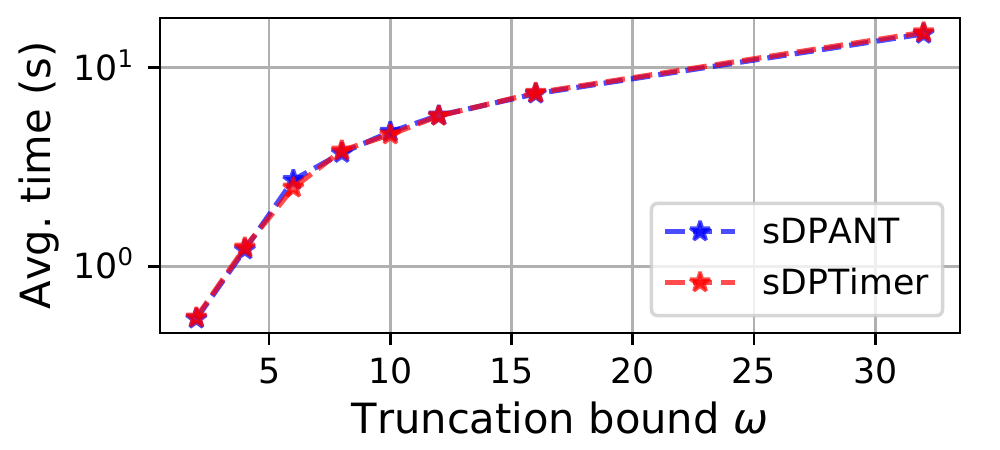}  
        \caption{Avg. $\sync$ execution time}
        \label{fig:trunc-sync}
    \end{subfigure}
    \vspace{-3mm}
   \caption{Evaluations with different truncation bound $\omega$.}
   \label{fig:trunc}
\end{figure}

\vspace{-2mm}
\boldparagraph{Observation 7. As $\omega$ grows from a small value, query accuracy increases and query efficiency decreases quickly. After $\omega$ reaches a relatively large value, $\timer$ and $\ant$ exhibit different trends in accuracy, but the same tendency in efficiency.} As per Figure~\ref{fig:trunc}, the average L1 error decreases when $\omega$ grows from $\omega=2$. This is because when $\omega$ is small, many true view entries are dropped by the $\trans$ protocol due to truncation constraint, which leads to larger L1 query errors. However, when $\omega$ reaches a relatively large value, i.e., greater than the maximum record contribution, then no real entries are discarded. At this point, increasing $\omega$ only leads to the growth of injected DP noises. As we have analyzed before, the accuracy under $\ant$ can be better for relatively large noise, but the accuracy metric will be worse under $\timer$ method. On the other hand, dropping a large number of real entries (when $\omega$ is small) leads to a smaller materialized view, which consequently improves query efficiency. When $\omega$ is greater than the maximum record contribution, based on our analysis in Observation 4, keep increasing $\omega$ leads to both methods to introduce more dummy data to the view and causes its size to keep growing. As such, the efficiency continues decreasing.

\vspace{-1mm}
\boldparagraph{Observation 8. The average $\sync$ execution time increases along with the growth of $\omega$, while the average execution time of $\trans$ tends to be approximately the same.} The reason for this tendency is fairly straightforward. The execution time of both $\trans$ and $\sync$ protocols is dominated by the oblivious input sorting. The input size of the $\trans$ protocol is only related to the size of data batches submitted by the users. Therefore, changing $\omega$ does not affect the efficiency of $\trans$ execution. However, the input size of $\sync$ is tied to $\omega$, so as $\omega$ grows, the execution time of $\sync$ increases.}

\vspace{-2mm}
\subsection{Scaling Experiments}
We continue to evaluate our framework with scaling experiments. To generate data with different scales, we randomly sample or replicate the original TPC-ds and CPDB data (We assign new primary key values to the replicated rows to prevent conflicts). According to Figure~\ref{fig:scale}, for the largest dataset, i.e., the $4\times$ groups, the total MPC time are around 24 and 6 hours, respectively for TPC-ds and CPDB. However, it is worth mentioning that for the $4\times$ group, TPC-ds has 8.8 million and 1.08 million records in the two testing tables, and CPDB has 800K and 2.6 million records for $Allegation$ and $Award$ tables, respectively.  This shows the practical scalability of our framework. In addition, the total query time for 4$\times$ TPC-ds and 4$\times$ CPDB groups are within 400 and 630 seconds, respectively.
\begin{figure}[ht]
\captionsetup[sub]{font=small,labelfont={bf,sf}}
    \begin{subfigure}[b]{0.48\linewidth}
    \centering    \includegraphics[width=1\linewidth]{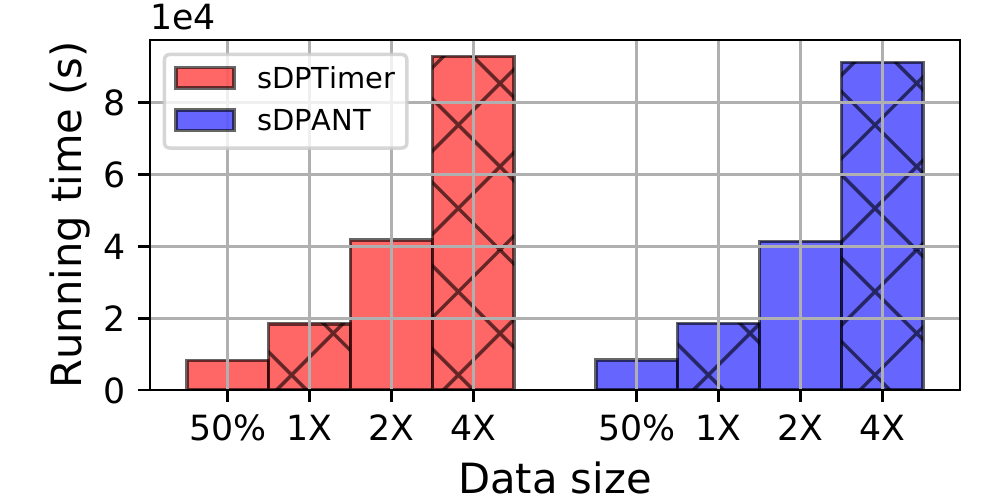}
        \caption{Total MPC time (TPC-ds)}
        \label{fig:tpc-ld-acc}\end{subfigure}
      \begin{subfigure}[b]{0.48\linewidth}
    \centering    \includegraphics[width=1\linewidth]{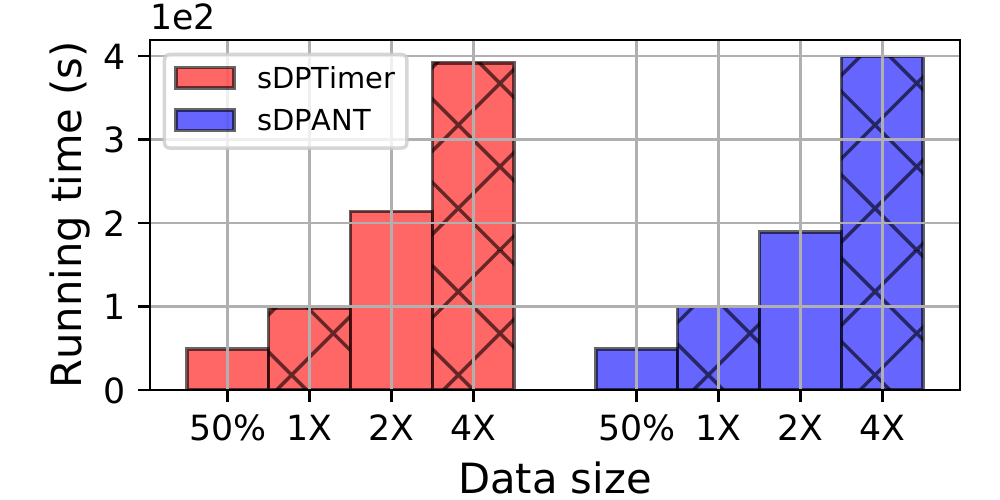}  
        \caption{Total query time (TPC-ds)}
        \label{fig:tpc-ld-perf}
    \end{subfigure}
    \newline
    \begin{subfigure}[b]{0.48\linewidth}
    \centering    \includegraphics[width=1\linewidth]{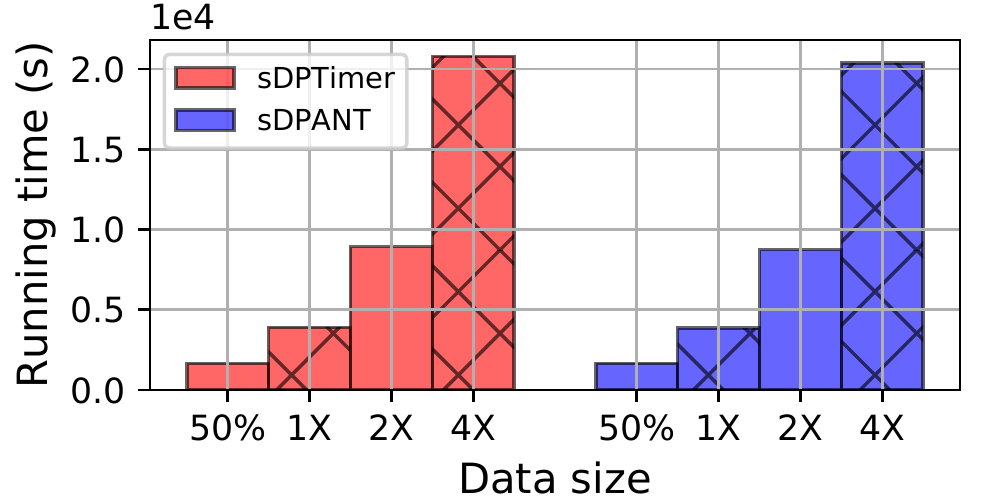}
        \caption{Total MPC time (CPDB)}
        \label{fig:tpc-ld-acc}\end{subfigure}
      \begin{subfigure}[b]{0.48\linewidth}
    \centering    \includegraphics[width=1\linewidth]{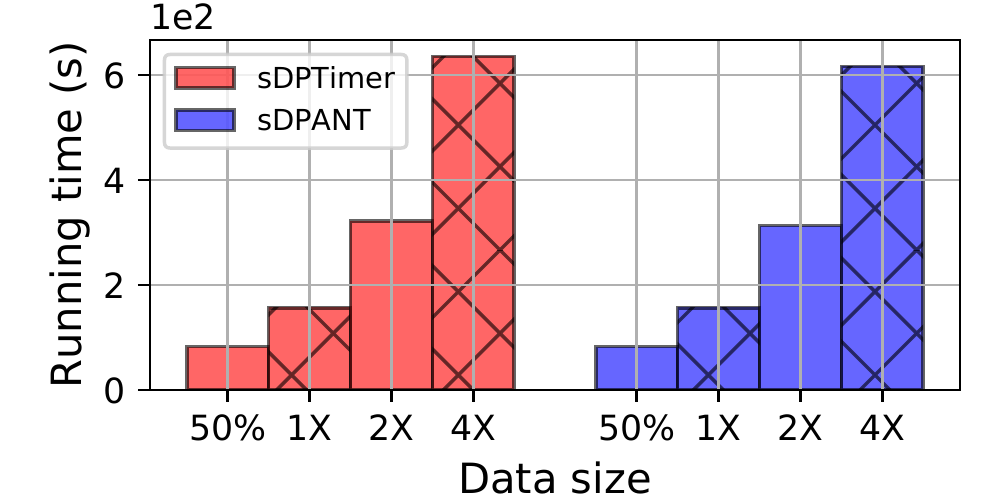}  
        \caption{Total query time (CPDB)}
        \label{fig:tpc-ld-perf}
    \end{subfigure}
    \vspace{-3mm}
\caption{Scaling experiments}
\label{fig:scale}
\end{figure}
\vspace{-3mm}
\section{Extensions}\label{sec:ext}
\label{sec:discussion}
We discuss potential extensions of the original \system design.  

\vspace{-1.5mm}
\boldparagraph{Connecting with DP-Sync.}
For ease of demonstration, in the prototype design, we assume that data owners submit a fixed amount of data at fixed intervals. However, \system is not subject to this particular record synchronization strategy. Owners can choose other private update policies such as the ones proposed in DP-Sync, and can also adapt our framework. Additionally, the view update protocol requires no changes or recompilation as long as the view definition does not change. On the other hand, privacy will still be 
ensured under the composed system that connects \system with DP-Sync. For example, assume the owner adopts a record synchronization strategy that ensures $\epsilon_1$-DP and the server is deployed with \system that guarantees $\epsilon_2$-DP with respect to the owner's data. By sequential composition theorem~\cite{dwork2014algorithmic}, revealing their combined leakage ensures $(\epsilon_1 + \epsilon_2)$-DP over the owner's data. Similarly, such composability can also be obtained in terms of the accuracy guarantee. For instance, let's denote the error bound for the selected record synchronization policy as $\alpha_r$ (total number of records not uploaded in time). Then by Theorem~\ref{lg:timer} and~\ref{one:ant}, the combined system ensures error bounds \smash{$O(b\alpha_r+\frac{2b}{\epsilon}\sqrt{k})$} and \smash{$O(b\alpha_r+\frac{16b\log{t}}{\epsilon})$} under $\timer$ and $\ant$ protocol, respectively. Interested readers may refer to our full version for the complete utility proofs.

\vspace{-1.5mm}
\boldparagraph{Support for complex query workloads.} Now we describe how to generalize the view update protocol for complex query workloads, i.e. queries that can be written as a composite of multiple relational algebra operators. Apparently, one can directly replicate the design of this paper to support complex queries by first compiling a $\trans$ protocol that produces and caches the corresponding view tuples based on the specified query plan, while a $\sync$ protocol is used independently to continuously synchronize the cached data. However, there exists another design pattern that utilizes multi-level ``Transform-and-Shrink'' protocol. For example, we can disassemble a query into a series of operators and then construct an independent "Transform-and-Shrink" protocol for each individual operator. Moreover, the output of one "Transform-and-Shrink" protocol can be the input of another one, which eventually forms a multi-level view update protocol. There are certain benefits of the multi-level design, for instance, one can optimize the system efficiency via operator level privacy allocation~\cite{bater2018shrinkwrap}. Recall that in Section~\ref{sec:dp-st} we discussed that the choice of privacy budget affects the number of dummy records processed by the system, with a higher proportion of dummy records reducing overall performance and vice versa. To maximize performance, one can construct an optimization problem that maximizes the efficiency of all operators in a given query, while maintaining the desired accuracy level. With a privacy budget allocation determined by the optimization problem, each operator can carry out its own instance of \system, minimizing the overall computation cost while satisfying desired privacy and accuracy constraints. Note that optimization details are beyond the scope of this paper but may be of independent interest and we leave the design of these techniques to future work.

\vspace{-1.5mm}
\re{\boldparagraph{Expanding to multiple servers.} Although in our prototype \system design we assume to leverage 2 non-colluding servers, the system architecture can be modified to work with multiple servers. In what follows, we summarize the major modifications that would extend our current design to a $N$ servers setup such that $N \geq 2$.
 Firstly, the owners need to share their local data using the $(N,N)$-secret-sharing scheme, and disseminate one share per participating server. In addition, for all outsourced objects, such as the secure cache, the materialized view, and parameters passed between view update protocols, must be stored on the $N$ servers in a secret shared manner. Secondly, both $\trans$ and $\sync$ protocol will be compiled as a general MPC protocol  where $N$ parties (servers) provide their confidential input and evaluate the protocol altogether.  Finally, when generating DP noises, each server needs to contribute a random bit string to the MPC protocol, which subsequently aggregates the $N$ random strings to obtain the randomness used for noise~generation. Note that our joint noise addition mechanism ensures to produce only one instance of DP noise, thus expanding to $N$ servers setting does not lead to injecting more noise. According to~\cite{keller2018overdrive, keller2013architecture}, such design can tolerate up to $N-1$ server corruptions.
}
\vspace{-2mm}

\eat{If \system is not used, then the operator's output records are materialized at fixed time intervals, independent of the input data. In this case, the proportion of dummy records processed may be as high as 100\%. If \system is used, then the percentage of records that satisfy the operator's condition, also known as the selectivity, plays a large role. When the selectivity of an operator is high, the proportion of dummy records is low and the benefit of \system is minimal. Alternatively, when the selectivity is low, the proportion of dummy records is high and \system provides a large improvement in efficiency. }


\vspace{-2mm}
\section{Related Work}\label{sec:related}
\vspace{-3mm}
\boldparagraph{Secure outsourced database and leakage abuse attacks.} There have been a series of efforts under the literature of secure outsourcing databases. Existing solutions utilize bucketization~\cite{hacigumucs2002executing, hore2004privacy, hore2012secure}, predicate encryption~\cite{shi2007multi,lu2012privacy},  property and order preserving encryption~\cite{agrawal2004order,bellare2007deterministic, boldyreva2009order, pandey2012property, boldyreva2011order, popa2012cryptdb, poddar2016arx}, symmetric searchable encryption (SSE)~\cite{curtmola2011searchable, stefanov2014practical, cash2014dynamic, kamara2012dynamic, kellaris2017accessing, kamara2018sql, kamara2019computationally, patel2019mitigating, ghareh2018new, amjad2019forward, kellaris2021accessing},  functional encryption~\cite{boneh2004public, shen2009predicate}, oblivious RAM~\cite{bater2017smcql, crooks2018obladi, demertzis2020seal, naveed2014dynamic, ishai2016private, zheng2017opaque},  multi-party secure computation (MPC)~\cite{bater2017smcql, bater2018shrinkwrap, bonawitz2017practical, tan2021cryptgpu}, trusted execution environments (TEE)~\cite{priebe2018enclavedb, eskandarian2017oblidb, vinayagamurthy2019stealthdb, xu2019hermetic} and homomorphic encryption~\cite{gentry2009fully, boneh2005evaluating, chowdhury2019crypt, samanthula2014privacy}. These designs differ in the types of supported queries and the provided security guarantees. Although the initial goal was to conceal the record values~\cite{hacigumucs2002executing, hore2004privacy, boneh2004public, shi2007multi, lu2012privacy, popa2012cryptdb, agrawal2004order,bellare2007deterministic, bater2017smcql, boldyreva2009order, pandey2012property, boldyreva2011order, popa2012cryptdb}, researchers soon discovered the shortcomings of this security assurance. Recent work has revealed that these methods may be subject to certain leakage through query patterns~\cite{zhang2016all, wang2018order}, access patterns~\cite{kellaris2016generic, dautrich2013compromising} and query response volume~\cite{kellaris2016generic, grubbs2019learning, grubbs2018pump, gui2019encrypted}, which makes them vulnerable to leakage-abuse attacks~\cite{cash2015leakage, blackstone2019revisiting}. Therefore, more recent works on secure outsourced databases not only consider concealing record values but also hiding associated leakages~\cite{stefanov2014practical, eskandarian2017oblidb, kellaris2017accessing, bater2018shrinkwrap, cash2014dynamic, kamara2018sql, kamara2019computationally, patel2019mitigating, ghareh2018new, amjad2019forward, demertzis2020seal, kellaris2021accessing, naveed2014dynamic, bater2017smcql, crooks2018obladi, ishai2016private, xu2019hermetic, zheng2017opaque}. Unfortunately, few of the aforementioned efforts consider the potential leakage when underlying data is dynamic~\cite{cash2014dynamic, kellaris2017accessing, ghareh2018new, amjad2019forward}. Wang et al.~\cite{wang2021dp} formalize a general leakage named {\it update pattern} that may affect many existing secure database schemes when outsourcing dynamic data. 

\vspace{-1mm}
\boldparagraph{Differentially-private leakage.} Existing studies on hiding database leakage with DP can be divided into two main categories: (i) safeguarding the query results from revealing sensitive information~\cite{cummings2018differential, chowdhury2019crypt, agarwal2019encrypted, lecuyer2019sage, luo2021privacy}, and (ii) obscuring side-channel leakages such as access pattern~\cite{bater2018shrinkwrap, mazloom2018secure, chen2018differentially, wagh2018differentially, shang2021obfuscated, kellaris2017accessing}, query volume~\cite{kellaris2021accessing, patel2019mitigating} and update patterns~\cite{wang2021dp}. The first category consists of works that enable DP query answering over securely provisioned (and potentially dynamic) data. Since these efforts typically focus solely on query outputs, side-channel leakages are not considered or assumed to be eliminable by existing techniques. Works in the second group focus on hiding side-channel information with DP, which is pertinent to our study. \re{Among those,~\cite{bater2018shrinkwrap} and~\cite{wang2021dp} are the two most relevant works to our study. ~\cite{bater2018shrinkwrap} extends the work of~\cite{bater2017smcql}, both of which use MPC as the main tool to architect secure outsourced databases. However,~\cite{bater2017smcql} fails to address some important leakages associated with intermediate computation results (i.e., the size of some intermediate outputs may leak sensitive information about the underlying data). Thus, ~\cite{bater2018shrinkwrap} is proposed to fill this gap. \cite{bater2018shrinkwrap} implements a similar resizing technique as \system that ensures the volume leakage per secure operator is bounded by differential privacy, however, their system is restrictively focused on processing static data. 
~\cite{wang2021dp} considers hiding update patterns when outsourcing growing data with private update strategies. However, they mandate that the update strategies must be enforced by trusted entities, while \system allows untrusted servers to privately synchronize the materialized view. Additionally,~\cite{wang2021dp} considers the standard mode that processes queries directly over outsourced data, which inevitably incurs additional performance overhead.
Interested readers may refer to Sections~\ref{sec:pre} and ~\ref{sec:dp-st}, where we provide more in-depth comparisons between \system and~\cite{bater2018shrinkwrap, wang2021dp}, and highlight our technical contributions.}
\vspace{-1mm}
\boldparagraph{Bounding privacy loss.} There is a series of work investigating approaches to constrain the privacy loss of queries or transformations with unbounded stability~\cite{mcsherry2009privacy, johnson2018towards, kotsogiannis2019architecting, kotsogiannis2019privatesql, wilson2020differentially, tao2020computing}. However these works are conducted under the scope of standard databases rather than secure outsourced databases. Moreover, most of the works consider to bound the privacy loss of a single query or one-time transformation~\cite{mcsherry2009privacy, johnson2018towards, wilson2020differentially, tao2020computing}. In this work, we consider constraining the privacy loss of a composed transformation, which may contain an infinite number of sub-transformations. 
\section{Conclusion}
In this paper, we have presented a framework \system for outsourcing growing data onto untrusted servers while retaining the query functionalities over the outsourced data. \system not only supports an efficient view-based query answering paradigm but also ensures bounded leakage in the maintenance of materialized view. This is achieved by (i) utilizing incremental MPC and differential privacy to architect the secure view update protocol and (ii) imposing constraints on record contributions to the transformation of materialized view instance.


\bibliographystyle{ACM-Reference-Format}
\bibliography{sample}

\appendix
\section{Additional Implementation Details}

\subsection{Truncated view transformation}
In addition to the oblivious sort-merge join discussed in Example~\ref{exmp:smj}. We continue to provide two additional instantiations of the truncated view transformation, namely the truncated selection and truncated nested loop join.

\subsubsection{Oblivious selection (filter)} Since each input record can only contribute to the output of the selection operator at most once. Therefore, it does not require us to have additional implementations to constraint the record contributions. To ensure obliviousness, the operator will return all input data as the output. However, only records that satisfy the selection predicate will have its {\it isView} bit set to 1. As a result, records that do not satisfy the selection predicate are treated as dummy tuples ({\it isView=0}).

\subsubsection{Truncated (oblivious) nested-loop join}. The truncated nested-loop join is similar to a normal nested-loop join, where the operator scans the first table (outer table), say $T_1$, and joins each of tuples in $T_1$ to the rows in the second table $T_2$ (inner table). However, additional operations are required to ensure obliviousness and the bounded record contribution. Algorithm~\ref{algo:nlj} illustrates the details of this truncated nested-loop join method.

\begin{algorithm}[]
\caption{$\trans$ protocol}
\begin{algorithmic}[1]
\Statex
\textbf{Input}: Two tables $T_1$ and $T_2$; Truncation bound $b$
\State $\mathsf{assign\_budget}(T_1 \cup T_2, b)$
\State ${\bf o} = \mathsf{init\_secure\_array}()$
\For{$tup^1_i \in T_1$}
\State ${\bf o}_i = \mathsf{init\_secure\_array}()$
\For{$tup^2_j \in T_2$}
\If{$\mathsf{budget}(tup^1_i) > 0$ {\bf and} $\mathsf{budget}(tup^2_j) > 0$}
\If{$tup^1_i.key == tup^2_j.key$}
\State ${\bf o}_i.\mathsf{append}(tup^1_i||tup^2_j || isView=1) $
\State $\mathsf{consume\_budget}(tup^1_i, tup^2_j, 1)$
\EndIf
\Else
\State ${\bf o}_i.\mathsf{append}(dummy)$
\EndIf
\EndFor
\State $\hat{{\bf o}_i} \gets \mathsf{Oblisort}({\bf o}_i)$
\State $\hat{{\bf o}_i} \gets \hat{{\bf o}_i}[0,1,2,...,b-1]$
\State ${\bf o}.\mathsf{append}(\hat{{\bf o}_i})$
\EndFor
\end{algorithmic}
\label{algo:nlj}
\end{algorithm}
Initially, the operator assign a contribution budget to each tuple in $T_1$ and $T_2$. This can be achieved by append a fixed point number (i.e. 32-bit) after each tuple. Then for each tuple $tup^1_i \in T_1$, the operator joins it with each tuple $tup^2_j \in T_2$. A join tuple is generated if and only if (i) both $tup^1_i$ and $tup^2_j$ have remaining budgets (Algo~\ref{algo:nlj}:6) and (ii) the two tuples share the same join key. Once a join tuple is generated, the operator consumes the budgets of both join tuples, subtracting their remaining budgets by one (Algo~\ref{algo:nlj}:9). Otherwise it generates a dummy tuple. Additionally, at the end of each inner loop (Algo~\ref{algo:nlj}:12-13), the operator obliviously sort the intermediate tuples ${\bf o}_i$ and picks only the first $b$ tuples stored in ${\bf o}_i$. Since the contribution bound is $b$, thus the total number of true joins  in ${\bf o}_i$ must not exceed $b$. And by applying these steps (Algo~\ref{algo:nlj}:12-13) could help to reduce the cache I/O burden. 

\subsection{Generating secret shares inside MPC}\label{sec:proof:ss}
We provide an implementation example on how to generate $k$-out-of-$k$ XOR secret shares inside an MPC protocol follow by a sketch proof for it's security.

We assume there are $k$ participating owners $P_1, P_2, ..., P_k$, and there is a secret value $c$ that is computed inside MPC. We assume the secret share scheme is over the ring $\mathbb{Z}_m$ and we denote the MPC protocol that computes $c$ internally as $\mathsf{protocol}$. 
\begin{enumerate}
    \item $\forall~P_i$, randomly samples $k-1$ values $(z_i^1, z_i^2, ..., z_i^{(k-1)})\xleftarrow[]{\text{rd}}\mathbb{Z}_{m}$, from ring $\mathbb{Z}_m$.
    \item Each $P_i$ inputs the sampled values to $\mathsf{protocol}$.
    \item $c$ is computed inside $\mathsf{protocol}$.
    \item Before revealing outputs, the $\mathsf{protocol}$ computes $\forall~z^j,~ z^j\gets z_1^j\oplus z_2^j...\oplus z_k^j$, internally.
    \item The $\mathsf{protocol}$ computes secret shares $(x_1, x_2, ..., x_k)$, such that $\forall~ j<(k-1),~x_j \gets z^j$, and $x_k \gets c\oplus z^1 \oplus z^2 ...\oplus z^{k-1}$.
    \item The $\mathsf{protocol}$ reveals one secret share $x_i$ to one participating owner.
\end{enumerate}
\boldparagraph{Availability}. For all shares $\share{x} \gets (x_1, x_2, ..., x_k)$, $$\textup{Pr}[\rec(\share{x}) = c] = \textup{Pr}[x_1 \oplus x_2...\oplus x_k = c]=1$$

\boldparagraph{Confidentiality}. We introduce the following lemma that defines the security of secret sharing with adversary. 
\begin{lemma}\label{mpcsec-shares}A $t$-out-of-$n$ secret sharing scheme $\langle\gs, \rec\rangle$ over ring $\mathbb{Z}_m$ is perfectly secure if for any adversary $\mathcal{A}$, $\forall~S\subseteq(1, 2,..., n)$ such that $|S| < t$, and for any two messages $m$, and $m'$, the following holds:
\begin{align*}
\textup{Pr}\left[\begin{array}{l}
\mathcal{A}(x_i~|~i\in S) = 1 ~:\\
(x_1, x_2, ...x_k)\gets \gs(m)
\end{array}\right] =  \textup{Pr}\left[\begin{array}{l}
\mathcal{A}(x'_i~|~i\in S) = 1 ~:\\
(x'_1, x'_2, ...x'_k)\gets \gs(m')
\end{array}\right]
\end{align*}
\end{lemma}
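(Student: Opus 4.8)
The plan is to establish the confidentiality condition of Lemma~\ref{mpcsec-shares} for the $k$-out-of-$k$ construction of steps~1--6 (so $t=n=k$) by proving a single distributional fact: the joint distribution of the shares held by \emph{any} coalition $S\subseteq\{1,\dots,k\}$ with $|S|<k$ is exactly the uniform distribution on $\mathbb{Z}_m^{|S|}$, and is \emph{independent of the shared value}. Once this is in hand the lemma is immediate. For any two messages $m$ and $m'$, the adversary $\mathcal{A}$ is, in both experiments, handed an input drawn from the very same uniform distribution; therefore $\textup{Pr}[\mathcal{A}(x_i\mid i\in S)=1]$ and its primed counterpart are literally equal, which is the claimed perfect-security statement with no error term.

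The technical heart is a one-time-pad argument. First I would fix an arbitrary $S$ with $|S|<k$ and record the structural form of $\gs$: $\share{x}\gets(x_1,\dots,x_{k-1},x_k)$ where $x_1,\dots,x_{k-1}$ are sampled uniformly and independently from $\mathbb{Z}_m$ and $x_k\gets x\oplus x_1\oplus\cdots\oplus x_{k-1}$. Since $|S|<k$ there is at least one index $j\notin S$. I would then split into two cases. If $k\notin S$, every observed share is one of the freshly sampled uniform values, so the marginal on $\{x_i:i\in S\}$ is immediately uniform and carries no dependence on $x$. If $k\in S$, then the hidden index $j$ lies in $\{1,\dots,k-1\}$; conditioning on all random draws other than $x_j$, the map $x_j\mapsto x_k$ is a bijection of $\mathbb{Z}_m$ (translation by a fixed element under $\oplus$), so $x_k$ is uniform and independent of the remaining observed shares and of $x$. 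In either case the observed tuple is uniform on $\mathbb{Z}_m^{|S|}$ and independent of the secret, as required.

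To connect this to shares that are generated \emph{inside the MPC}, I would additionally argue that each aggregated mask $z^j\gets z_1^j\oplus\cdots\oplus z_k^j$ is uniform from the coalition's viewpoint. Invoking the non-colluding/admissible-adversary assumption of Section~\ref{sec:model}, at least one honest party $P_i$ contributes a uniformly random, private $z_i^j$; since XOR-ing in an independent uniform value yields a uniform result, each $z^j$, and hence each internally produced $x_i$, is distributed exactly as an ideal fresh uniform draw, so the case analysis above applies verbatim. The only obstacle worth flagging is bookkeeping rather than conceptual: one must verify that the set of indices the coalition actually observes is consistent with the corruption model, so that strictly fewer than $k$ shares are revealed and the hidden index $j$ on which the whole argument hinges is guaranteed to exist. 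Since the Availability claim has already been dispatched, establishing this uniform, secret-independent marginal completes the proof of Lemma~\ref{mpcsec-shares}.
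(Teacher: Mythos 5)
Your proof is correct and takes essentially the same route as the paper's: both split on whether the message-dependent share $m\oplus z^1\oplus\cdots\oplus z^{k-1}$ lies in the coalition's observed set, and in the case where it does, both invoke an unobserved mask $z^j$ as a one-time pad that renders the observed shares independent of the message. Your write-up is merely more explicit than the paper's (deriving the uniform joint marginal on $\mathbb{Z}_m^{|S|}$ via the XOR-translation bijection, and spelling out why the honest party's contribution makes the in-MPC masks uniform), but the underlying argument is the same.
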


In our setting, we consider the adversary is able to obtain up to $k-1$ out of $k$ secret shares, and we prove the security by illustrating our implemented approach satisfies Lemma~\ref{mpcsec-shares}.

Let $X(m)$ to be the secret shares obtained by the adversary, $\mathcal{A}$, and let $X'(m)$ to be the secret shares such that $X'\subseteq\gs(m)$ and $X(m)\cap X'(m)=\emptyset$. Since $\mathcal{A}$ is able to control $k-1$ parties, thus we consider the following 2 cases: (i) if $m\oplus z^1...\oplus z^{k-1} \notin X(m)$, then $X(m)$ is independent from input message $m$, therefore $\mathcal{A}$ can not distinguish the shares for two different messages. (ii) if $m\oplus z^1...\oplus z^{k-1} \in X(m)$, since $X'(m)\neq \emptyset$, thus $\exists z^i,~z^i \notin X(m)$. Since $z^i$ uses randomness that  can not controlled by $\mathcal{A}$, thus $\textup{Pr}[\mathcal{A}(z^i)=1]=\textup{Pr}[\mathcal{A}(z'\xleftarrow{\mathsf{rd}}\mathbb{Z}_m)=1]$. Moreover, as $m\oplus z^1...\oplus z^{k-1}$ is masked by $z^i$, thus for any two messages, the adversary $\mathcal{A}$ can not distinguish between their secret shares. 
\section{Proof of Theorems}
\subsection{Proof of Theorem~\ref{tm:bcompose}}
\begin{proof}
Assume two neighboring databases $\mathcal{D}$, $\mathcal{D}'$, differ by one record $u$. Let ${\bf o} = \{o_i\}_{i\geq0}$, and ${\bf o}' = \{o'_i\}_{i\geq0}$ to be the output of $\mathcal{M}(\mathcal{D})$ and $\mathcal{M}(\mathcal{D}')$, respectively, where $o_i$ and $o'_i$ denotes $\mathcal{M}_i$'s output. We use $T(U_i)$ and $T(U'_i)$ to denote the corresponding input of $\mathcal{M}_i$ when $\mathcal{M}$ applies over $\mathcal{D}$ and $\mathcal{D}'$, respectively. We know that $ \epsilon =  \ln{\left(\frac{\textup{Pr}[\mathcal{M}(\mathcal{D}) = {\bf o}]}{\textup{Pr}[\mathcal{M}(\mathcal{D}') = {\bf o}]}\right)}  = \ln{\left(\prod_{i ~:~ \tau_i(u) > 0}\frac{\textup{Pr}[\mathcal{M}_i(T(U_i)) = o_i]}{\textup{Pr}[\mathcal{M}_i(T(U'_i)) = o_i]}\right)}$, therefore we obtain \smash{$\epsilon \leq \max_{u}{\ln{\left(\prod_{i ~:~ \tau_i(u) > 0} e^{q\epsilon_i}\right)}} \leq$ Eq.~\ref{eq:compose}}. 
\eat{
Now, we compute the following
\begin{equation}
    \begin{split}
        \epsilon & =  \ln{\left(\frac{\textup{Pr}[\mathcal{M}(\mathcal{D}) = {\bf o}]}{\textup{Pr}[\mathcal{M}(\mathcal{D}') = {\bf o}]}\right)}  = \ln{\left(\prod_{i\geq 0}\frac{\textup{Pr}[\mathcal{M}_i(T_i(\mathcal{D})) = o_i]}{\textup{Pr}[\mathcal{M}_i(T_i(\mathcal{D}')) = o_i]}\right)}\\
        & =   \ln{\left(\prod_{i ~:~ \tau_i(u) > 0}\frac{\textup{Pr}[\mathcal{M}_i(T(U_i)) = o_i]}{\textup{Pr}[\mathcal{M}_i(T(U'_i)) = o_i]}\right)}\\
        & \leq \max_{u}{\ln{\left(\prod_{i ~:~ \tau_i(u) > 0} e^{q\epsilon_i}\right)}} \leq \max_{u, \mathcal{D}}\left(\sum_{i ~:~ \tau_i(u) > 0} q_i\epsilon_i\right)\\
    \end{split}
\end{equation}}
\end{proof}
\subsection{Proof of Theorem~\ref{lg:timer},~\ref{perf:timer}}
\begin{lemma}\label{lemma:sum}
Given $k$ independent and identically distributed Laplace random variables, $Y_1, Y_2,...,Y_k$, where each $Y_i$ is sampled from the distribution $\textup{Lap}(\frac{\Delta}{\epsilon})$, where $\Delta$ denotes the sensitivity. Let $Y = \sum_i^k Y_i$, and $0 < \alpha \leq k\frac{\Delta}{\epsilon}$, the following inequality holds
$$\textup{Pr}\left[~ Y \geq \alpha \right] \leq e^{\left( \frac{-\alpha^2\epsilon^2}{4k\Delta^2} \right)}$$
\end{lemma}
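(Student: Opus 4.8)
The plan is to prove the bound by the standard Chernoff (exponential Markov) method applied to the sum of independent Laplace variables. First I would recall that a single $Y_i \sim \textup{Lap}(\frac{\Delta}{\epsilon})$ has moment generating function $\mathbb{E}[e^{tY_i}] = (1 - \frac{\Delta^2}{\epsilon^2}t^2)^{-1}$, which is finite precisely when $|t| < \frac{\epsilon}{\Delta}$. Since the $Y_i$ are independent and identically distributed, the MGF of $Y = \sum_{i=1}^k Y_i$ factorizes as $\mathbb{E}[e^{tY}] = (1 - \frac{\Delta^2}{\epsilon^2}t^2)^{-k}$ on the same range of $t$.

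Next, for any $t \in (0, \frac{\epsilon}{\Delta})$, Markov's inequality applied to $e^{tY}$ gives $\textup{Pr}[Y \ge \alpha] \le e^{-t\alpha}\,\mathbb{E}[e^{tY}] = \exp(-t\alpha - k\ln(1 - \frac{\Delta^2}{\epsilon^2}t^2))$. The idea is to take the Gaussian-optimal parameter $t = \frac{\alpha\epsilon^2}{2k\Delta^2}$, which balances the linear term $-t\alpha$ against the leading quadratic part of $-\ln(1 - \frac{\Delta^2}{\epsilon^2}t^2) \approx \frac{\Delta^2}{\epsilon^2}t^2$. Substituting this $t$ makes the linear term together with the leading quadratic term collapse to exactly $-\frac{\alpha^2\epsilon^2}{4k\Delta^2}$, leaving only a residual positive contribution from the higher-order terms of $-\ln(1-x)$ that must be shown not to spoil the bound.

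The role of the hypothesis $0 < \alpha \le k\frac{\Delta}{\epsilon}$ is twofold, and checking it is where the argument becomes delicate. It guarantees that the chosen $t$ satisfies $t = \frac{\alpha\epsilon^2}{2k\Delta^2} \le \frac{\epsilon}{2\Delta} < \frac{\epsilon}{\Delta}$, so the MGF is finite, and simultaneously that $x := \frac{\Delta^2}{\epsilon^2}t^2 \le \frac14$, placing us in the central (sub-Gaussian) regime where $-\ln(1-x)$ stays close to $x$. I would then bound the residual $-\ln(1-x) - x$ uniformly on $[0, \tfrac14]$ and verify it is absorbed so that the whole exponent is at most $-\frac{\alpha^2\epsilon^2}{4k\Delta^2}$, yielding the claim.

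The main obstacle is pinning down the constant exactly. The naive bound $\frac{1}{1-x} \le e^{2x}$ (valid for $x \le \tfrac12$) only delivers denominator $8k\Delta^2$ rather than the claimed $4k\Delta^2$; obtaining the factor $4$, which matches the variance $2k\Delta^2/\epsilon^2$ of $Y$ and is exactly what Theorem~\ref{lg:timer} needs, requires sharper control of the positive residual $-\ln(1-x) - x$ on the restricted interval forced by $\alpha \le k\frac{\Delta}{\epsilon}$. I expect this tightening, rather than the overall Chernoff scheme, to be the crux of the proof.
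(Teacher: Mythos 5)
Write $b=\Delta/\epsilon$. Your setup (the Laplace MGF, factorization over independent summands, exponential Markov) is correct, and you have located the crux accurately --- but the step you defer, absorbing the residual so that the constant $4$ survives, is not merely delicate: it is impossible. With your choice $t=\frac{\alpha}{2kb^2}$ and $x=b^2t^2$, the bound reads $\Pr[Y\ge\alpha]\le\exp\bigl(-t\alpha-k\ln(1-x)\bigr)$, and since $kx=\frac{\alpha^2}{4kb^2}$, reaching the target $\exp\bigl(-\frac{\alpha^2}{4kb^2}\bigr)$ would require $-\ln(1-x)\le x$; this fails for \emph{every} $x>0$ because $-\ln(1-x)=x+\frac{x^2}{2}+\cdots$. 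Your linear and leading quadratic terms already exhaust the budget exactly, so the strictly positive residual $-\ln(1-x)-x$ has nothing left to be absorbed into. Nor can a different $t$ rescue the argument: since $\ln(1-s)<-s$ on $(0,1)$, every admissible $t$ gives $t\alpha+k\ln(1-b^2t^2)<t\alpha-kb^2t^2\le\frac{\alpha^2}{4kb^2}$, so the \emph{optimal} Chernoff exponent lies strictly below the claimed one for every $\alpha>0$.

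The obstruction is in the statement, not in your method. At the endpoint $\alpha=kb$ the best achievable exponential rate is the Cram\'er rate of $\textup{Lap}(b)$ at $b$, namely $\sup_{0<u<1}\bigl(u+\ln(1-u^2)\bigr)=(\sqrt{2}-1)+\ln(2\sqrt{2}-2)\approx 0.226<\tfrac{1}{4}$, and the lower-bound half of Cram\'er's theorem then gives $\Pr[Y\ge kb]=e^{-k(0.226+o(1))}>e^{-k/4}$ once $k$ is large, so Lemma~\ref{lemma:sum} with denominator $4k\Delta^2$ is actually false in that regime. This also explains why you could not beat $8$: the paper itself offers no proof to compare against (it only points to Appendix C.1 of~\cite{wang2021dp} and to~\cite{dwork2014algorithmic}), and the measure-concentration lemma those sources establish is precisely your ``naive'' computation --- $\frac{1}{1-x}\le e^{2x}$ for $x\le\frac{1}{2}$ with $t=\frac{\alpha}{4kb^2}$ --- yielding $\Pr[Y\ge\alpha]\le\exp\bigl(-\frac{\alpha^2\epsilon^2}{8k\Delta^2}\bigr)$, valid for $0<\alpha\le 2\sqrt{2}\,kb$. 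That is the version you should prove (your first two paragraphs essentially already do), and the change propagates harmlessly: Corollary~\ref{col:sum} holds with $2\sqrt{2}$ in place of $2$ (under $k\ge\log\frac{1}{\beta}$), and the bounds in Theorems~\ref{lg:timer} and~\ref{perf:timer} change only by constant factors inside the $O(\cdot)$ notation.
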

\begin{proof}The complete proof of Lemma~\ref{lemma:sum} can be found in the Appendix C.1 of~\cite{wang2021dp} and in~\cite{dwork2014algorithmic}.
\end{proof}
\begin{corollary}\label{col:sum}
Given $k$ independent and identically distributed Laplace random variables, $Y_1, Y_2,...,Y_k$, where each $Y_i$ is sampled from the distribution $\textup{Lap}(\frac{\Delta}{\epsilon})$. Let $Y=\sum_{i=1}^k Y_i$, and $\beta \in (0,1)$, the following inequality holds
$$\textup{Pr}\left[~ Y \geq 2\frac{\Delta}{\epsilon}\sqrt{k\log{\frac{1}{\beta}}} ~\right] \leq \beta $$
\end{corollary}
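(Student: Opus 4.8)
The plan is to derive Corollary~\ref{col:sum} as an immediate instantiation of Lemma~\ref{lemma:sum}, choosing the threshold $\alpha$ so that the tail bound $e^{-\alpha^2\epsilon^2/(4k\Delta^2)}$ collapses exactly to $\beta$. Specifically, I would set
\[
\alpha = 2\frac{\Delta}{\epsilon}\sqrt{k\log\tfrac{1}{\beta}},
\]
which is precisely the quantity appearing in the corollary's statement, and then substitute this into the exponent supplied by Lemma~\ref{lemma:sum}.

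First I would compute $\alpha^2 = 4\frac{\Delta^2}{\epsilon^2}\,k\log\frac{1}{\beta}$, and plug this into the exponent of the lemma's bound. The $\epsilon^2$, the $\Delta^2$, the factor $4$, and the $k$ all cancel cleanly, leaving the exponent equal to $-\log\frac{1}{\beta}$. Hence the bound $e^{-\alpha^2\epsilon^2/(4k\Delta^2)}$ simplifies to $e^{-\log(1/\beta)} = e^{\log\beta} = \beta$, which is exactly the claimed right-hand side. This cancellation is the entire content of the calculation and is routine.

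The one step that genuinely requires care is verifying that the chosen $\alpha$ satisfies the hypothesis $0 < \alpha \le k\frac{\Delta}{\epsilon}$ under which Lemma~\ref{lemma:sum} is valid; without this the lemma does not apply. The positivity is immediate since $\beta \in (0,1)$ forces $\log\frac{1}{\beta} > 0$. For the upper constraint, $2\frac{\Delta}{\epsilon}\sqrt{k\log\frac{1}{\beta}} \le k\frac{\Delta}{\epsilon}$ is equivalent, after squaring and canceling $k$, to $4\log\frac{1}{\beta} \le k$. This is exactly the condition $k \ge 4\log\frac{1}{\beta}$ that is assumed throughout the surrounding results (Theorems~\ref{lg:timer} and~\ref{perf:timer}), so I would invoke that assumption to discharge the precondition. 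Thus the only subtlety, and the step I would flag as the ``main obstacle,'' is not any hard estimate but rather noting that the corollary implicitly inherits the regime $k \ge 4\log\frac{1}{\beta}$ from its intended usage; granting that, the result follows by direct substitution into Lemma~\ref{lemma:sum}.
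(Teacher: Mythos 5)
Your proposal is correct and is essentially the paper's own proof: the paper likewise obtains $\alpha = 2\frac{\Delta}{\epsilon}\sqrt{k\log\frac{1}{\beta}}$ by setting the tail bound of Lemma~\ref{lemma:sum} equal to $\beta$ and notes that the result holds when $k > 4\log\frac{1}{\beta}$. If anything, your write-up is slightly more careful, since you make explicit that this condition on $k$ is exactly what discharges the precondition $\alpha \le k\frac{\Delta}{\epsilon}$ of Lemma~\ref{lemma:sum}, a point the paper leaves implicit.
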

\begin{proof}
Continue with Lemma~\ref{lemma:sum}, let $e^{\frac{-\alpha^2\epsilon^2}{4k\Delta^2}} = {\beta}$, then $\alpha = 2\frac{\Delta}{\epsilon}\sqrt{k\log{\frac{1}{\beta}}}$, when $k > 4\log{\frac{1}{\beta}}$ the corollary holds.
\end{proof}
\begin{proof}(\textbf{Theorem~\ref{lg:timer}}). Let $\tilde{c}_k$ denotes the total number of records synchronized by $\sync$ protocol after $k$ times update (without cache flush), and let $c_k$ denotes the true cardinality of materialized view after $k$ updates (at time $kT$). Knowing that $\tilde{c}_k\gets c_k + \sum_{i=1}^{k} Y_i$, where each $Y_i$ is an i.i.d Laplace rnadom variable drawn from the distribution $\lap(\frac{b}{\epsilon})$, where $b$ is the contribution upper bound for each record. Given $\beta \in (0,1)$, according to Corollary~\ref{col:sum} we obtain that $\textup{Pr}\left[c_k - \tilde{c}_k \geq \alpha \right]\leq \beta$, such that $\alpha\gets\frac{2b}{\epsilon}\sqrt{k\log{\frac{1}{\beta}}}$. Knowing that $c_k - \tilde{c}_k$ computes the total number of records delayed after $k$-th updates, thus the theorem holds.
\end{proof}
\begin{proof}(\textbf{Theorem~\ref{perf:timer}}). Similarly, let $\tilde{c}_k$ denotes the total number of records synchronized by $\sync$ protocol after $k$ times update, and let $c_k$ denotes the true cardinality of materialized view after $k$ updates (at time $kT$). When considering cache flush, we compute $\tilde{c}_k \gets c_k + \sum_{i}^{k}Y_i + \sum_{i}^{k'} s $, where $k'$ denotes the number of cache flushes occurred since $t=0$ and $s$ is the cache flush size. 

Knowing that $\tilde{c}_k\gets c_k + \sum_{i=1}^{k} Y_i$, where each $Y_i$ is an i.i.d Laplace rnadom variable drawn from the distribution $\lap(\frac{b}{\epsilon})$, where $b$ is the contribution upper bound for each record. Given $\beta \in (0,1)$, according to Corollary~\ref{col:sum} we obtain that $\textup{Pr}\left[c_k - \tilde{c}_k \geq \alpha \right]\leq \beta$, such that $\alpha\gets\frac{2b}{\epsilon}\sqrt{k\log{\frac{1}{\beta}}}$. Knowing that $\tilde{c}_k - c_k$ computes the total number of records delayed after $k$-th updates, thus the theorem holds. Knowing that $k'\gets\lfloor\frac{kT}{f}\rfloor\leq \frac{kT}{f}$, and according to Corollary~\ref{col:sum}, $\sum_{i}^{k} Y_i$ is bounded by \smash{$O(\frac{2b\sqrt{k}}{\epsilon})$}, we conclude that the dummy data after $k$-th updates is bounded by $O(\frac{2b\sqrt{k}}{\epsilon} ) + \frac{skT}{f}$.
\end{proof}

\subsection{Proof of Theorem~\ref{one:ant}}
\begin{proof} 


Let $t$ denotes the current time, $c_i$ counts how many records received since last update at every time $i$. Assuming there $k$ updates happened before current time $t$, and thus we have $k$ noisy thresholds $\tilde{\theta}_1, \tilde{\theta}_2, ...\tilde{\theta}_k$.  Let $A =\{a_1, a_2, ..., a_t\}$ as the collection of $ant$'s outputs, where $a_i \in A$ is either $\perp$ (no updates) or equals to $c_j + \lap(\frac{2b}{\epsilon})$. According the {\it Fact 3.7} in~\cite{dwork2014algorithmic}, such that 
\begin{equation}
\begin{split}
 ~~\textup{Pr}\left[ \forall_{j}|\tilde{\theta}_j - \theta| \geq \frac{\alpha}{4} \right] = e^{-\frac{\epsilon\alpha}{16}}
\Rightarrow  ~~\textup{Pr}\left[ \sum_{j=1}^{k}|\tilde{\theta}_j - \theta| \geq \frac{\alpha}{4} \right] = k\times e^{-\frac{\epsilon\alpha}{16}}\\
\end{split}
\end{equation}
Let $k\times e^{-\frac{\epsilon\alpha}{16b}}$ to be at most $\beta/4$, then $\alpha \geq \frac{16b(\log{k} + \log{(4/\beta}))}{\epsilon}$.
Similarly, for each time $i$, we know that $\tilde{c}_i - c_i = \lap(\frac{8b}{\epsilon})$, where $\tilde{c}_i$ is the value used to compare with the noisy threshold, it satisfies:
\begin{equation}
\begin{split}
 & ~~\textup{Pr}\left[ \forall_{0 < i \leq t} |\tilde{c}_i - c_i| \geq \frac{\alpha}{2} \right] \leq e^{-\frac{\epsilon\alpha}{16}}\\
\Rightarrow & ~~\textup{Pr}\left[ \sum_{i=1~:~a_i\neq\bot}^{t} |\tilde{c}_i - c_i| \geq \frac{\alpha}{2} \right] \leq \sum_{j=1}^{k} (t^{j} - t^{j-1})\times e^{-\frac{\epsilon\alpha}{16}} \leq t\times e^{-\frac{\epsilon\alpha}{16}}
\end{split}
\end{equation}
where $t^{j}$ denotes the time stamp for $j^{th}$ update, and let  $t \times e^{-\frac{\epsilon\alpha}{16}}$ to be at most $\frac{\beta}{2}$, we have $\alpha \geq \frac{16(\log{t} + \log{(2/\beta}))}{\epsilon}$. Finally, we set the following conditions $\forall_{i~:~ a_i\neq\bot} |a_i - c_i = \lap(\frac{2b}{\epsilon})| \geq \alpha$ holds with probability at most $\beta/4$, we obtain $\alpha \geq \frac{4b\log{(4/\beta)}}{\epsilon}$. By combining the above analysis, we can obtain if set $\alpha \geq \frac{16b(\log{t} + \log{(2/\beta}))}{\epsilon}$ the following holds.
\begin{equation}\label{eq:final}
\begin{split}
 & ~~\textup{Pr}\left[ \left(\sum_{\forall_{i ~:~ a_i \neq \bot}}^{t} c_i - \sum_{\forall_{i ~:~ a_i \neq \bot}}^{t} a_i \right) \geq {\alpha} \right] \leq ~~\textup{Pr}\left[ \sum_{\forall i ~:~ a_i \neq \bot }^{t} |a_i - c_i|  \geq {\alpha}\right] \\
 \leq & ~~\textup{Pr}\left[ \left(\sum_{j=1}^{k}|\tilde{\theta}_j - \sum_{i=1~:~a_i\neq\bot}^{t} |\tilde{c}_i - c_i| + \sum_{\forall i~:~ a_i \neq \bot} |a_i - c_i|  \right) \geq {\alpha}\right] \leq \beta. 
\end{split}
\end{equation}
according to Eq.~\ref{eq:final},  with probability at most $\beta$, the number of deferred data, $\sum_{\forall_{i ~:~ a_i \neq \bot}}^{t} (c_i - a_i)$ is greater than  $\alpha \geq \frac{16b(\log{t} + \log{(2/\beta}))}{\epsilon}$. thus the total number of deferred data is bounded by $O(\frac{16b\log{t}}{\epsilon})$.

\end{proof}

\DeclareRobustCommand{\rchi}{{\mathpalette\irchi\relax}}
\newcommand{\irchi}[2]{\raisebox{\depth}{$#1\chi$}}

\eat{
\subsection{Additional Proof For Theorem~\ref{tm:dp-timer}}
We provide the additional proofs to show that the mechanism $\mathcal{M}_{\mathsf{timer}}$ satisfies $\frac{b}{\epsilon}$-DP. Let $\rchi'_t$ denotes all possible outputs for operator $\sigma_{t-T<t_{rid}}(\mathcal{D})$ (i.e. all possible new view entries). Let $U_x \in \rchi'$, and $U_y \in \rchi$, denotes two neighboring updates (differ by addition or removal of 1 logical update). We define $f = \sum_{\forall u_t \in U} 1 | u_t \neq \emptyset$, and:
\begin{equation}
    \begin{split}
        \Delta f = & \max_{\forall U_x,U_y\in\rchi' \wedge  ||U_x - U_y||_1 \leq 1}| f(U_x) - f(U_y)|
    \end{split}
\end{equation}

According to the definition, $f$ is a counting function that counts how many logical updates happened within a given $U$, and we can conclude that $\Delta f = 1$. Then, let $p_x$, $p_y$ denote the density function of $\mathcal{M}_{\mathsf{unit}}(U_x, \epsilon)$, and $\mathcal{M}_{\mathsf{unit}}(U_y, \epsilon)$, respectively. We compare the two terms under arbitrary point $z$:
\begin{equation}
\begin{split}
    \frac{p_x(z)}{p_y(z)} = & ~\frac{\frac{1}{2b}e^{\frac{-|f(U_x) - z|}{b}}}{\frac{1}{2b}e^{\frac{-|f(U_y) - z|}{b}}} = e^{\frac{|f(U_y) - z| - |f(U_x) - z|}{b}} \leq e^{\frac{|f(U_y) - f(U_x)|}{b}}
\end{split}
\end{equation}
Note that, we set $b = \frac{1}{\epsilon}$, and we know that $\Delta f = 1$, therefore.
\begin{equation}
\begin{split}
 e^{\frac{|f(U_y) - f(U_x)|}{b}} \leq e^{\frac{\Delta f}{\frac{1}{\epsilon}}} = e^{\epsilon} \rightarrow \frac{p_x(z)}{p_y(z)} \leq e^{\epsilon}.
\end{split}
\end{equation}
Note that the ratio $\frac{p_x(z)}{p_y(z)} \geq e^{-\epsilon}$ follows by symmetry. According the this, $\mathcal{M}_{\mathsf{unit}}$ satisfies $\epsilon$-DP.\\
}
\section{Security Proof}
In this section we continue to provide the complete formal security proof for \system framework. We first provide privacy proofs for mechanisms $\mathcal{M}_{\mathsf{timer}}$ and $\mathcal{M}_{\mathsf{ant}}$ provided in Theorem~\ref{tm:pf-timer} and~\ref{tm:pfant}.

\begin{theorem} $\mathcal{M}_{\mathsf{timer}}$ provided in Theorem~\ref{tm:pf-timer} satisfies $\epsilon$-DP.
\end{theorem}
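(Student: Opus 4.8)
The plan is to recognize $\mathcal{M}_{\mathsf{timer}}$ as a \emph{windowed noisy-count} mechanism and to reduce its analysis to that of a single Laplace release combined across disjoint time windows. First I would observe that $\mathcal{M}_{\mathsf{timer}}$ produces a sequence of outputs indexed by $t$: at every step with $t \equiv 0 \Mod{T}$ it releases $\texttt{count}(\sigma_{t-T < t_{tid}}(\mathcal{D})) + \lap(\tfrac{b}{\epsilon})$, and at every other step it deterministically releases $0$. The deterministic $0$-outputs carry no information about $\mathcal{D}$ and hence incur no privacy cost, so it suffices to bound the privacy loss of the sub-sequence of noisy counts taken at the update times $T, 2T, 3T, \dots$.

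The key structural fact is that the windows $(t-T,t]$ used by the successive filters $\sigma_{t-T<t_{tid}}$ are pairwise disjoint and partition the timeline, so each noisy count is computed over a record set disjoint from that of every other count. For a pair of neighboring growing databases $\mathcal{D},\mathcal{D}'$ in the sense of Definition~\ref{def:ngdb} — differing by the addition or removal of a single logical update after some time $\tau$ — that differing record has one fixed timestamp and therefore falls into exactly one window. Consequently exactly one count changes, while all other coordinates are identically distributed under $\mathcal{D}$ and $\mathcal{D}'$. I would then analyze the single affected window: treating the count as a sensitivity-$1$ query over logical updates, the Laplace release at scale $\tfrac{b}{\epsilon}$ is $\tfrac{\epsilon}{b}$-DP, and by parallel composition over the disjoint windows the whole release of $\mathcal{M}_{\mathsf{timer}}$ is $\tfrac{\epsilon}{b}$-DP. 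Finally, invoking Lemma~\ref{def:stable-privacy} with the $b$-stable truncating transformation $\trans$ (i.e. $q=b$, so each logical update contributes at most $b$ tuples to the transformed output) upgrades this to $b \cdot \tfrac{\epsilon}{b} = \epsilon$-DP with respect to the underlying logical database, which is the claimed bound. (Equivalently, one may count truncated view entries directly, where a single update has sensitivity $b$, and read off $\epsilon$-DP in one step from the $\lap(\tfrac{b}{\epsilon})$ guarantee.)

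The main obstacle, and the reason parallel composition is essential, is ruling out accumulation of privacy loss across the \emph{unbounded} horizon of update times: naively summing a per-release $\tfrac{\epsilon}{b}$ over infinitely many windows via sequential composition diverges. The crux is therefore the disjointness argument — showing that a single logical update influences the count of only one window — together with the contribution budget enforced by $\trans$, which keeps the per-update sensitivity bounded by $b$ even though the same physical tuple may be scanned at many time steps (the ``contribution over time'' issue flagged before Theorem~\ref{tm:bcompose}). I would make sure to state explicitly that the $0$-branches are data-independent and that the windows are measured on insertion timestamps $t_{tid}$, so that the single differing update cannot straddle two windows; the remaining steps are then routine applications of the Laplace mechanism and parallel/stable-transformation composition.
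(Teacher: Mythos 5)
Your proposal is correct and follows essentially the same route as the paper's own proof: both decompose $\mathcal{M}_{\mathsf{timer}}$ into per-window Laplace counting mechanisms over the disjoint intervals $(t-T,t]$, dismiss the deterministic zero outputs as data-independent, observe that a single differing logical update falls into exactly one window so the product of probability ratios collapses to a single factor (the paper's Eq. for $\ln(\Pr[\mathcal{M}(\mathcal{D})={\bf o}]/\Pr[\mathcal{M}(\mathcal{D}')={\bf o}])$, your parallel-composition step), and handle the factor $b$ exactly as the paper does via Lemma~\ref{def:stable-privacy} with the $b$-stable $\trans$ transformation. The only cosmetic difference is that the paper re-derives the single-window Laplace guarantee from the density ratio with noise calibrated to $\Delta f$, whereas you invoke the Laplace mechanism and parallel composition as standard facts; your parenthetical sensitivity-$b$ shortcut matches the appendix's calibration, and your main path matches the main-text argument of Theorem~\ref{tm:pf-timer}.
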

\begin{proof}
First we construct $\mathcal{M}_{\mathsf{unit}}(X, \epsilon)\gets f(X)+\lap(\frac{\Delta f}{\epsilon})$ where $f = \sum_{\forall x_i \in X} 1 | x_i \neq \emptyset$, and:
\begin{equation}
    \begin{split}
        \Delta f = & \max_{\forall U_x,U_y\in\rchi' \wedge  ||U_x - U_y||_1 \leq 1}| f(U_x) - f(U_y)|
    \end{split}
\end{equation} 

Let $\rchi'$ denotes all possible inputs, and let $U_x \in \rchi'$, and $U_y \in \rchi$, denote two neighboring inputs (differ by one tuple). Next, let $p_x$, $p_y$ denote the density functions of $\mathcal{M}_{\mathsf{unit}}(U_x, \epsilon)$, and $\mathcal{M}_{\mathsf{unit}}(U_y, \epsilon)$, respectively. We compare the two terms under arbitrary point $z$:
\begin{equation}
\begin{split}
    \frac{p_x(z)}{p_y(z)} = & ~\frac{\frac{1}{2b}e^{\frac{-|f(U_x) - z|}{b}}}{\frac{1}{2b}e^{\frac{-|f(U_y) - z|}{b}}} = e^{\frac{|f(U_y) - z| - |f(U_x) - z|}{b}} \leq e^{\frac{|f(U_y) - f(U_x)|}{b}}
\end{split}
\end{equation}

\eat{
\begin{equation}
\begin{split}
    \frac{p_x(z)}{p_y(z)} = & ~\Pi_{i=1}^{k}\left(\frac{e^{-\epsilon(\frac{| f(U_x) - z_i|}{\Delta f})}}{e^{-\epsilon(\frac{| f(U_y) - z_i|}{\Delta f})}}\right) \\
    = & ~\Pi_{i=1}^{k}e^{\left( \epsilon(\frac{|f(U_y) - z_i| - |f(U_x) - z_i|}{\Delta f})\right)}\\
    \leq & ~\Pi_{i=1}^{k}e^{\left( \epsilon(\frac{|f(U_y) - f(U_x)|}{\Delta f})\right)} = e^{\left( \epsilon(\frac{||f(U_y) - f(U_x)||_1}{\Delta f})\right)} \leq e^{\epsilon}
\end{split}
\end{equation}}
Now, we set $b = \frac{\Delta f}{\epsilon}$, and thus:
\begin{equation}
\begin{split}
 e^{\frac{|f(U_y) - f(U_x)|}{b}} \leq e^{\frac{\Delta f}{\frac{1}{\epsilon}}} = e^{\epsilon} \rightarrow \frac{p_x(z)}{p_y(z)} \leq e^{\epsilon}.
\end{split}
\end{equation}
Note that the ratio $\frac{p_x(z)}{p_y(z)} \geq e^{-\epsilon}$ follows by symmetry. As a conclusion, $\mathcal{M}_{\mathsf{unit}}$ satisfies $\epsilon$-DP. Then $\mathcal{M}_{\mathsf{timer}}$ can be formulated as the composition of  $\mathcal{M}_{\mathsf{unit}}$ as follows

$$\mathcal{M}_{\mathsf{timer}} (\mathcal{D}) = \begin{cases}    \mathcal{M}_{\mathsf{unit}}(\Delta_{t-T}^{t}\mathcal{D})  &  \text{if}~ 0\equiv t \Mod T ,\\    
        0               &  \text{otherwise}.\end{cases}$$
        
where $\Delta_{t-T}^{t}\mathcal{D} = \mathcal{D}_t - \mathcal{D}_{t-T}$ denotes the logical updates (records) received between time $(t-T, t]$. Since output a constant is independent of the input data, thus proving the privacy of $\mathcal{M}_{\mathsf{timer}} (\mathcal{D})$ is reduced to proving the privacy of the composed mechanism $\mathcal{M}(\mathcal{D}) = \{\mathcal{M}_{\mathsf{unit}}(\Delta_{iT}^{(i+1)T}\mathcal{D})\}_{i=1,2,3...}$

Now consider two neighboring databases $\mathcal{D}$ and $\mathcal{D}'$ that differs in one logical update $u_t$. Then we compute 
\begin{equation}\label{eq:compose-grow}
    \begin{split}
         &   \ln{\left(\frac{\textup{Pr}[\mathcal{M}(\mathcal{D}) = {\bf o}]}{\textup{Pr}[\mathcal{M}(\mathcal{D}') = {\bf o}]}\right)} \\ 
         & = \ln{\left(\prod_{i\geq 1}\frac{\textup{Pr}[\mathcal{M}_{\mathsf{unit}}(\Delta_{iT}^{(i+1)T}\mathcal{D}) = o_i]}{\textup{Pr}[\mathcal{M}_{\mathsf{unit}}(\Delta_{iT}^{(i+1)T}\mathcal{D}') = o_i]}\right)}\\
        & =   \ln{\left(\prod_{i : u_t \in \Delta_{iT}^{(i+1)T}\mathcal{D}}\frac{\textup{Pr}[\mathcal{M}_{\mathsf{unit}}(\Delta_{iT}^{(i+1)T}\mathcal{D}) = o_i]}{\textup{Pr}[\mathcal{M}_{\mathsf{unit}}(\Delta_{iT}^{(i+1)T}\mathcal{D}') = o_i]}\right)} \leq \epsilon\\
    \end{split}
\end{equation}
Therefore, $\mathcal{M}_{\mathsf{timer}}$ satisfies $\epsilon$-DP.
\end{proof}

\begin{theorem}
$\mathcal{M}_{\mathsf{ant}}$ provided in Theorem~\ref{tm:pfant} satisfies $\epsilon$-DP.
\end{theorem}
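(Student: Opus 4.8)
The plan is to recognize $\mathcal{M}_{\mathsf{ant}}$ as a \emph{restarting} Sparse Vector / AboveThreshold mechanism~\cite{dwork2014algorithmic} run over the growing database, and to reduce its privacy to that of a single inter-update segment plus the observation that one logical update perturbs only one segment. First I would cut the timeline at the update times $t_0^\ast < t_1^\ast < \cdots$ the mechanism itself produces; on segment $j$ the mechanism holds a fixed noisy threshold $\tilde\theta_j \gets \theta + \lap(\tfrac{4b}{\epsilon})$, emits $\perp$ at each step for which the noisy running count $\tilde c_t \gets c_t + \lap(\tfrac{8b}{\epsilon})$ stays below $\tilde\theta_j$, and at the first step where $\tilde c_t \geq \tilde\theta_j$ both emits a $\top$-event carrying the released value $c_t + \lap(\tfrac{4b}{\epsilon})$ and resets $c$ and $\tilde\theta$ with fresh randomness. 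I would treat each segment's mechanism as the sequential composition of a decision sub-mechanism $\hat{\mathcal{M}}_j$ (the string of $\perp$'s terminated by one $\top$) and an independent Laplace count-release, and I would account the count $c_t$ with sensitivity $b$, since the $b$-stability of the $\trans$ transformation (Lemma~\ref{def:stable-privacy}) means a single logical update changes any view-entry count by at most $b$.

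The technical core is the privacy of $\hat{\mathcal{M}}_j$, established by the standard threshold-shift argument. Fixing neighboring $\mathcal{D},\mathcal{D}'$ that differ in one logical update $u$ with $t_u$ in segment $j$, I would condition on the threshold noise taking value $\tilde\theta_j = x$ and write $\textup{Pr}[\hat{\mathcal{M}}_j(\mathcal{D}) = O]$ for $O = (\perp,\dots,\perp,\top)$ as $\int \textup{Pr}[\tilde\theta_j = x]\big(\prod_{i<m}\textup{Pr}[\tilde c_i < x]\big)\textup{Pr}[\tilde c_m \geq x]\,dx$. Shifting the dummy variable by the sensitivity $b$ realigns every below-threshold event between $\mathcal{D}$ and $\mathcal{D}'$ at the cost of the threshold noise $\lap(\tfrac{4b}{\epsilon})$, so the product of $\perp$-probabilities incurs no further loss, while the single $\top$-event is charged against the comparison noise $\lap(\tfrac{8b}{\epsilon})$; the crucial sparse-vector point is that the number of $\perp$ steps never enters the bound. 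This makes $\hat{\mathcal{M}}_j$ an $\epsilon_1$-DP AboveThreshold instance with $\epsilon_1 = \tfrac{\epsilon}{2}$, and sequential composition with the $\epsilon_2$-DP release ($\epsilon_2 = \tfrac{\epsilon}{2}$) keeps each segment's mechanism within the total budget $\epsilon$.

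Finally I would lift the single-segment bound to the full stream exactly as in Eq.~\ref{eq:compose-grow}. Because the counter resets at each update, $u$ is counted only within the unique segment $j^\star$ whose interval contains $t_u$ and is absent from the queries of every other segment; hence in the ratio $\textup{Pr}[\mathcal{M}_{\mathsf{ant}}(\mathcal{D}) = O]/\textup{Pr}[\mathcal{M}_{\mathsf{ant}}(\mathcal{D}') = O]$ every factor except the one for segment $j^\star$ equals $1$, and the product telescopes to the single $\leq e^{\epsilon}$ factor already established. By this disjoint-segment (parallel-composition) argument the overall $\mathcal{M}_{\mathsf{ant}}$ inherits the per-segment guarantee, yielding $\epsilon$-DP; equivalently one may prove the bare sparse-vector mechanism $\tfrac{\epsilon}{b}$-DP in the view-entry count and then apply Lemma~\ref{def:stable-privacy} with the $b$-stable $\trans$ to reach $\epsilon$-DP over $\mathcal{D}$, matching~\cite{wang2021dp}. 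I expect the main obstacle to be the shift step above: the conditioning on the threshold and the single $b$-unit shift must simultaneously absorb an unbounded number of $\perp$ comparisons and the one $\top$ release per segment without the budget scaling in the number of time steps — this is precisely the sparse-vector subtlety, and it is the restart structure (one $\top$, hence one count release, per segment) that guarantees the per-segment AboveThreshold analysis applies verbatim.
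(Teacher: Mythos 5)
Your proposal is correct and follows essentially the same route as the paper: the paper likewise isolates a single-run above-noisy-threshold mechanism (its NANT), proves it private via the identical conditioning-and-shift integral over the threshold noise (the $\perp$-product absorbed by the threshold shift, the $\top$ event charged to the comparison noise), composes sequentially with the $\epsilon_2$-Laplace count release, and then lifts to the full stream by viewing $\mathcal{M}_{\mathsf{ant}}$ as repeated runs over disjoint inter-update segments exactly as in Eq.~\ref{eq:compose-grow}, with the factor $b$ handled through the stability of $\trans$ (Lemma~\ref{def:stable-privacy}). There is no substantive difference in decomposition, key lemma, or accounting.
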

\begin{proof}
First we provide mechanism NANT ( Numeric Above Noisy Threshold) in Algorithm~\ref{algo:nant}, and proves it's privacy guarantee.

\begin{algorithm}[]
\caption{Numeric Above Noisy Threshold}
\begin{algorithmic}[1]
\Statex
\textbf{Input}: data stream $X$, privacy budget $\epsilon$, threshold $\theta$.
\State $\epsilon_1 \gets \frac{1}{2}\epsilon, \epsilon_2 \gets \frac{1}{2}\epsilon$
\State $ \tilde{\theta} \leftarrow \theta + \lap(\frac{2\Delta f}{\epsilon_1})$, $c \leftarrow 0$
\For{$t \leftarrow 1,2,...$} 
\State $v_t \leftarrow \lap(\frac{4\Delta f}{\epsilon_1})$
\State $c \gets f(X, t)$
\If{$c + v_t \geq \tilde{\theta}$}
\State \textbf{output} $\tilde{c}\gets c + \lap(\frac{2\Delta f}{\epsilon_2}) $, \textbf{return}
\Else 
\State \textbf{output} $0$
\EndIf
\EndFor
\end{algorithmic}
\label{algo:nant}
\end{algorithm}

where $ f(X, t) \gets \sum_{i\gets 1}^{t} 1$ $ ~|~(x_t \in X \wedge x_t\neq \emptyset)$. We now prove it's privacy as follows:

We start with a modified mechanism $\mathcal{M}_{\mathsf{unit}}$ of NANT such that where it outputs $\top$ once the condition $c + c_t >\tilde{\theta}$ is satisfied (Alg~\ref{algo:nant}:6),  and outputs $\bot$ for all other cases (Alg~\ref{algo:nant}:9). We write the output of $\mathcal{M}_{\mathsf{unit}}$ as $O = \{o_1, o_2,...,o_m \}$, where $\forall~ 1 \leq i < m$, $o_i = \bot$, and $o_m = \top$. Now given two neighboring database $X$ and $X'$, and for all $i$, $\textup{Pr}\left[\tilde{c}_i < x \right] \leq \textup{Pr}\left[\tilde{c}'_i < x+1 \right]$ is satisfied, where $\tilde{c}_i$ and $\tilde{c}'_i$ denotes the $i^{th}$ noisy count when applying  $\mathcal{M}_{\mathsf{unit}}$ over $X$ and $X'$ respectively, such that:
\begin{equation}
\begin{split}
  & \textup{Pr}\left[~ \mathcal{M'}_{\mathsf{sparse}}(U) = O \right] \\
  =  \int_{-\infty}^{\infty} & \textup{Pr}\left[\tilde{\theta} = x \right]\left( \prod_{1 \leq i < m}\textup{Pr}\left[\tilde{c}_i < x\right] \right) \textup{Pr}\left[\tilde{c}_m \geq x \right]dx\\
\leq \int_{-\infty}^{\infty} & e^{\epsilon/2}\textup{Pr}\left[\tilde{\theta} = x + 1 \right]\left( \prod_{1 \leq i < m}\textup{Pr}\left[\tilde{c}'_i < x + 1\right] \right) \\
  \times & e^{\epsilon/2} \textup{Pr}\left[ v_m + c'_m \geq x + 1 \right]dx\\
  = \int_{-\infty}^{\infty} & e^{\epsilon}\textup{Pr}\left[\tilde{\theta} = x + 1 \right]\left( \prod_{1 \leq i < m}\textup{Pr}\left[\tilde{c}'_i < x + 1\right] \right) \textup{Pr}\left[ \tilde{c}'_m \geq x + 1 \right]dx\\
  = e^{\epsilon}\textup{Pr} & [\mathcal{M'}_{\mathsf{sparse}}(U') = O]
\end{split}
\end{equation}
Thus, $\mathcal{M}_{\mathsf{unit}}$ satisfies $\epsilon$-DP. Moreover, mechanism NANT can be expressed as the composition of a $\mathcal{M}_{\mathsf{unit}}$ and a Laplace mechanism, each with a privacy parameter of $\frac{\epsilon}{2}$. Therefore by sequential composition, NANT satisfies $\epsilon$-DP. Similar, $\mathcal{M}_{\mathsf{ant}}$ can be treated as repeatedly running NANT over disjoint data, thus by Eq.~\ref{eq:compose-grow}. $\mathcal{M}_{\mathsf{ant}}$ satisfies $\epsilon$-DP.
\end{proof}

\begin{definition}[Secure 2-Party Computation~\cite{lindell2017simulate}]\label{def:s2pc} Let $f = (f1, f2)$ be a functionality and let, $\pi$ to be a 2 party protocol that computes $f$. We say that $\pi$ securely computes $f$ in the presence of semi-honest adversaries if there exists {\it p.p.t.} simulator $\mathcal{S}_1$ and $\mathcal{S}_2$:
\begin{equation}
    \begin{split}
        \{\mathcal{S}_1(x, f_1(x,y)), f(x,y)\} \myeq \{\vi_1^{\pi}(x,y), \mathsf{output}^{\pi}(x,y)\}\\
        \{\mathcal{S}_2(y, f_2(x,y)), f(x,y)\} \myeq \{\vi_2^{\pi}(x,y), \mathsf{output}^{\pi}(x,y)\}
    \end{split}
\end{equation}
where $\myeq$ means computational indistinguishable, $\vi^{\pi}$ and $\mathsf{output}^{\pi}$ denotes the views and outputs when evaluating protocol $\pi$. 
\end{definition}

\begin{theorem} If there exists secure 2-PC protocols that satisfy Definition~\ref{def:s2pc} and $(2,2)$-secret sharing scheme that satisfy Lemma~\ref{mpcsec-shares}, then \system implemented with $\timer$ and $\ant$ view update protocol satisfies $\epsilon$-SIM-CDP.
\end{theorem}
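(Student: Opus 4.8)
The plan is to assemble the final result from three ingredients that are already in place: the $\epsilon$-DP guarantees of the leakage mechanisms $\mathcal{M}_{\mathsf{timer}}$ and $\mathcal{M}_{\mathsf{ant}}$ established immediately above, the explicit simulator $\mathcal{S}$ of Table~\ref{tab:sim1} (together with its $\ant$ variant described in the proof of Theorem~\ref{tm:pfant}), and the two hypotheses of the theorem—security of the underlying 2PC (Definition~\ref{def:s2pc}) and of $(2,2)$-secret sharing (Lemma~\ref{mpcsec-shares}). Concretely, I would instantiate the mechanism $F$ demanded by Definition~\ref{def:protocol-sogdb} as $\mathcal{M}_{\mathsf{timer}}$ (resp. $\mathcal{M}_{\mathsf{ant}}$), whose $\epsilon$-DP has just been proved, and then show that the simulator $\mathcal{S}$, fed only $F(\mathcal{D})=\{(t,v_t)\}_{t\ge 0}$ and the public parameters $\pp$, produces a transcript computationally indistinguishable from the real protocol view $\vi^{\Pi}(\mathcal{D},\pp)$.

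First I would pin down exactly what the admissible adversary $\mathcal{A}$ observes. Since $\mathcal{A}$ corrupts any subset of owners and at most one server, its view at each time step consists of (i) the secret shares held by the corrupted server for the outsourced data, the cache $\boldsymbol{\sigma}$, the cardinality counter $c$, and, in the $\ant$ case, the noisy threshold $\tilde{\theta}$; and (ii) the transcript of the corrupted server during the 2PC executions of $\trans$ and $\sync$. These are precisely the objects that the simulator's random batches $B_1,B_2,B_3,B'$ and random field elements $x$ (and $y$ for $\ant$) are designed to emulate.

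The core of the argument is a hybrid over these observations. Starting from the real execution, I would (a) replace, one at a time, each secret share held by the corrupted server with a uniformly random element of $\mathbb{Z}_m$; because the honest server retains the complementary share, the corrupted party holds strictly fewer than the reconstruction threshold $t=2$, so by Lemma~\ref{mpcsec-shares} each such replacement is perfectly, hence computationally, indistinguishable. Then (b) replace each real 2PC transcript by the output of the corresponding simulator guaranteed by Definition~\ref{def:s2pc}, each swap costing at most a negligible distinguishing advantage. Since any p.p.t.\ $\mathcal{A}$ observes only polynomially many time steps, and thus only polynomially many shares and protocol invocations, the telescoping sum of these per-step advantages collapses to a single negligible function $\negl(\kappa)$. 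After all replacements the resulting distribution is exactly $\vi^{\mathcal{S}}(F(\mathcal{D}),\pp)$, since $\mathcal{S}$ samples shares uniformly and resamples $B_3$ of size $v_t$ from its internal store, matching the $isView$-sorted cache read driven by the DP-resized cardinality. This yields the inequality of Definition~\ref{def:protocol-sogdb} for both protocols, establishing $\epsilon$-SIM-CDP; the UC formulation then gives closure under composition with the query protocol.

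The main obstacle I anticipate is making the hybrid clean over the unbounded, growing time horizon while correctly handling the joint noise generation. For the former, I must argue that the number of hybrids stays polynomial in $\kappa$ within any p.p.t.\ adversary's lifetime, so the accumulated error remains negligible. For the latter, I must verify that the simulator's reliance on $v_t$ alone faithfully reproduces the distribution of $\texttt{sz}$, which in the real protocol is computed inside MPC from $z_0\oplus z_1$, the XOR of both servers' contributions: because the honest server's $z_i$ is uniform and hidden under the non-colluding assumption, the adversary can neither predict nor bias the injected Laplace noise, so the only quantity leaking beyond the shares and transcripts is the noisy count $v_t$ that $F$ already exposes. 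Making this reduction airtight—that the corrupted server's freely chosen randomness grants it nothing beyond $F(\mathcal{D})$—is the delicate step linking the two cryptographic hypotheses to the differentially private leakage profile.
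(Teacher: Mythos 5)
Your proposal is correct and follows essentially the same route as the paper's own proof: it instantiates the leakage profile $F$ of Definition~\ref{def:protocol-sogdb} with $\mathcal{M}_{\mathsf{timer}}$ (resp.\ $\mathcal{M}_{\mathsf{ant}}$), decomposes the corrupted server's view into secret shares (uploaded data, cached/transformed tuples, counter, noisy threshold) plus the 2PC transcripts, and discharges each component against the simulator of Table~\ref{tab:sim1} using Lemma~\ref{mpcsec-shares} for the shares and Definition~\ref{def:s2pc} for the transcripts. Your explicit hybrid/telescoping accounting over polynomially many time steps and your treatment of the joint noise generation are somewhat more careful than the paper's component-wise indistinguishability assertions, but they formalize the same argument rather than take a different path.
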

\begin{proof}
In this section, we focus on proving the simulator provided in Table~\ref{tab:sim1} yields computationally indistinguishable outputs compared to the execution of the real view update protocols. Let $f_{t} (x, y)$ to be the functionality of truncated view transformation and  $f_{s} (x, y)$ to be the functionality of synchronizing data from secure cache to materialized view. $\pi_t$ and $\pi_s$ are the protocols that securely computes these two functionalities ($\trans$ and $\sync$). In general, we assume the secure cache and the materialized view are the secret-shared objects across the 2-PC participants. 

Therefore, at each time $t$, the adversary's view against the entire view update protocol can be formulated as:
$$\vi_j^{\pi}(x_j, x_{1-j}, t) = \{\gamma^{u}_j, \gamma^{v}_j, \gamma^{s}_j, \gamma^{f}_j, c_j, \theta_j\}$$
where $\gamma^{u}_j, \gamma^{v}_j, \gamma^{s}_j, \gamma^{f}_j, c_j$, and $\theta_j$ denotes the corresponding secret shares party $j$ obtains for user uploaded data, transformed view tuples, synchronized data, flushed data, cardinality counter and the noisy threshold (this parameter is not included in the view of $\timer$ protocol). Since $\gamma^u_j$ is the secret shared data generated by owners, thus by Lemma~\ref{mpcsec-shares},
$$\gamma^u_j \myeq B \xleftarrow[]{\text{rd}}\mathbb{Z}_{m}~\textbf{if}~~|\gamma^u_j|=|B|$$
Since $\gamma^{v}_j$ is computed from the secure 2PC protocol $\pi$, thus by Theorem~\ref{def:s2pc}, there must exists simulator such that $\mathcal{S}(x_j, f_j(x_j, x_{1-j}))\myeq \gamma^{v}_j$. In addition, since $\gamma^{v}_j$ is assumed to be secret-shared data, thus 
$$\exists \mathcal{S}, ~s.t.~ \mathcal{S}(B \xleftarrow[]{\text{rd}}\mathbb{Z}_{m}) \myeq \gamma^{v}_j~\textbf{if}~~|\gamma^v_j|=|B|$$
 Similarly, we can also obtain 
 $$\exists \mathcal{S}, ~s.t.~ \mathcal{S}(a, b \xleftarrow[]{\text{rd}}\mathbb{Z}_{m}) \myeq c_j, \theta_j$$
 Finally, since $f_2$ obliviously sorts then fetches from the recovered input data, thus the output of $f_2$ should be computational indistinguishable from the random sampling over the recovered input data. This also applies to cache flush, and therefore we can obtain
  $$\exists \mathcal{S}, ~s.t.~ \mathcal{S}(B, B' \gets \mathsf{rd\_sample}(x)) \myeq \gamma^{s}_j, \gamma^{f}_j~\textbf{if}~~|B|=|\gamma^{s}_j|\wedge|B'|=|\gamma^{f}_j|$$
  As per the aforementioned analysis, the simulator provided in Table~\ref{tab:sim1} yields computational indistinguishable transcripts in comparison with the real protocol execution. Since the simulator only takes in the outputs of differentially-private mechanisms and public parameters, thus the view update protocol satisfies $\epsilon$-SIM-CDP.
\end{proof}
\section{Extension continued}
\subsection{Connecting with DP-Sync}
We continue to provide more details regarding the utility guarantees when combining DP-Sync and \system. 
\begin{theorem}[Logical gap~\cite{wang2021dp}]  For each time $t$, the logical gap, $LG_t$ between the outsourced and logical database is defined as the total number of records that have been received by the owner but have not been outsourced to the server. 
\end{theorem}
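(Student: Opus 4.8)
The plan is to recognize at the outset that this statement, although typeset as a theorem, is purely \emph{definitional}: it introduces the quantity $LG_t$ rather than asserting any nontrivial property, so the only genuine obligation is to confirm that $LG_t$ is well-defined — an unambiguous nonnegative integer at every time $t$ — and that the formal object matches the verbal description, all within the insertion-only growing-database model of Section~\ref{sec:notations}. Accordingly, I would not attempt a derivation in the usual sense; I would instead verify well-definedness and consistency.

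First I would make the two implicit counts precise in the paper's notation. Let the number of logical records the owner has received by time $t$ be the cardinality of the logical instance $|\mathcal{D}_t|$, and let the number of those real records already reflected on the server by time $t$ be the count of non-dummy entries in the outsourced data, which I denote $r(\mathcal{DS}_t)$. I would then \emph{define}
\begin{equation}
LG_t := |\mathcal{D}_t| - r(\mathcal{DS}_t),
\end{equation}
and argue that this expression is literally the verbal ``total number of records received by the owner but not yet outsourced to the server.''

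Second I would discharge the three sanity checks that together constitute well-definedness. (i) Nonnegativity: since the model is insertion-only and causal, the owner can outsource only records it has already received, hence $r(\mathcal{DS}_t) \le |\mathcal{D}_t|$ and thus $LG_t \ge 0$. (ii) Integrality and finiteness: both terms are cardinalities of finite sets, so $LG_t \in \mathbb{Z}_{\ge 0}$. (iii) Consistency as a lag measure: $LG_t = 0$ precisely when every received logical record has been outsourced, $LG_t$ increases by one on an un-outsourced arrival, and it decreases toward $0$ whenever a synchronization occurs — exactly the behavior a ``logical gap'' should exhibit, which is what makes it usable in the downstream utility analysis.

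The only point requiring care — rather than a genuine obstacle — is pinning down \emph{which} entries count as ``outsourced'' in the secret-shared, exhaustively padded setting: one must let $r(\cdot)$ count only the real logical records captured in $\mathcal{DS}_t$, ignoring dummy tuples and cache artifacts, so that $LG_t$ tracks the genuine logical lag and not the storage overhead. Once $r(\cdot)$ is fixed in this way, there is no combinatorial or probabilistic content remaining, and the definition stands as stated, identical to the quantity of the same name introduced in~\cite{wang2021dp}.
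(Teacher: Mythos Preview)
Your recognition is correct: the statement is purely definitional, recalling the notion of logical gap from~\cite{wang2021dp}, and the paper accordingly provides no proof for it at all. Your well-definedness checks are a reasonable elaboration, but nothing beyond acknowledging the definitional nature is required.
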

In~\cite{wang2021dp}, logical gap is used as the major utility metric and typically a large logical gap indicates a relatively large error for queries to the outsourced database. Similar we derive a logical gap at time $t$ for the materialized view as $LG^{\mathcal{V}}_t$, which denotes the number of view tuples delayed by the respective mechanisms (record synchronization strategy and view update protocol).
\begin{theorem}[($\alpha,\beta$)-accurate sync strategy] A record synchronization strategy $r\_sync$ over growing data $\mathcal{D}$ is ($\alpha,\beta$)-accurate if there exists $\alpha > 0$, and $0<\beta<1$, such that the logical gap when outsourcing $\mathcal{D}$ with $r\_sync$ satisfies, $\forall t$
$$\textup{Pr}\left[ LG_t > \alpha \right] < \beta$$
\end{theorem}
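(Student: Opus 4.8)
The plan is to read this statement as \emph{characterizing} $(\alpha,\beta)$-accuracy through a high-probability tail bound on the logical gap $LG_t$, and to establish that the canonical private synchronization strategies inherited from DP-Sync (the timer-based and above-noisy-threshold rules) meet it with an explicit $\alpha$. First I would fix such a strategy and make the structure of $LG_t$ explicit: at each release the strategy outsources a count obtained from the true number of received-but-unsynchronized records perturbed by a fresh draw $\lap(\frac{\Delta}{\epsilon})$, so the records that remain deferred after a sequence of releases are governed precisely by the accumulated perturbations. Hence $LG_t$ can be written as a partial sum of independent $\lap(\frac{\Delta}{\epsilon})$ variables (one per release up to time $t$), up to any mass recycled by flushing. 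Here the sensitivity is $\Delta=1$ since each logical update is a single record.

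Second, with $LG_t$ expressed as a sum of i.i.d.\ Laplace variables, I would invoke the concentration machinery already available in the paper rather than redo it. By Corollary~\ref{col:sum}, the sum of $k$ such variables exceeds $\frac{2\Delta}{\epsilon}\sqrt{k\log\frac{1}{\beta}}$ with probability at most $\beta$, so taking $\alpha = \frac{2\Delta}{\epsilon}\sqrt{k\log\frac{1}{\beta}}$, where $k$ is the number of releases preceding time $t$, immediately yields $\textup{Pr}[LG_t > \alpha] < \beta$, the defining inequality. This is exactly the deferred-data estimate from Theorem~\ref{lg:timer} transplanted to the owner side. For the above-threshold rule I would instead reuse the sparse-vector analysis behind Theorem~\ref{one:ant}, which upgrades the per-step accumulation to an $O(\frac{\Delta\log t}{\epsilon})$ bound, so that $\alpha$ grows only logarithmically in $t$.

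The main obstacle I anticipate is the \emph{uniformity in $t$}: the statement asks for a single $\alpha$ with $\textup{Pr}[LG_t>\alpha]<\beta$ for all $t$, whereas the naive timer analysis gives $\alpha \sim \sqrt{k}\sim\sqrt{t}$, which is unbounded. I would resolve this exactly as the body of the paper does for the view update protocol, by coupling the strategy with the periodic cache-flush mechanism (or by appealing to the $O(\log t)$ bound of the threshold-based rule) so that the accumulated deferred mass is reset or controlled and a fixed $\alpha$, chosen as the resulting bound at the flush horizon, suffices. A secondary subtlety is that for the threshold strategy the number of releases $k$ up to time $t$ is itself data-dependent; I would handle this by bounding $k \le t$ and union-bounding over the realized release times, mirroring the double-tail argument (over the noisy thresholds and over the noisy counts) used in the proof of Theorem~\ref{one:ant}.
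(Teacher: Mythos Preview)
The fundamental issue is that this ``theorem'' is not a theorem at all: it is a \emph{definition}. Read it again --- it says a strategy \emph{is called} $(\alpha,\beta)$-accurate \emph{if} the tail bound holds. There is no claim being asserted about any particular strategy, and hence nothing to prove. The paper provides no proof for this statement, and the surrounding context confirms this: the preceding ``theorem'' (logical gap) is likewise a definition mislabeled with a \texttt{theorem} environment, and the subsequent Theorem~\ref{thm:logicalgap} simply \emph{uses} $(\alpha,\beta)$-accuracy as a hypothesis on $r\_sync$.

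Your proposal is a perfectly reasonable sketch of how one would \emph{verify} that the DP-Sync timer and above-threshold strategies instantiate this definition with concrete $\alpha$'s --- and indeed your argument essentially reproduces Theorems~\ref{lg:timer} and~\ref{one:ant} on the owner side. But that is a different (and already-proved) statement. For the item actually in question, the correct response is simply to observe that it is definitional and requires no proof.
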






\begin{theorem}\label{thm:logicalgap}
Applying \system over the outsourced data uploaded by an $(\alpha, \beta)$-accurate private synchronization strategy $r\_sync$, results in error bounds \smash{$O(b\alpha+\frac{2b}{\epsilon}\sqrt{k})$} and \smash{$O(b\alpha+\frac{16b\log{t}}{\epsilon})$}, respectively for $\timer$ and $\ant$ protocol.
\end{theorem}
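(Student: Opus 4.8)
The plan is to bound the query error of the combined system by the logical gap on the materialized view, $LG^{\mathcal{V}}_t$, and to decompose this gap into two additive contributions corresponding to the two mechanisms in play. Concretely, a view tuple can be absent from $\mathcal{V}_t$ for exactly one of two reasons: (i) the logical record from which it would be derived has been received by the owner but not yet outsourced to the server under $r\_sync$, so $\trans$ has never been invoked on it; or (ii) the tuple has already been produced and cached by $\trans$ but not yet flushed to $\mathcal{V}_t$ by the view update protocol ($\timer$ or $\ant$). I would therefore establish the pointwise bound $LG^{\mathcal{V}}_t \le b\cdot LG_t + c_d$, where $LG_t$ is the owner-side logical gap and $c_d$ is the amount of deferred (true) data held in the secure cache.

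For the first term, I would invoke the bounded-contribution property: by the truncation inside $\trans$ together with the lifetime contribution budget, the composed transformation is $q$-stable with $q=b$ (the same fact underlying Lemma~\ref{def:stable-privacy} and Theorem~\ref{tm:bcompose}). Hence each logical record that has not yet been outsourced accounts for at most $b$ missing view tuples, and the $(\alpha,\beta)$-accuracy of $r\_sync$ gives $\textup{Pr}[LG_t > \alpha] < \beta$, so $b\cdot LG_t \le b\alpha$ except with probability at most $\beta$.

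For the second term, I would directly reuse the cache-deferral bounds already proved: Theorem~\ref{lg:timer} gives $c_d = O(\frac{2b}{\epsilon}\sqrt{k})$ (with failure probability $\beta$) under $\timer$, and Theorem~\ref{one:ant} gives $c_d = O(\frac{16b\log t}{\epsilon})$ under $\ant$. A union bound over the two failure events then yields, with high probability, $LG^{\mathcal{V}}_t = O(b\alpha + \frac{2b}{\epsilon}\sqrt{k})$ for $\timer$ and $LG^{\mathcal{V}}_t = O(b\alpha + \frac{16b\log t}{\epsilon})$ for $\ant$, which are exactly the claimed error bounds once the logical gap is translated into L1 query error.

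The main obstacle I anticipate is justifying the additive decomposition rigorously rather than a multiplicative one. Specifically, one must argue that records delayed by $r\_sync$ merely postpone the start of their (at most $b$) contributions, and do not alter the cache dynamics that Theorems~\ref{lg:timer} and~\ref{one:ant} analyze, so that the owner-side delay $b\alpha$ and the server-side synchronization delay $c_d$ genuinely add rather than interact. I would make this precise by noting that $\trans$ and $\sync$ operate only on already-outsourced data, so the deferred-cache analysis is unaffected by how long records waited on the owner side. Once this independence is confirmed, composing the two $\beta$-failure events and absorbing constant and logarithmic factors into the $O(\cdot)$ is routine.
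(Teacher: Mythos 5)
Your proposal is correct and follows essentially the same route as the paper's proof: the pointwise decomposition $LG^{\mathcal{V}}_t \le b\cdot LG_t + \xi_k$ (your $c_d$), the $b$-factor from bounded contribution, the invocation of Theorems~\ref{lg:timer} and~\ref{one:ant} for the deferred-cache term, and a union bound over the two failure events. Your extra justification of why the two delays add rather than interact (because $\trans$ and $\sync$ act only on already-outsourced data) is a point the paper states only as a closing remark, so nothing is missing.
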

\begin{proof} We provide the proof of the error bound under $\timer$ protocol and the bound under $\ant$ can be proved using the same technique. Let $\beta_1 \in (0,1)$, and let $\xi_k$ denotes the total number of cached view tuples that are delayed for synchronization after $k^{th}$ view update. According to Theorem~\ref{lg:timer}, $\textup{Pr}\left[\xi_k > \frac{2b}{\epsilon}\sqrt{k\log{\frac{1}{\beta_1}}}\right]<\beta_1$, where $b$ is contribution bound and $\epsilon$ is privacy parameter. Let $r\_sync$ is an $(\alpha_2, \beta_2)$-accurate sync strategy, then the following holds
\begin{equation}
    \begin{split}
        & \textup{Pr}\left[LG_t^{\mathcal{V}} \geq \alpha \right] \leq \textup{Pr}\left[b\times LG_t + \xi_k \geq \alpha \right]\\
    \end{split}
\end{equation}
By union bound, $\textup{Pr}\left[b\times LG_t + \xi_k \geq b\alpha_2 + \frac{2b}{\epsilon}\sqrt{k\log{\frac{1}{\beta_1}}} \right] \leq \beta_1 + \beta_2$, thus we obtain $\textup{Pr}\left[LG_t^{\mathcal{V}}\geq b\alpha_2 + \frac{2b}{\epsilon}\sqrt{k\log{\frac{1}{\beta_1}}} \right] \leq \beta_1 + \beta_2$. This indicates the error is bounded by $O(b\alpha_2 + \frac{2b}{\epsilon}\sqrt{k})$. The same proof technique can be used to prove the error bound of $\ant$.
\end{proof}
In general, as the logical gap of materialized view is resulted by (i) the total data delayed by $r\_sync$ and (ii) the total view entries delayed by $\sync$ protocol. Thus the logical gaps of the two mechanisms are  additive. 

\subsection{Connecting with DP-Sync}
To further define the effect of dummy records on overall computation cost, we introduce two efficiency metrics.

\begin{definition}[Filter Efficiency] Given a Filter operator $O$ with input $O_{1}$ of size $n_{1}$, let the number of dummy records in $O_{1}$ be $Y_{1}(\epsilon_{1})$, where $\epsilon_{1}$ is defined in privacy budget allocation $P$ = {$\epsilon_{1},\ldots,\epsilon_l$}. The efficiency of $O$ is defined as:
$$E(P) = 1 - (Y_1(\epsilon_1)/n_1)$$
\end{definition}

\begin{definition}[Join Efficiency] Given a Join operator $O$ whose inputs $O_1$ and $O_2$ are of size $n_1$ and $n_2$, respectively. Let the number of dummy records in $O_1$ and $O_2$ be $Y_1(\epsilon_1)$ and $Y_2(\epsilon_2)$, respectively, where $\epsilon_1$ and $\epsilon_2$ are defined in privacy budget allocation $P$ = {$\epsilon_1,\ldots,\epsilon_l$}. The efficiency of $O$ is defined as:
$$E(P) = 1 - (Y_1(\epsilon_1) + Y_2(\epsilon_2))/(n_1 + n_2)$$
\end{definition}

The total efficiency of a given query $Q$ is defined as:

\begin{definition}[Query Efficiency] Given a query $Q$ comprised of operators {$O_1,\ldots,O_l$} with efficiencies {$E_1,\ldots,E_l$} and operator output cardinalities $|O_i|$, respectively. The efficiency of $Q$ with a privacy budget allocation $P$ and total output size $|O_{total}|$ is defined as:
$$E_Q(P) = \sum_{i=1}^l \frac{|O_i|}{|O_{total}|} E_i(P)$$
\end{definition}

Given a maximum privacy budget $\epsilon$ and a maximum logical gap $LG$, we can now define our optimization problem as follows:
\begin{eqnarray}
& \max_{P} E_Q(P) ~~s.t. & \sum_{i=1}^{l} \epsilon_i \leq \epsilon,  \sum_{i=1}^{l} LG_i \leq LG_{total}, \nonumber \\
&&\epsilon_i  \geq 0~~ \forall i=1,\ldots,l
\end{eqnarray}

Note that in order to obtain the optimal privacy budget allocation, we require the true number of dummy records $d_1$ and $d_2$ in the inputs to each operator $O$. However, revealing this information compromises our privacy guarantee. Instead, we can utilize estimates of $d_1$ and $d_2$ learned from the DP volume information released by our materialized joins, as seen in Figure~\ref{fig:cache}.  

\end{document}